\documentclass[journal,10pt]{IEEEtran}
\usepackage{booktabs}
\usepackage{multirow}
\usepackage{makecell}
\usepackage{siunitx}

\newcommand{\dataset}[3]{%
  \multirow{4}{*}{%
    \makecell[l]{\texttt{#1}\\$(#2,#3)$}%
  }%
}
\usepackage{amsfonts,amssymb,mathtools}
\usepackage{amsthm}
\usepackage{bm}
\usepackage{booktabs,array}
\usepackage{algorithm}
\usepackage{algpseudocode}
\usepackage[shortlabels]{enumitem}
\usepackage[colorlinks=true,allcolors=blue]{hyperref}
\usepackage[nameinlink,noabbrev]{cleveref}
\usepackage{xcolor}
\usepackage{bm}
\usepackage{graphicx}
\usepackage[caption=false,font=footnotesize]{subfig}
\usepackage{cite}
\usepackage{amsmath,amssymb}
\usepackage{mathtools}

\theoremstyle{plain}
\newtheorem{theorem}{Theorem}
\newtheorem{lemma}{Lemma}
\newtheorem{proposition}{Proposition}

\theoremstyle{definition}
\newtheorem{assumption}{Assumption}

\theoremstyle{remark}
\newtheorem{remark}{Remark}

\DeclareMathOperator{\prox}{prox}
\DeclareMathOperator{\sign}{sign}

\DeclareMathOperator{\proj}{proj}

\newcommand{\norm}[1]{\left\lVert #1\right\rVert}
\newcommand{\abs}[1]{\left\lvert #1\right\rvert}
\newcommand{\set}[1]{\left\{#1\right\}}

\newcommand{\paren}[1]{\left(#1\right)}

\title{Joint Structure Identification and Newton Acceleration via Proximal Line Search for Nonconvex Optimization}
\author{Yiming~Zhou and Wei~Dai%
\thanks{Y. Zhou and W. Dai are with the Department of Electrical and
Electronic Engineering, Imperial College London, London, UK 
(e-mail: yiming.zhou18@imperial.ac.uk;
wei.dai1@imperial.ac.uk).}%
}

\begin{document}
\maketitle

\begin{abstract}
We consider composite optimization with a smooth, possibly nonconvex function and a separable convex polyhedral term. Such problems have a simple nonsmooth structure but arise widely in applications. Existing Newton-type methods may fail to identify nonsmooth coordinates during direction computation, move coordinates away from them reached by the full step during line search, or rely on separate first-order steps for identification. In this work, we propose a unified inexact proximal Newton method that couples structure identification with second-order acceleration through a proximal line search designed to retain the nonsmooth structure revealed at the local model. In particular, on a predicted working set, we solve a quadratic model inexactly but enforce its constraints exactly, allowing coordinates to reach new kinks without leaving the selected affine intervals. This set also includes small-margin kink coordinates to exploit Hessian coupling effect. To globalize the direction, we construct a proximal line search using an isotropic proximal model whose linear term is chosen so that the initial trial exactly recovers the full step. Backtracking increases the proximal curvature until sufficient decrease holds, allowing coordinates to remain at kinks reached by the full step while damping other components. Generating each trial point requires only proximal operator evaluations. We prove finite backtracking and finite active-set identification under strict complementarity and a non-singular reduced Hessian, together with local Q-superlinear convergence under a directional Dennis–Moré condition and vanishing relative inexactness. Numerical results verify the effectiveness of the proposed method.


\end{abstract}

\begin{IEEEkeywords}
Proximal Newton methods, inexact methods, nonconvex composite optimization, nonsmooth optimization
\end{IEEEkeywords}
\vspace{-1em}
\section{Introduction}
\label{sec:I}
\IEEEPARstart{I}{n} this paper, we consider the composite optimization
\begin{equation}
    \min_{\bm{x} \in \mathbb{R}^n} ~ F(\bm{x}) := q(\bm{x})  +\sum_{i=1}^n h_i(x_i),
    \label{eq:problem-formulation}
\end{equation}
where $q:\mathbb{R}^n \to \mathbb{R}$ is smooth and possibly nonconvex, and each
$h_i:\mathbb{R} \to \mathbb{R}$ is convex polyhedral. Such structured nonsmooth formulations arise widely in signal processing and
machine learning. Examples include $\ell_1$ regularization and its
weighted variants, which form the basis of many sparse recovery methods for identifying low-dimensional structure in high-dimensional data \cite{candes2008enhancing}.  More problems can be cast as \eqref{eq:problem-formulation} through suitable objective decompositions \cite{yao2018efficient} or changes of variables. For example, finite-slope folded-concave sparsity penalties admit exact decompositions into a weighted \(\ell_1\) term and a nonconvex smooth term. Consider $R(\bm x)=\sum_{i=1}^n \phi_i\left(\left|x_i\right|\right)$, where each $\phi_i$ is nondecreasing and concave, has a finite right derivative $\phi_{i,+}^{\prime}(0)$, and has a Lipschitz-continuous derivative. Defining $\psi_i(t) \triangleq \phi_{i,+}^{\prime}(0)|t|-$ $\phi_i(|t|)$ gives the exact decomposition
\begin{equation}
\tilde{q}(\bm x)+\sum_{i=1}^n \phi_i\left(\left|x_i\right|\right)=\underbrace{\left(\tilde{q}(\bm x)-\sum_{i=1}^n \psi_i\left(x_i\right)\right)}_{q(\bm x)}+\underbrace{\sum_{i=1}^n \phi_{i,+}^{\prime}(0)\left|x_i\right|}_{\sum_{i=1}^n h_i\left(x_i\right)} .\nonumber
\end{equation}
Table~\ref{tab:penalty-splits} summarizes representative choices of \(\phi_{i,+}^{\prime}(0)\) and \(\psi_i\). Suitable changes of variables allow SVMs \cite{rybak2025inference} and quantile regression models \cite{wang2024optimal} to be cast in the form \eqref{eq:problem-formulation}. Consider
\begin{equation}
\min_{\bm{\theta}\in\mathbb{R}^{d}}
\frac{1}{2}\bm{\theta}^{\top}\bm P\bm{\theta}
+\sum_{i=1}^{n}
h_i\!\left(c_i+(\bm A\bm{\theta})_i\right),
\nonumber
\end{equation}
where $\bm P\succ 0$, $\bm A\in\mathbb{R}^{n\times d}$ has full row rank, and $\bm c\in\mathbb{R}^{n}$. Setting
$\bm x=\bm c+\bm A\bm{\theta}$ and
$\bm G=\bm A\bm P^{-1}\bm A^{\top}\succ 0$, and eliminating
$\bm{\theta}$, gives the formulation with the residual variable
\begin{equation}
\min_{\bm x\in\mathbb{R}^{n}}
\frac{1}{2}(\bm x-\bm c)^{\top}
\bm G^{-1}(\bm x-\bm c)
+\sum_{i=1}^{n}h_i(x_i).
\nonumber
\end{equation}
The solution $\bm x^\star$ recovers the original variable as $\bm{\theta}^\star
=\bm P^{-1}\bm A^{\top}\bm G^{-1}
 (\bm x^\star-\bm c).$
For a binary hinge-loss SVM, take
$\bm c=\bm 0$, $\bm A=\bm Y\bm Q$, and
$h_i(x)=C_i[1-x]_+$, where
$\bm Y=\operatorname{diag}(y_1,\ldots,y_n)$. For
ridge-regularized quantile regression, take
$\bm c=\bm y$, $\bm A=-\bm Q$, and $h_i(x)=\omega_i(\tau[x]_+ +(1-\tau)[-x]_+)$.
In this case, $\bm x=\bm y-\bm Q\bm{\theta}$ is the residual variable. 
\begin{table}[t]
\caption{Representative exact splits
$\phi_i(|t|)=\phi_{i,+}^{\prime}(0)|t|-\psi_i(t)$.}
\label{tab:penalty-splits}
\vspace{-1em}
\centering
\footnotesize
\setlength{\tabcolsep}{2.5pt}
\renewcommand{\arraystretch}{1.15}
\begin{tabular}{@{}lcl@{}}
\toprule
Penalty $\phi_i$ & $\phi_{i,+}^{\prime}(0)$ &  $\psi_i(t)$ \\
\midrule
SCAD \cite{fan2001variable}, $a_i>2$
& $\mu_i$
& $\displaystyle
\begin{cases}
0,
    & |t|\leq \mu_i,\\
\dfrac{(|t|-\mu_i)^2}{2(a_i-1)},
    & \mu_i<|t|\leq a_i\mu_i,\\
\mu_i |t|-\dfrac{(a_i+1)\mu_i^2}{2},
    & |t|>a_i\mu_i
\end{cases}$ \\

\addlinespace[2pt]
MCP \cite{zhang2010nearly}, $\gamma_i>0$
& $\mu_i$
& $\displaystyle
\begin{cases}
\dfrac{|t|^2}{2\gamma_i},
    & |t|\leq \gamma_i\mu_i,\\
\mu_i |t|-\dfrac{\gamma_i\mu_i^2}{2},
    & |t|>\gamma_i\mu_i
\end{cases}$ \\

\addlinespace[2pt]
CEL0 \cite{soubies2015continuous}, $d_i>0$
& $d_i\sqrt{2\mu}$
& $\displaystyle
\begin{cases}
\dfrac{d_i^2|t|^2}{2},
    & |t|\leq \sqrt{2\mu}/d_i,\\
d_i\sqrt{2\mu}\,|t|-\mu,
    & |t|>\sqrt{2\mu}/d_i
\end{cases}$ \\
\bottomrule
\end{tabular}
\end{table}

\subsection{Related Works}
\label{subsec:I-A}
The proximal gradient method is the first-order baseline to \eqref{eq:problem-formulation}. Each iteration evaluates \(\nabla q\) and applies the Euclidean proximal operator of \(h\). Common acceleration strategies include adaptive step size \cite{bonettini2016variable} and inertial updates \cite{ochs2019adaptive,liang2023average,bonettini2024heavyball}. These methods are inexpensive per iteration and effective at scale for moderate accuracy, but ill-conditioning can limit high-accuracy convergence, motivating the use of second-order information. We next turn to second-order methods and group them according to how they compute Newton-type directions. 

Proximal Newton-type methods compute the direction by approximately solving
\begin{equation}
   \bm{p}_k \in
\arg \min_{\bm{p}}~\left\langle\nabla q\left(\bm x_k\right), \bm p\right\rangle+\frac{1}{2} \bm p^{\top} \bm B_k \bm p+h\left(\bm x_k+\bm{p}\right),\nonumber
\end{equation}
where $\bm B_k$ is either the exact Hessian $\nabla^2 q\left(\bm x_k\right)$ or its suitable approximation, and $h$ is used in the exact form. Existing methods differ mainly in their globalization strategies. A line-search approach computes the direction satisfying relative proximal residual and model decrease conditions, switches to proximal gradient if the direction fails a descent test, and then applies an Armijo line search \cite{kanzow2021globalized}. Another approach avoids line search by regularizing the Hessian, assesses the full inexact step using the ratio of actual to predicted reduction, and updates the regularization accordingly \cite{vomdahl2024inexact}. Trust-region methods instead constrain the step explicitly and update the radius based on model agreement \cite{aravkin2022trustregion,baraldi2023trustregion}. However, generic model-gap and KKT-residual inexact criteria control only subproblem accuracy and do not ensure that the endpoint identifies the active polyhedral face. For $\ell_1$-regularized problem, the resulting endpoint need not be sparse. The method in \cite{lee2023accelerating} achieves active-structure identification by computing an inexact direction with a proximal inner solver under a fixed relative model-gap tolerance. The overall algorithm is two-stage, requiring a switch to smooth optimization on the identified manifold and proximal gradient steps as a fallback. 

Another class of Newton-type methods exploits the structure of \(h(\bm{x}_k)\) when computing the search direction. These methods typically use \(h(\bm{x}_k)\) to select a candidate orthant and solve a Newton model for the associated smooth restriction. Orthant consistency is ensured by post-processing  the direction and projection line search. The resulting direction subproblem has the generic form
\begin{equation}
\bm{p}_k \in
\arg\min_{\bm{p}\in S_k}
\left\{
\left\langle \widetilde{\bm{g}}_k,\bm{p}\right\rangle
+\frac{1}{2}\bm{p}^{\top}\bm{B}_k\bm{p}
\right\}.\nonumber
\end{equation}
Here, \(S_k\) denotes either the full space or a selected free subspace, and \(\widetilde{\bm{g}}_k\) is the pseudo-gradient associated with the selected orthant. Specifically, the work in \cite{andrew2007scalable} projects the trial points generated during the line search onto the selected orthant. The semismooth Newton framework of \cite{byrd2016family} characterizes orthant methods as reduced-space minimization followed by a projected line search. A related method accommodates a nonconvex smooth term and augments the direction system with diagonal curvature information derived from a partial Huber regularization of the \(\ell_1\) term \cite{delosreyes2017secondorder}. These methods are primarily designed for \(\ell_1\)-regularized problems. During line search, applying a uniform step size can leave a coordinate nonzero even when the full projected step would set it to zero, losing the structure revealed at the model endpoint.

Active-set Newton methods generally separate structure identification from second-order acceleration. An identification phase estimates the active manifold, after which Newton-type steps are applied to the resulting smooth reduced problem \cite{lewis2021active}. For convex composite objectives, forward-backward methods identify this manifold in finitely many iterations under partial smoothness and nondegeneracy \cite{liang2017activity}. For nonconvex problems, the method interleaves proximal-gradient steps that update the manifold estimate with Riemannian Newton steps that accelerate optimization on the current manifold \cite{bareilles2023newton}. For specific problems, i.e., \(\ell_1\)-regularized optimization, related methods either combine active-set prediction with inexact second-order subspace minimization \cite{keskar2016secondorder,cheng2021active} or use reduced proximal-gradient steps to expand the predicted support \cite{chen2017reduced}. Thus, although second-order acceleration is effective after stabilization, identification and Newton acceleration remain staged or alternating, and first-order steps or tests are needed to establish and revise the active set.
\subsection{Our contribution}
Although existing second-order methods can substantially improve the efficiency for solving \eqref{eq:problem-formulation}, they do not always make full use of the available structure. This motivates a
\emph{unified algorithm for joint active-set identification and
Newton-type acceleration via proximal line search}. We summarize our
main contributions below.
\begin{itemize}
    \item \emph{Unified identification and acceleration via proximal line
search.}
We develop a single-stage proximal Newton method that solves a quadratic model inexactly but enforces its box constraints exactly, allowing coordinates to reach new kinks within selected affine intervals. The working set also admits small-margin kink coordinates to exploit Hessian coupling. To keep the structure obtained by this model, we design the linear term of an isotropic proximal model so that the initial trial exactly recovers the computed full step. Backtracking increases proximal curvature until sufficient decrease holds, allowing endpoint kink coordinates to remain fixed while other components are damped. Trial points require only separable proximal evaluations.

\item \emph{Global convergence and finite identification.} We show the proposed method automatically reduces to an inexact Newton method on the smooth subspace. Under basic regularity and relative inexactness conditions, we prove finite termination of backtracking and stationarity of every accumulation point. Strict complementarity and nonsingularity of the reduced Hessian further ensure convergence of the full sequence and finite active-set identification. Under a directional Dennis--Moré condition and vanishing relative inexactness, full steps are eventually accepted and the iterates converge locally \(Q\)-superlinearly.

\end{itemize}

\subsection{Notations}
\label{subsec:notation}
Let $[n]:=\{1,\ldots,n\}$, $\mathbb{R}_{+}:=[0,\infty)$, and
$\mathbb{R}_{++}:=(0,\infty)$. Vectors and matrices are written in boldface,
and $(\cdot)^{\top}$ denotes transposition. For index sets
$\mathcal I,\mathcal J\subseteq[n]$, $|\mathcal I|$ and
$\mathcal I^{c}$ denote the cardinality and complement of $\mathcal I$;
$\bm{x}_{\mathcal I}$ and $\bm{A}_{\mathcal I\mathcal J}$ denote the
corresponding subvector and submatrix. The coordinate embedding
$\bm{E}_{\mathcal I}$ satisfies
$\bm{E}_{\mathcal I}^{\top}\bm{x}=\bm{x}_{\mathcal I}$, and
$\bm{e}^{(i)}$ is the $i$th canonical basis vector. The Euclidean inner product and norm are denoted by
$\langle\cdot,\cdot\rangle$ and $\|\cdot\|$, respectively. For matrices,
$\|\cdot\|$ is the spectral norm, while $\|\cdot\|_{p}$ denotes the usual
$\ell_p$ norm. For $\bm{A}\succ0$, set
$\|\bm{u}\|_{\bm{A}}:=(\bm{u}^{\top}\bm{A}\bm{u})^{1/2}$ and let
$\operatorname{dist}_{\bm{A}}(\bm{z},\mathcal C)$ be the associated distance.
We use $\lambda_{\min}(\bm{A})$, $\lambda_{\max}(\bm{A})$, and
$\sigma_{\min}(\bm{A})$ for the extreme eigenvalues and the smallest singular
value, and write $\bm{A}\succeq\bm{B}$ when $\bm{A}-\bm{B}$ is positive
semidefinite. For a nonempty closed convex set $\mathcal C$, $\Pi_{\mathcal C}$,
$\operatorname{dist}(\cdot,\mathcal C)$, $N_{\mathcal C}$,
$\iota_{\mathcal C}$, and $\operatorname{ri}(\mathcal C)$ denote its Euclidean
projection, distance, normal cone, indicator function, and relative interior,
respectively. For a function $\varphi$, $\partial\varphi$ denotes its limiting
subdifferential. $[t]_{+}:=\max\{t,0\}$. Finally,
$h(\bm{x}):=\sum_{i=1}^{n}h_i(x_i)$,
$\bm{g}(\bm{x}):=\nabla q(\bm{x})$, and
$\bm{H}(\bm{x}):=\nabla^{2}q(\bm{x})$ whenever the Hessian exists; the
subscript $k$ indicates evaluation at $\bm{x}_k$.

\section{Preliminaries}
\begin{assumption}
\label{ass:main-1}
    The following conditions hold for \eqref{eq:problem-formulation}.
    \begin{enumerate}[label=(\roman*)]
        \item The function $q: \mathbb{R}^n \rightarrow \mathbb{R}$ is smooth, possibly nonconvex with a $L$-Lipschitz continuous gradient
        \begin{equation}
\|\nabla q(\bm x)-\nabla q(\bm y)\| \leq L\|\bm x-\bm y\|, \quad \forall \bm x, \bm y \in \mathbb{R}^n .
\end{equation}
\item For every $i \in[n]$, the function $h_i: \mathbb{R} \rightarrow \mathbb{R}$ is finite-valued, convex, and polyhedral.
\item The function $F$ is bounded below, and the initial sublevel set
\begin{equation}
\mathcal{L}_0:=\left\{\bm{x} \in \mathbb{R}^n: F(\bm{x}) \leq F\left(\bm{x}_0\right)\right\} 
\end{equation}
is compact.
    \end{enumerate}
\end{assumption}
Assumption \ref{ass:main-1} (i) gives the standard quadratic upper model
\begin{equation}
q(\bm{y}) \leq q(\bm{x})+\langle\bm{g}(\bm{x}), \bm{y}-\bm{x}\rangle+\frac{L}{2}\|\bm{y}-\bm{x}\|^2, ~ \bm{x}, \bm{y} \in \mathbb{R}^n .
\end{equation}
Condition (ii) implies $h$ and $F$ are continuous on $\mathbb{R}^n$, and (iii) ensures $F$  attains its minimum over $\mathcal{L}_0$. For $\sigma>0$, the proximal operator of $h$ is
\begin{equation}
    \operatorname{prox}_{\frac{1}{\sigma}h}(\bm{z})
    \coloneqq
    \underset{\bm{u}\in\mathbb{R}^{n}}
    {\operatorname{argmin}}
    \left\{
        h(\bm{u})
        +
        \frac{\sigma}{2}
        \|\bm{u}-\bm{z}\|^{2}
    \right\}.
    \label{eq:proximal-map}
\end{equation}
The objective in \eqref{eq:proximal-map} is strongly convex so the proximal operator is single-valued and nonexpansive. Separability gives
\begin{equation}
    \left[
        \operatorname{prox}_{\frac{1}{\sigma}h}(\bm{z})
    \right]_{i}
    =
    \operatorname{prox}_{\frac{1}{\sigma}h_i}(z_i),
    ~i\in[n].
    \label{eq:coordinate-proximal-map}
\end{equation}
\subsection{Polyhedral structure of $h_i$}
For each scalar convex function $h_i$, the left and right derivatives exist at every point. For arbitrary $t \in \mathbb{R}$, we denote them by $h'_{i,-}(t)$ and
 $h'_{i,+}(t)$, respectively. Define the kink point set $\mathcal{B}:=\left\{t \in \mathbb{R}: h'_{i,-}(t)<h'_{i,+}(t)\right\} .$ Moreover,
$t\notin\mathcal{B}$ if and only if $h'_{i,-}(t)=h'_{i,+}(t)$. The subdifferential of \(h\) at $t$ is $\partial h_i(t)=[a_i^{-}(t),a_i^{+}(t)].$ Consequently, 
\begin{equation}
    \partial h(\bm{x})=\prod_{i=1}^n\left[h'_{i,-}(x_i), h'_{i,+}(x_i)\right].
    \label{eq:partial-h-1}
\end{equation}
The directional derivative of $h$ is therefore
\begin{equation}
h^{\prime}(\bm{x} ; \bm{d})=\sum_{i: d_i>0} h'_{i,+}\left(x_i\right) d_i+\sum_{i: d_i<0} h'_{i,-}\left(x_i\right) d_i .
\label{eq:h-prime}
\end{equation}
For $\xi \in\{-1,+1\}$, we write $h'_{i,\xi}$ to collect $h'_{i,+}$ and $h'_{i,-}$. Then
\begin{equation}
F^{\prime}\left(\bm{x} ; \xi \bm{e}_i\right)=\xi\left(g_i(\bm{x})+h'_{i,\xi}\left(x_i\right)\right) .
\label{eq:F-prime}
\end{equation}

\subsection{Stationarity, residuals, and active sets}
A point $\bm{x}$ is called stationary if $\bm{0} \in \partial F(\bm{x})$, and the stationary set is $\mathcal{X} := \left\{ \bm{x}\in\mathbb{R}^{n} : \bm{0}\in\partial F(\bm{x}) \right\}$. The following  stationarity conditions are equivalent. 
\begin{proposition}
\label{prop:1}
    For $\bm x \in \mathbb{R}$, the following statements are equivalent.
    \begin{enumerate}[label=(\roman*)]
        \item $\bm{x} \in \mathcal{X}$.
        \item For every $i \in[n]$, $-g_i(\bm{x}) \in\left[h'_{i,-}\left(x_i\right), h'_{i,+}\left(x_i\right)\right]$.
\item For every $i \in[n]$, $F^{\prime}\left(\bm{x} ; \xi \bm{e}_i\right) \geq 0$ for $\xi \in \{-1,+1\}$.
\item $F^{\prime}(\bm{x} ; \bm{d}) \geq 0$ for every $\bm{d} \in \mathbb{R}^n$.
    \end{enumerate}
\end{proposition}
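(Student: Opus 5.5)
I will prove the cycle (i)$\Rightarrow$(ii)$\Rightarrow$(iii)$\Rightarrow$(iv)$\Rightarrow$(i). One preliminary fact drives every step. Since $q$ is continuously differentiable and $h$ is finite and convex, the limiting subdifferential splits as a sum:
\begin{equation}
\partial F(\bm x)=\bm g(\bm x)+\partial h(\bm x).
\end{equation}
This is the standard smooth-plus-convex sum rule. For the convex part, the limiting subdifferential coincides with the convex subdifferential, which \eqref{eq:partial-h-1} gives as the box $\prod_i[h'_{i,-}(x_i),h'_{i,+}(x_i)]$.

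\textbf{(i)$\Leftrightarrow$(ii).} By the sum formula, $\bm 0\in\partial F(\bm x)$ holds if and only if $-\bm g(\bm x)\in\partial h(\bm x)$. Because $\partial h(\bm x)$ is a product of intervals, this membership holds if and only if it holds in every coordinate, which is exactly (ii).

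\textbf{(ii)$\Rightarrow$(iii).} I use the formula \eqref{eq:F-prime}.
\begin{itemize}
\item For $\xi=+1$, it gives $F'(\bm x;\bm e_i)=g_i(\bm x)+h'_{i,+}(x_i)$. This is nonnegative because $-g_i(\bm x)\le h'_{i,+}(x_i)$.
\item For $\xi=-1$, it gives $F'(\bm x;-\bm e_i)=-(g_i(\bm x)+h'_{i,-}(x_i))$. This is nonnegative because $-g_i(\bm x)\ge h'_{i,-}(x_i)$.
\end{itemize}
Reading these two computations backwards also gives (iii)$\Rightarrow$(ii).

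\textbf{(iii)$\Rightarrow$(iv).} Since $q$ is differentiable, $F'(\bm x;\bm d)=\langle\bm g(\bm x),\bm d\rangle+h'(\bm x;\bm d)$. Substituting \eqref{eq:h-prime} decomposes this directional derivative coordinatewise:
\begin{equation}
F'(\bm x;\bm d)=\sum_{i:d_i>0} d_i\,F'(\bm x;\bm e_i)+\sum_{i:d_i<0}|d_i|\,F'(\bm x;-\bm e_i).
\end{equation}
Every term on the right is nonnegative by (iii), so $F'(\bm x;\bm d)\ge 0$.

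\textbf{(iv)$\Rightarrow$(iii).} This is immediate: take $\bm d=\xi\bm e_i$. Together with the previous steps, this closes the cycle.

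The only step needing care is justifying the sum rule for the limiting subdifferential. It is standard (Rockafellar–Wets, Exercise 8.8) for $q\in C^1$ plus a proper lsc $h$, together with the agreement of the limiting and convex subdifferentials for convex $h$, and I will cite it rather than reprove it.

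The statement's "$\bm x\in\mathbb R$" should read $\bm x\in\mathbb R^n$.
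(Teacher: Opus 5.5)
Your proof is correct and follows the paper's route: (i)$\Leftrightarrow$(ii) via the product formula \eqref{eq:partial-h-1} for $\partial h$, (ii)$\Leftrightarrow$(iii) via \eqref{eq:F-prime}, and (iii)$\Leftrightarrow$(iv) via \eqref{eq:h-prime}. Your explicit smooth-plus-convex sum rule for the limiting subdifferential spells out what the paper compresses into ``the definition of $\partial F$''; you announce a cycle but actually prove three two-way equivalences, which is equally complete; and you are right that $\bm x\in\mathbb R$ should read $\bm x\in\mathbb R^n$.
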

\begin{proof}
    The equivalence of (i) and (ii) follows from \eqref{eq:partial-h-1} and the definition of $\partial F$.  The equivalence of (ii) and (iii) follows from \eqref{eq:F-prime}. Finally, \eqref{eq:h-prime} gives the equivalence with (iv).
\end{proof}
At a kink point, the two-sided condition in (iii) is required to establish stationarity. Moreover, \(F^{\prime}(\bm{x};\xi\bm{e}_i)<0\) indicates that \(\xi\bm{e}_i\) is a strict descent direction. We use two complementary stationarity residuals. The first
is the minimum-norm subgradient
\begin{equation}
    \begin{aligned}
        \bm{r}(\bm{x})
        &\coloneqq
        \Pi_{\partial F(\bm{x})}(\bm{0})=
        \bm{g}(\bm{x})
        +
        \Pi_{\partial h(\bm{x})}
        \bigl(
            -\bm{g}(\bm{x})
        \bigr).
    \end{aligned}
    \label{eq:minnorm-residual}
\end{equation}
It satisfies
\begin{equation}
    \|\bm{r}(\bm{x})\|
    =
    \min_{\bm{v}\in\partial F(\bm{x})}
    \|\bm{v}\|
    =
    \operatorname{dist}
    \bigl(
        \bm{0},
        \partial F(\bm{x})
    \bigr),
    \label{eq:minnorm-residual-norm}
\end{equation}
and its components are
\begin{equation}
    r_i(\bm{x})
    =
    g_i(\bm{x})
    +
    \Pi_{
        [h_{i,-}(x_i),\,h_{i,+}(x_i)]
    }
    \bigl(
        -g_i(\bm{x})
    \bigr),
    ~ i\in[n].
    \label{eq:minnorm-residual-coordinate}
\end{equation}
$\bm{r}(\bm{x})$ vanishes exactly on
$\mathcal{X}$. In addition
\begin{equation}
    \begin{aligned}
        |r_i(\bm{x})|
        &=
        \operatorname{dist}
        \left(
            -g_i(\bm{x}),
            [h'_{i,-}(x_i),h'_{i,+}(x_i)]
        \right)
        \\
        &=
        \left[
            -F'\bigl(
                \bm{x};
                \bm{e}_i
            \bigr)
        \right]_{+}
        +
        \left[
            -F'\bigl(
                \bm{x};
                -\bm{e}_i
            \bigr)
        \right]_{+}.
    \end{aligned}
    \label{eq:residual-directional-violation}
\end{equation}
The two positive-part terms in \eqref{eq:residual-directional-violation} cannot both be nonzero because $F'\bigl(
        \bm{x};
        \bm{e}_i
    \bigr)
    +
    F'\bigl(
        \bm{x};
        -\bm{e}_i
    \bigr)
    =
    h'_{i,+}(x_i)-h'_{i,-}(x_i)
    \geq 0$. $|r_i(\bm{x})|$ measures the magnitude of the violated one-sided stationarity condition. $\bm{r}(\bm{x})$ is discontinuous when a coordinate crosses a kink point. For a continuous stationarity measure, define the proximal-gradient residual
    \begin{equation}
    \bm{G}_{\sigma}(\bm{x})
    \coloneqq
    \sigma
    \left[
        \bm{x}
        -
        \operatorname{prox}{\sigma^{-1}h}
        \left(
            \bm{x}
            -
            \sigma^{-1}\bm{g}(\bm{x})
        \right)
    \right],
    \label{eq:proximal-gradient-residual}
\end{equation}
where \(\sigma\ge\lambda_{\max}(\nabla^2 q(\bm{x}))\). The nonexpansiveness of the proximal mapping and the continuity of \(\bm{g}\) imply that \(\bm{G}_{\sigma}\) is continuous. The proximal optimality condition also shows that \(\bm{G}_{\sigma}(\bm{x})=\bm{0}\) for \(\bm{x}\in\mathcal{X}\). Moreover, its norm is bounded above by the minimum-norm residual $\|\bm{G}_{\sigma}(\bm{x})\|
\leq
\|\bm{r}(\bm{x})\|
=
\operatorname{dist}
\bigl(
    \bm{0},
    \partial F(\bm{x})
\bigr)$.

We next isolate the active polyhedral structure at a stationary point. For $\bm{x}^\star \in \mathcal{X}$, define the free and active index sets 
\begin{equation}
S^{\star}:=\left\{i: x_i^{\star} \notin \mathcal{B}^\star\right\}, \quad Z^{\star}:=[n] \backslash S^{\star} .
\end{equation}
An equivalent definition of the free set is $S^{\star}:=\left\{i: h'_{i,-}\left(x_{ i}^\star\right)=h'_{i,+}\left(x_{ i}^\star\right)\right\}$, since convexity gives $h'_{i,-}\left(x_{ i}^\star\right)\leq h'_{i,+}\left(x_{ i}^\star\right)$. We call $x^{\star}$ nondegenerate, or strictly complementary, if
\begin{equation}
-g_i\left(x^{\star}\right) \in \operatorname{ri} \partial h_i\left(x_i^{\star}\right), \quad i \in Z^{\star} .
\end{equation}
Strict complementarity provides a positive separation from both boundary equalities in the coordinate-wise stationarity conditions. Together with convergence and a vanishing optimality residual, this separation is the mechanism used in finite-identification analyses of proximal and coordinate methods; see, e.g., \cite{bareilles2023newton,liang2017activity}. Based on the design of the proposed method, we later establish its finite-identification property.

\section{Algorithm Construction}
This section develops a proximal line search method for joint active-set identification and Newton-type acceleration. At \(\bm{x}_k\), Section \ref{subsec:III-A} predicts a working set $W_k$ by augmenting the standard first-order selection with small-margin kink coordinates to exploit Hessian coupling. Section \ref{subsec:III-B} computes $\bm{p}_k$ on this set by solving the quadratic model inexactly but enforcing its box constraints exactly, allowing coordinates to reach new kinks within selected affine intervals. A relative KKT-residual criterion controls the inexactness. 
To retain the nonsmooth structure revealed at \(\bm{x}_k+\bm{p}_k\) during globalization, Section \ref{subsec:III-C} uses a subgradient at this endpoint to choose the linear term of an isotropic proximal model whose initial trial recovers this endpoint exactly. Backtracking increases the proximal curvature until sufficient decrease holds, generating trial points through separable proximal evaluations. Unlike uniform scaling of $\bm{p}_k$, this search can retain endpoint kink coordinates while damping other components. The inexactness criterion and proximal construction jointly yield a composite first-order descent bound along the search path. Section \ref{subsec:property} analyzes properties of the algorithm. 


\subsection{Active-set prediction with curvature coupling}
\label{subsec:III-A}
At the current iterate \(\bm{x}_k\), we classify each coordinate according to whether \(h_i\) is differentiable at \(x_{k,i}\). Define
\begin{equation}
    S_k\coloneqq\set{i:h'_{i,-}\left(x_{k, i}\right)=h'_{i,+}\left(x_{k, i}\right)},~
    Z_k\coloneqq[n]\setminus S_k.
   \nonumber
\end{equation}
Thus, \(S_k\) and \(Z_k\) partition \([n]\) into smooth and nonsmooth coordinates, respectively. This partition serves as the starting point for active-set prediction. We construct a working set of coordinates that may contribute to descent and define the predicted active set as its complement. A standard first-order rule places every \(i\in S_k\) in the working set, together with any \(i\in Z_k\) that violates the coordinate-wise stationarity condition. We additionally include selected small-margin indices from \(Z_k\), since off-diagonal Hessian coupling with other coordinates may still make their movement beneficial.\footnote{The formal quantitative analysis is given in Section \ref{subsec:property}.} Specifically, for each \(i\in Z_k\), we assess coordinate-wise stationarity by computing
\begin{equation}
\begin{aligned}
    \xi_{k,i} &= \arg \min_{\xi \in \{-1,+1\}} F^{\prime}\left(\bm{x}_k ; \xi \bm{e}_i\right)\\
    \delta_{k,i} &= \min_{\xi \in \{-1,+1\}} F^{\prime}\left(\bm{x}_k ; \xi \bm{e}_i\right).\nonumber
    \end{aligned}
\end{equation}
Here, \(\xi_{k,i}\bm{e}_i\) is a minimizing direction within the coordinate unit ball, and \(\delta_{k,i}\) is the corresponding minimum directional derivative. Thus, \(\delta_{k,i}<0\) indicates first-order descent along \(i\), whereas \(\delta_{k,i}=0\) indicates the stationarity. A positive \(\delta_{k,i}\) means that moving coordinate \(i\) alone in either signed direction increases the objective. Nevertheless, when \(\delta_{k,i}\) is small, a joint step involving \(i\) and other coordinates may still decrease the objective because of off-diagonal Hessian coupling. Define
\begin{equation}
    V_k:=\left\{i \in Z_k: \delta_{k, i}<0\right\}, ~ E_k:=\left\{i \in Z_k \backslash V_k: 0 \leq \delta_{k, i}<\nu_k\right\},\nonumber
\end{equation}
where $\nu_k>0$. The released kink, working, and active sets are
\begin{equation}
    J_k:=V_k \cup E_k,~ W_k:=S_k \cup J_k, ~ N_k:=W_k^c .
    \label{eq:set}
\end{equation}
\subsection{Inexact Newton direction from constrained QP}
\label{subsec:III-B}
We jointly consider the local polyhedral geometry of the nonsmooth term with the curvature of the smooth term to obtain the Newton-type direction. It enforces structural feasibility exactly while solving the resulting quadratic model inexactly. Specifically, given \(W_k\) and the selected signs \(\{\xi_{k,i}\}_{i\in J_k}\), let \(\bm p\) denote a Newton-type direction. For each \(i\in W_k\), we constrain \(x_{k,i}+p_i\) to the corresponding affine interval of \(h_i\). This restriction makes \(h(\bm x_k+\bm p)-h(\bm x_k)\) exactly linear in \(\bm p_{W_k}\). Consequently, computing \(\bm p\) reduces to approximately solving a strongly convex, box-constrained quadratic program subject to \(\bm p_{N_k}=0\). We first define the affine intervals. For $i \in W_k$, let $b_{k, i}^{-}$and $b_{k, i}^{+}$denote the nearest breakpoints of $h_i$ strictly to the left and right of $x_{k, i}$, respectively. That is $b_{k, i}^{-}:= \max \left\{b \in \mathcal{B}: b<x_{k, i}\right\},~b_{k, i}^{+}:=  \min \left\{b \in \mathcal{B}: b>x_{k, i}\right\}$. The affine interval is then 
\begin{equation}
    I_{k, i}:= \begin{cases}{[b_{k, i}^{-}, b_{k, i}^{+}],} & i \in S_k, \\ {[x_{k, i}, b_{k, i}^{+}],} & i \in J_k, \xi_{k, i}=+1, \\ {[b_{k, i}^{-}, x_{k, i}],} & i \in J_k, \xi_{k, i}=-1 .\end{cases}
    \label{eq:interval}
\end{equation}
Write $I_{k, i}=[l_{k, i}, u_{k, i}]$ for simplicity, and let $\alpha_{k, i}$ denote the slope of $h_i$ on this interval; explicitly,
\begin{equation}
\alpha_{k, i}:= \begin{cases}h_i^{\prime}\left(x_{k, i}\right), & i \in S_k, \\ h_{i, \xi_{k, i}}^{\prime}\left(x_{k, i}\right), & i \in J_k .\end{cases}
\label{eq:slope}
\end{equation}
Consequently, $h_i(t)-h_i\left(x_{k, i}\right)=\alpha_{k, i}\left(t-x_{k, i}\right),~ t \in I_{k, i}$. Because \(h(\bm{x}_k+\bm{p})-h(\bm{x}_k)\) is linear in \(\bm{p}_{W_k}\) over the interval \eqref{eq:interval}, we compute the exact Newton-type direction as
\begin{equation}
\bm{p}_k^{\star}={\arg \min_{\bm{p} \in C_k} }\left\langle\bm{g}_{k, W_k}+\bm{\alpha}_{k, W_k}, \bm{p}_{W_k}\right\rangle+\frac{1}{2} \bm{p}_{W_k}^{\top}\left(\bm{B}_k\right)_{W_k W_k} \bm{p}_{W_k},
\label{eq:exact-Newton-direction}
\end{equation}
where $C_k:=\left\{\bm{p} \in \mathbb{R}^n: \bm{p}_{N_k}=\bm{0},~ x_{k, i}+p_i \in I_{k, i}~ \forall i \in W_k\right\}$, and $\bm{B}_k$ is an appropriate approximation to $\nabla^2 q(\bm{x}_k)$ satisfying 
\begin{equation}
    m \bm I \preceq\left(\bm B_k\right)_{W_k W_k} \preceq M \bm I
    \label{eq:bounded-hessian}
\end{equation}
for some $0<m<M<\infty$. We next introduce an inexactness criterion for \eqref{eq:exact-Newton-direction}. Let $C_{k, W_k}$ be the reduced feasible set in $\mathbb{R}^{\left|W_k\right|}$ and define $\bm{a}_k:=\bm{g}_{k, W_k}+\bm{\alpha}_{k, W_k},~ \bm{B}_k^W:=\left(\bm{B}_k\right)_{W_k W_k}, ~\bm{q}_k(\bm{p}):=\bm{a}_k+\bm{B}_k^W \bm{p}_{W_k}$ for simplicity. The KKT residual to \eqref{eq:exact-Newton-direction} is
\begin{equation}
\varepsilon_k(\bm{p}):=\operatorname{dist}_{\left(\bm{B}_k^W\right)^{-1}}\left(\bm{0}, \bm{a}_k+\bm{B}_k^W \bm{p}_{W_k}+N_{C_{k, W_k}}\left(\bm{p}_{W_k}\right)\right).\nonumber
\end{equation}
Each iteration, we solve \eqref{eq:exact-Newton-direction} for $\bm{p}_k \in C_k$, and 
\begin{equation}
\varepsilon_k\left(\bm{p}_k\right) \leq \vartheta_k\left\|\bm{p}_{k, W_k}\right\|_{\bm{B}_k^W}, \quad 0 \leq \vartheta_k \leq \bar{\vartheta}<\frac{1}{2} .
\label{eq:inexact-condition}
\tag{\textcolor{red}{IC}}
\end{equation}
Accordingly, there exist $\bm{n}_k$ $\in N_{C_{k, W_k}}\left(\bm{p}_{k, W_k}\right)$ such that the residual $\bm{e}_k=\bm{q}_k(\bm{p}_{k,W_k})+\bm{n}_k$, satisfies $\left\|\bm{e}_k\right\|_{\left(\bm{B}_k^W\right)^{-1}}=\varepsilon_k\left(\bm{p}_k\right)$. The bounded $\bm{B}_k^W$ gives $\left\|\bm{e}_k\right\| \leq \sqrt{M}\left\|\bm{e}_k\right\|_{\left(\bm{B}_k^W\right)^{-1}} \leq M \vartheta_k\left\|\bm{p}_k\right\|$. To avoid computing a weighted projection onto the model normal cone, we use the following inexpensive sufficient condition. For a feasible $\bm{p}_k$, $\bm{E}_k \in \mathbb{R}^{\left|W_k\right|}$ componentwise by
\begin{equation}
\left[\bm{E}_k\right]_i:= \begin{cases}{\left[\bm{q}_k(\bm{p}_k)\right]_i,} & l_{k,i}<p_{k,i}+x_{k,i}<u_{k,i}, \\ \min \left\{\left[\bm{q}_k(\bm{p}_k)\right]_i, 0\right\}, & p_{k,i}+x_{k,i}=l_{k,i}<u_{k,i}, \\ \max \left\{\left[\bm{q}_k(\bm{p}_k)\right]_i, 0\right\}, & p_{k,i}+x_{k,i}=u_{k,i}>l_{k,i}, \\ 0, & l_{k, i}=u_{k, i} .\end{cases}
\label{eq:inexact-condition-2}
\end{equation}
We show the connection between \eqref{eq:inexact-condition} and \eqref{eq:inexact-condition-2}.
\begin{lemma}
    For a feasible $\bm{p}_k$, 
    \begin{equation}
\left\|\bm{E}_k\right\|=\operatorname{dist}\left(\bm{0}, \bm{q}_k(\bm{p}_k)+N_{C_{k, W_k}}\left(\bm{p}_{k,W_k}\right)\right) .
\label{eq:inexact-condition-3}
\end{equation}
Consequently, $\left\|\bm{E}_k\right\| \leq m \vartheta_k\left\|\bm{p}_{k, W_k}\right\|$ implies \eqref{eq:inexact-condition}.\label{lem:1}
\end{lemma}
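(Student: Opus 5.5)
The plan is to exploit the product structure of the reduced feasible set. Since $C_{k,W_k}$ is a box $\prod_{i\in W_k}[l_{k,i}-x_{k,i},\,u_{k,i}-x_{k,i}]$, its normal cone at $\bm p_{k,W_k}$ splits as the product of the one-dimensional normal cones of the intervals at $p_{k,i}$. For a closed interval $[a,b]$ and a feasible $t$, this cone is
\begin{itemize}
\item $\{0\}$ if $a<t<b$,
\item $(-\infty,0]$ if $t=a<b$,
\item $[0,\infty)$ if $t=b>a$,
\item $\mathbb{R}$ if $a=b$.
\end{itemize}
Because the Euclidean distance from $\bm 0$ to $\bm q+\prod_i N_i$ decouples, its square equals $\sum_i \operatorname{dist}(0,q_i+N_i)^2$. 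It therefore suffices to compute each scalar distance $\operatorname{dist}(0,q_i+N_i)=\min_{n\in N_i}|q_i+n|$ and compare it with $|[\bm E_k]_i|$.

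The scalar computation goes case by case.
\begin{itemize}
\item Interior point: the distance is $|q_i|$.
\item Lower endpoint: $q_i+N_i=(-\infty,q_i]$, so the distance is $|\max\{q_i,0\}|$.
\item Upper endpoint: $q_i+N_i=[q_i,\infty)$, so the distance is $|\min\{q_i,0\}|$.
\item Degenerate interval ($l_{k,i}=u_{k,i}$): the set is all of $\mathbb{R}$, so the distance is $0$.
\end{itemize}
Comparing with \eqref{eq:inexact-condition-2}, the interior and degenerate cases match directly. The main obstacle is the two endpoint cases, where the min/max in \eqref{eq:inexact-condition-2} appear swapped relative to my computation.

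I would check the sign convention carefully. At the lower bound, a negative $q_i$ is exactly cancelled by a nonpositive normal element, while a positive $q_i$ is the violation. I expect either that the stated formula intends the opposite sign convention for $N$, or that the identity should read with $\max$ at the lower bound. In the proof I would state the correct componentwise value and note that only the magnitude of the violated one-sided condition enters. Granting this componentwise agreement, summing squares gives \eqref{eq:inexact-condition-3}.

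For the consequence, I would pass from the Euclidean distance to the $(\bm B_k^W)^{-1}$-weighted one. Let $\bm n$ attain the Euclidean distance, so that $\bm e=\bm q_k+\bm n$ satisfies $\|\bm e\|=\|\bm E_k\|$. By \eqref{eq:bounded-hessian}, $\|\bm e\|_{(\bm B_k^W)^{-1}}\le m^{-1/2}\|\bm e\|$, which gives
\[
\varepsilon_k(\bm p_k)\le m^{-1/2}\|\bm E_k\|\le \sqrt m\,\vartheta_k\|\bm p_{k,W_k}\|\le \vartheta_k\|\bm p_{k,W_k}\|_{\bm B_k^W}.
\]
The last inequality uses $\|\bm u\|_{\bm B_k^W}\ge\sqrt m\|\bm u\|$. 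This is \eqref{eq:inexact-condition}.
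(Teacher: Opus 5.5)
Your endpoint distances are computed incorrectly, and this is where the proposal breaks. At a lower endpoint $t=a<b$, the set $q_i+N_i=(-\infty,q_i]$ contains $0$ exactly when $q_i\ge 0$. Its distance to $0$ is therefore $[-q_i]_+=|\min\{q_i,0\}|$, not $|\max\{q_i,0\}|$. At an upper endpoint, $[q_i,\infty)$ contains $0$ exactly when $q_i\le 0$, so the distance is $[q_i]_+=|\max\{q_i,0\}|$. Equivalently, projecting $-q_i$ onto $(-\infty,0]$ gives $n_i=\min\{-q_i,0\}$ and the residual $q_i+n_i=\min\{q_i,0\}$.

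So there is no swap: \eqref{eq:inexact-condition-2} is correct as stated, and your sign-convention check runs backwards. At the lower bound the scalar KKT condition is $q_i\ge0$, since the model must not decrease when the coordinate moves into the interval. A positive $q_i$ is cancelled by $n_i=-q_i\le 0$, and a negative $q_i$ is the violation. As written, the proposal does not prove the lemma. It asserts a discrepancy that does not exist and then ``grants'' componentwise agreement. The value you propose to substitute would make \eqref{eq:inexact-condition-3} false: take $t=a<b$ and $q_i=-1$; the true distance is $1$, but your formula gives $0$.

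The substitution would also break the ``consequently'' part. A vector with $\max\{q_i,0\}$ at lower endpoints is in general not of the form $\bm q_k(\bm p_k)+\bm n$ with $\bm n\in N_{C_{k,W_k}}(\bm p_{k,W_k})$. As the example shows, its norm can underestimate the KKT distance, so a small value would not certify \eqref{eq:inexact-condition}. This is exactly why the one-sided choices in \eqref{eq:inexact-condition-2} matter for an inexpensive stopping test.

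Once the scalar distances are corrected, the rest of your argument coincides with the paper's proof. That includes the product form of the box normal cone and the decoupling of the squared distance. It also includes taking $\bm e=\bm E_k=\bm q_k(\bm p_k)+\bm n$ with $\bm n$ the projection of $-\bm q_k(\bm p_k)$ onto the cone. Finally, the chain $\varepsilon_k(\bm p_k)\le\|\bm E_k\|_{(\bm B_k^W)^{-1}}\le m^{-1/2}\|\bm E_k\|\le\sqrt m\,\vartheta_k\|\bm p_{k,W_k}\|\le\vartheta_k\|\bm p_{k,W_k}\|_{\bm B_k^W}$ is the paper's as well.
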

\begin{proof}
    See Appendix \ref{appendix-A}.
\end{proof}
Because $\bm{B}_k^W\succeq m\bm{I}$ and $C_{k,W_k}$ is a box, the reduced model is a strongly convex box-constrained quadratic program.  Hence the inexact condition is attainable by standard feasible set, projected-gradient, or coordinate methods; see, e.g., \cite{byrd2016inexact,lee2019inexact,kanzow2021globalized}.

\subsection{Globalization via proximal line search}
\label{subsec:III-C}
The proposed globalization strategy aims to retain the nonsmooth pattern revealed by the Newton-type direction while ensuring sufficient decrease. A conventional line search uniformly damps the step and may therefore move coordinates away from kinks reached at \(\bm{x}_k+\bm{p}_k\). We instead embed \(\bm{p}_k\) into a specially constructed isotropic proximal surrogate  constructed so that its initial trial returns \(\bm{x}_k+\bm{p}_k\) exactly. As backtracking increases the proximal curvature, coordinates that have reached kinks are encouraged to remain fixed there, while the remaining coordinates are gradually damped. For $\bm{p}_k \neq 0$, define $\beta_k
    \coloneqq
    \frac{\bm{p}_k^{\mathsf T}\bm{B}_k\bm{p}_k}{\norm{\bm{p}_k}^2},~
    m\le\beta_k\le M$ measures the normalized curvature of the quadratic model along $\bm{p}_k$. We choose $\boldsymbol{v}_k \in \partial h\left(\bm{x}_k+\bm{p}_k\right)$ to minimize the endpoint stationarity residual
\begin{equation}
    {\bm{v}}_k
    \coloneqq
    \proj_{\partial h(\bm{x}_k+\bm{p}_k)}
    \paren{-\bm{g}_k-\beta_k\bm{p}_k},
    \label{eq:vbar}
\end{equation}
which is componentwise projection onto the intervals in \eqref{eq:interval}. Define
\begin{equation}
    \bm{c}_k
    \coloneqq
    -\bm{g}_k-\beta_k\bm{p}_k-{\bm{v}}_k.
    \label{eq:correction}
\end{equation}
For any \(\lambda \ge \beta_k\), the isotropic proximal surrogate used in backtracking is given by 
\begin{equation}
    \begin{aligned}
       & \bm{z}_k(\lambda)
   =
    \prox_{\lambda^{-1}h}
    \paren{
        \bm{x}_k-\lambda^{-1}(\bm{g}_k+\bm{c}_k)
    }\\
    =&\arg \min_{\bm{z}} ~h(\bm z)+\left\langle \bm g_k+\bm c_k, \bm z-\bm x_k\right\rangle+\frac{\lambda}{2}\left\|\bm z-\bm x_k\right\|^2 .
    \end{aligned}
    \label{eq:main-step}
\end{equation}
The first-order optimality condition for \(\bm{z}_k(\lambda)\) is
\begin{equation}
0\in
\bm{g}_k+\bm{c}_k
+\lambda\left(\bm{z}_k(\lambda)-\bm{x}_k\right)
+\partial h\left(\bm{z}_k(\lambda)\right).
\end{equation}
\begin{remark}
    At \(\lambda=\beta_k\), consider the candidate \(\bm{z}=\bm{x}_k+\bm{p}_k\). The corresponding condition becomes $\bm{g}_k+\bm{c}_k+\beta_k\bm{p}_k
+\partial h(\bm{x}_k+\bm{p}_k)$. By construction, $\bm{g}_k+\bm{c}_k+\beta_k\bm{p}_k+\bm{v}_k=0,
~
\bm{v}_k\in\partial h(\bm{x}_k+\bm{p}_k)$. Hence, \(\bm{x}_k+\bm{p}_k\) satisfies the optimality condition. Since the proximal subproblem has a unique minimizer, $\bm{z}_k(\beta_k)=\bm{x}_k+\bm{p}_k$ and the full step is recovered. 
\end{remark}
At each iteration, backtracking starts from \(\lambda=\beta_k\) and increases \(\lambda\) geometrically, moving \(\bm{z}_k(\lambda)\) toward \(\bm{x}_k\). For any fixed \(\lambda\), Theorem~\ref{the:1} gives a closed form for \eqref{eq:main-step} and reveals non-uniform damping that tends to preserve the nonsmooth pattern at the endpoint. Define
\begin{equation}
\begin{aligned}
D_k := {}&
\{i\in W_k:p_{k,i}>0,\ x_{k,i}+p_{k,i}=u_{k,i}\}\\
&{}\cup
\{i\in W_k:p_{k,i}<0,\ x_{k,i}+p_{k,i}=l_{k,i}\},
\end{aligned}
\end{equation}
and the composite first-order change $
\Delta_k(\bm d)
:=
\langle\bm g_k,\bm d\rangle
+h(\bm x_k+\bm d)-h(\bm x_k),$ and set $
\sigma(\vartheta)
:=
1-\frac{1}{4(1-\vartheta)}, ~\bar \sigma = \sigma(\bar \vartheta)>1/2$, where $\bar\vartheta<1/2$ is the uniform bound in \eqref{eq:inexact-condition}. Set $\bm d_k(\lambda)
:=
\bm z_k(\lambda)-\bm x_k,~
\omega_{k,i}:=\operatorname{sign}(p_{k,i})(v_{k,i}-\alpha_{k,i})\ge 0$.

\begin{algorithm}[t] \caption{Inexact Proximal Newton Method} \label{alg:ipqn} \begin{algorithmic}[1] \Require $\boldsymbol{x}_0$, $\gamma>1$, $\eta\in(0,1)$,
$\{\nu_k\}_{k\geq0}$ with $\nu_k>0$,
$\{\vartheta_k\}_{k\geq0}$ with
$0\leq\vartheta_k\leq\bar{\vartheta}<1/2$, and $\epsilon>0$
\For{$k=0,1,\ldots$} 
\State Evaluate $\bm g_k$ and $\bm r(\bm x_k)$
\State \textbf{if} $\|\bm{r}(\bm{x}_k)\|\leq\epsilon$, \textbf{return} $\bm{x}_k$

\State Compute $W_k$, $N_k$, and $\{I_{k,i},\alpha_{k,i}\}_{i\in W_k}$ via \eqref{eq:set}, \eqref{eq:interval}, and \eqref{eq:slope}

\State Construct $\bm{B}_k$ satisfying \eqref{eq:bounded-hessian}

\State Compute $\bm{p}_k\in C_k$ satisfying \eqref{eq:inexact-condition}

\State Backtrack to the first trial $\lambda_{k,j_k}$ satisfying
\eqref{eq:armijo}

\State $\bm{x}_{k+1}\gets \bm{z}_k(\lambda_{k,j_k})$ \EndFor \end{algorithmic} \end{algorithm}

\begin{theorem}
    \label{the:1}
    Let $\bm p_k\in C_k\setminus\{\bm0\}$ satisfy \eqref{eq:inexact-condition}, and
$\lambda\ge\beta_k$, then, for every $i\in[n]$
\begin{equation}
    d_{k,i}(\lambda)
=
\begin{cases}
\dfrac{\beta_k}{\lambda}p_{k,i},
    & i\notin D_k,\\[1.2ex]
\min\!\left\{
    1,\dfrac{\beta_k+{\omega_{k,i}}/{|p_{k,i}|}}{\lambda}
    \right\}p_{k,i},
    & i\in D_k.
\end{cases}
\label{eq:the:1-1}
\end{equation}
Consequently, $\frac{\beta_k}{\lambda}\|\bm p_k\|
\le
\|\bm d_k(\lambda)\|
\le
\|\bm p_k\|$. For every $i\in D_k$, $z_{k,i}(\lambda)=y_{k,i}$ exactly when $\lambda
    \le
    \beta_k+{\omega_{k,i}}/{\abs{p_{k,i}}}$. Finally, the entire proximal path satisfies
    \begin{equation}
        \Delta_k\bigl(\bm d_k(\lambda)\bigr)
\le
-\bar\sigma\lambda
\|\bm d_k(\lambda)\|^2.
\label{eq:the:1-2}
    \end{equation}
\end{theorem}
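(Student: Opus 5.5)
The plan is to exploit separability. By \eqref{eq:coordinate-proximal-map}, $\bm z_k(\lambda)$ is computed coordinatewise. Using \eqref{eq:correction}, $\bm g_k+\bm c_k=-\beta_k\bm p_k-\bm v_k$. The scalar optimality condition for $d_{k,i}=d_{k,i}(\lambda)$ therefore reads
\begin{equation}
\beta_k p_{k,i}+v_{k,i}-\lambda d_{k,i}\in\partial h_i\bigl(x_{k,i}+d_{k,i}\bigr).\nonumber
\end{equation}
Because the proximal subproblem has a unique minimizer, it suffices to verify that the proposed formula \eqref{eq:the:1-1} satisfies this inclusion.

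\textbf{Step 1: closed form.} I would split into cases.
\begin{itemize}
\item $i\in N_k$, or $i\in W_k$ with $p_{k,i}=0$. Here $d_{k,i}=0$ works, since $v_{k,i}\in\partial h_i(x_{k,i})$.
\item $i\in W_k\setminus D_k$ with $p_{k,i}\neq0$. By \eqref{eq:interval}, $x_{k,i}$ lies in $I_{k,i}$, and $x_{k,i}+p_{k,i}$ does not hit the endpoint in the direction of motion. Hence every point $x_{k,i}+tp_{k,i}$ with $t\in(0,1]$ lies in the open affine piece with slope $\alpha_{k,i}$. In particular $v_{k,i}=\alpha_{k,i}$, and $t=\beta_k/\lambda\in(0,1]$ satisfies the inclusion.
\item $i\in D_k$, say $p_{k,i}>0$ (the case $p_{k,i}<0$ is symmetric). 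Here $\partial h_i(u_{k,i})=[\alpha_{k,i},\alpha']$ and $\omega_{k,i}=v_{k,i}-\alpha_{k,i}\ge0$.
  \begin{itemize}
  \item A candidate $t<1$ needs subgradient $\alpha_{k,i}$. This forces $\lambda tp_{k,i}=\beta_kp_{k,i}+\omega_{k,i}$, i.e.\ $t=(\beta_k+\omega_{k,i}/|p_{k,i}|)/\lambda$.
  \item The candidate $t=1$ needs $v_{k,i}-(\lambda-\beta_k)p_{k,i}\in[\alpha_{k,i},\alpha']$. The upper bound holds because $v_{k,i}\le\alpha'$. The lower bound is exactly $\lambda\le\beta_k+\omega_{k,i}/|p_{k,i}|$.
  \end{itemize}
  This gives the $\min\{1,\cdot\}$ formula and the statement that the kink is retained exactly for $\lambda\le\beta_k+\omega_{k,i}/|p_{k,i}|$.
\end{itemize}
The norm bounds follow because every ratio $t_{k,i}:=d_{k,i}/p_{k,i}$ lies in $[\beta_k/\lambda,1]$.

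\textbf{Step 2: descent bound \eqref{eq:the:1-2}.} Since $x_{k,i}+d_{k,i}\in I_{k,i}$, $h$ is exactly linear along $\bm d_k(\lambda)$. Hence
\begin{equation}
\Delta_k(\bm d_k(\lambda))=\sum_{i\in W_k}t_{k,i}p_{k,i}\,[\bm a_k]_i .\nonumber
\end{equation}
Next I would substitute the KKT representation $\bm a_k=\bm e_k-\bm B_k^W\bm p_{k,W_k}-\bm n_k$ from \eqref{eq:inexact-condition}, with $\bm n_k\in N_{C_{k,W_k}}(\bm p_{k,W_k})$. The sign pattern of the normal cone gives $n_{k,i}p_{k,i}\ge0$. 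Nonzero $n_{k,i}$ occur only when the step moves toward the bound, i.e.\ on $D_k$, so the $\bm n_k$-terms contribute nonpositively.

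An alternative route is convexity with the subgradient from the optimality condition. It gives
\begin{equation}
\Delta_k(\bm d)\le-\langle\bm c_k,\bm d\rangle-\lambda\|\bm d\|^2 ,\nonumber
\end{equation}
and reduces the claim to $|\langle\bm c_k,\bm d\rangle|\le\frac{\lambda}{4(1-\vartheta_k)}\|\bm d\|^2$. I would use whichever of the two routes closes more cleanly.

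\textbf{Main obstacle.} The hard part is controlling the cross term $\sum_i t_{k,i}p_{k,i}[\bm B_k^W\bm p_k]_i$, because the ratios $t_{k,i}$ are non-uniform across coordinates. Two facts help:
\begin{itemize}
\item For $i\notin D_k$ one has $\lambda t_{k,i}=\beta_k$. For undamped $i\in D_k$, $\lambda t_{k,i}\le\beta_k+\omega_{k,i}/|p_{k,i}|$, and the extra $\omega$-contribution is exactly offset by the normal-cone term, since $v_{k,i}-\alpha_{k,i}$ is the projected part.
\item With these, the quantity $\lambda\|\bm d\|^2=\sum_i\lambda t_{k,i}^2p_{k,i}^2$ can be compared with $\beta_k\|\bm p_k\|^2=\bm p_k^\top\bm B_k\bm p_k$.
\end{itemize}
I would then bound the $\bm e_k$-term by Cauchy--Schwarz in the $\bm B_k^W$ and $(\bm B_k^W)^{-1}$ norms, using $\|\bm e_k\|_{(\bm B_k^W)^{-1}}\le\vartheta_k\|\bm p_k\|_{\bm B_k^W}$. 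Finally, a completing-the-square (Young) step with $\vartheta_k\le\bar\vartheta<1/2$ should produce the constant $1-\frac{1}{4(1-\vartheta)}$ and hence $\bar\sigma$.
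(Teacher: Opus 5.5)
Your Step~1 is sound. Checking each candidate against $\beta_kp_{k,i}+v_{k,i}-\lambda d_{k,i}\in\partial h_i(x_{k,i}+d_{k,i})$ and invoking uniqueness of the prox is a clean substitute for the paper's monotonicity argument. The norm bounds and the kink-retention criterion then follow.

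The gap is Step~2. This is the substantive claim \eqref{eq:the:1-2}, and you leave it at ``should produce''. Substituting $\bm a_k=\bm e_k-\bm B_k^W\bm p_{k,W_k}-\bm n_k$ into $\sum_i t_{k,i}p_{k,i}[\bm a_k]_i$ with non-uniform $t_{k,i}$ creates the cross term you flag. Your proposed remedy, that the normal-cone term offsets the $\omega$-contribution, is false. Take two working coordinates with $\bm B_k^W$ having unit diagonal and off-diagonal $-0.99$. Choose $\bm a_k=(-0.505,0.49)$, so that $\bm p_k=(1,0.5)$ solves the model exactly with $\bm n_k=\bm 0$: coordinate~1 lands on its upper endpoint (a kink) and coordinate~2 is interior. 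Then $\beta_k=0.208$ and, if the kink jump is at least $0.297$, $\omega_{k,1}=-[\bm a_k]_1-\beta_k=0.297$, while $n_{k,1}=0$. The favourable kink slope comes entirely from the off-diagonal coupling, not from $\bm n_k$.

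The same example breaks your convexity route. Here $-\langle\bm c_k,\bm p_k\rangle=\langle\bm a_k,\bm p_{k,W_k}\rangle+\beta_k\|\bm p_k\|^2+\sum_{i\in D_k}\omega_{k,i}|p_{k,i}|=0.297$, whereas $\frac{\beta_k}{4(1-\vartheta_k)}\|\bm p_k\|^2\le\frac{\beta_k}{2}\|\bm p_k\|^2=0.13$. So even the one-sided reduced inequality fails at $\lambda=\beta_k$; the convexity step discards exactly the gain $\sum_{i\in D_k}\omega_{k,i}|p_{k,i}|$ at retained kinks.

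The missing idea is to use the inexactness bound only on a uniformly scaled copy of $\bm p_k$, where $\beta_k$ absorbs the Hessian, and to bound the non-uniform surplus without the KKT decomposition. Write $\bm d_k(\lambda)=\frac{\beta_k}{\lambda}\bm p_k+\sum_{i\in D_k}\operatorname{sign}(p_{k,i})\rho_{k,i}\bm e^{(i)}$ with $\rho_{k,i}=\min\{(1-\beta_k/\lambda)|p_{k,i}|,\omega_{k,i}/\lambda\}$. Linearity on the intervals gives $\Delta_k(\bm d_k(\lambda))=\frac{\beta_k}{\lambda}\Delta_k(\bm p_k)+\sum_{i\in D_k}\operatorname{sign}(p_{k,i})(g_{k,i}+\alpha_{k,i})\rho_{k,i}$.

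For the first term, pair $\bm e_k$ with $\bm p_{k,W_k}$ and use $\langle\bm n_k,\bm p_{k,W_k}\rangle\ge0$. This gives $\Delta_k(\bm p_k)\le-(1-\vartheta_k)\|\bm p_{k,W_k}\|_{\bm B_k^W}^2=-(1-\vartheta_k)\beta_k\|\bm p_k\|^2$, with no cross term.

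For the second term, $\rho_{k,i}>0$ forces $\omega_{k,i}>0$. The projection \eqref{eq:vbar} then gives $\operatorname{sign}(p_{k,i})(g_{k,i}+\beta_kp_{k,i}+v_{k,i})\le0$. Hence $-\operatorname{sign}(p_{k,i})(g_{k,i}+\alpha_{k,i})\ge\beta_k|p_{k,i}|+\omega_{k,i}\ge\beta_k|p_{k,i}|+\lambda\rho_{k,i}$.

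Set $a_i=\frac{\beta_k}{\lambda}|p_{k,i}|$ and $b_i=\rho_{k,i}$, so that $|d_{k,i}|=a_i+b_i$. Then $-\Delta_k(\bm d_k(\lambda))\ge\lambda\sum_i\bigl[(1-\vartheta_k)a_i^2+a_ib_i+b_i^2\bigr]$. The identity $(1-\vartheta)a^2+ab+b^2=\sigma(\vartheta)(a+b)^2+\frac{((1-2\vartheta)a-b)^2}{4(1-\vartheta)}$, together with $\sigma(\vartheta_k)\ge\bar\sigma$, yields \eqref{eq:the:1-2}.
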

\begin{proof}
    See Appendix \ref{appendix-B}.
\end{proof}
The backtracking stops when a sufficient descent condition is satisfied. Given \(\gamma>1\) and \(\eta\in(0,1)\), backtracking tests \(\lambda_{k,j}=\beta_k\gamma^j\) for \(j=0,1,\ldots\) and accepts the first value satisfying
\begin{equation}
F\bigl(\bm{z}_k(\lambda_{k,j})\bigr)
\le
F(\bm{x}_k)
-\eta\frac{\lambda_{k,j}}{2}
\norm{\bm{d}_k(\lambda_{k,j})}^2.
\label{eq:armijo}
\end{equation}
For the accepted \(\lambda_{k,j}\), set $\bm{x}_{k+1} = \bm{z}_k(\lambda_{k,j})$. Algorithm~\ref{alg:ipqn} summarizes the complete method.

\subsection{Algorithm properties}
\label{subsec:property}
This section derives error and descent bounds under the inexact condition, shows the benefit of using curvature information to expand the working set, and proves finite termination of proximal backtracking. For $\bm{p} \in C_k$, define the local quadratic model $m_k(\bm{p}):=\left\langle \bm{a}_k, \bm{p}_{W_k}\right\rangle+\frac{1}{2}\left\|\bm{p}_{W_k}\right\|_{\bm{B}_k^W}^2.$
\begin{lemma}
\label{lem:inexact}
Suppose \eqref{eq:bounded-hessian} holds.  Let $\bm{p}_k \in C_k$ satisfy \eqref{eq:inexact-condition} with the residual $\bm{e}_k$. Then $\bm{p}_k^\star$ is unique and
\begin{align}
&\left\|\bm{p}_{k, W_k}-\bm{p}_{k, W_k}^{\star}\right\|_{\bm{B}_k^W}
\leq \left\|\bm{e}_k\right\|_{\left(\bm{B}_k^W\right)^{-1}} \label{eq:lem-2-1}\\
&0 \leq m_k\left(\bm{p}_k\right)-m_k\left(\bm{p}_k^{\star}\right)
\leq \frac{1}{2}\left\|\bm{e}_k\right\|_{\left(\bm{B}_k^W\right)^{-1}}^2\label{eq:lem-2-2}.
\end{align}
Moreover, $\bm{p}_k = 0$ if and only if $\bm{r}(\bm{x}_k)=0$.
\end{lemma}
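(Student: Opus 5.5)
Uniqueness of $\bm p_k^\star$ is immediate. Restricted to $\bm p_{W_k}$ with $\bm p_{N_k}=\bm 0$, problem \eqref{eq:exact-Newton-direction} minimizes the strongly convex quadratic $m_k$ (modulus $m$ by \eqref{eq:bounded-hessian}) over the nonempty closed convex box $C_{k,W_k}$. Its unique minimizer is characterized by $\bm 0\in \bm q_k(\bm p^\star_{k,W_k})+N_{C_{k,W_k}}(\bm p^\star_{k,W_k})$.

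For \eqref{eq:lem-2-1} I will use strong monotonicity. Write $\bm B=\bm B_k^W$, $\bm u=\bm p_{k,W_k}$ and $\bm u^\star=\bm p^\star_{k,W_k}$. By \eqref{eq:inexact-condition} we have $\bm e_k-\bm B\bm u-\bm a_k=\bm n_k\in N_C(\bm u)$, and also $-\bm B\bm u^\star-\bm a_k\in N_C(\bm u^\star)$. Monotonicity of the normal cone gives $\langle \bm e_k-\bm B(\bm u-\bm u^\star),\bm u-\bm u^\star\rangle\ge0$, so $\|\bm u-\bm u^\star\|_{\bm B}^2\le\langle \bm e_k,\bm u-\bm u^\star\rangle\le \|\bm e_k\|_{\bm B^{-1}}\|\bm u-\bm u^\star\|_{\bm B}$. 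The last step is Cauchy--Schwarz in the dual pair of norms. Dividing yields \eqref{eq:lem-2-1}.

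For \eqref{eq:lem-2-2}, the lower bound holds by optimality of $\bm p_k^\star$ and feasibility of $\bm p_k$. For the upper bound, I use exactness of the quadratic expansion around $\bm u$: for any feasible $\bm w$,
\[
m_k(\bm w)=m_k(\bm u)+\langle \bm q_k(\bm p_k),\bm w-\bm u\rangle+\tfrac12\|\bm w-\bm u\|_{\bm B}^2 .
\]
Since $\bm q_k(\bm p_k)=\bm e_k-\bm n_k$ and $\langle \bm n_k,\bm w-\bm u\rangle\le0$, this gives
\[
m_k(\bm w)\ge m_k(\bm u)+\langle \bm e_k,\bm w-\bm u\rangle+\tfrac12\|\bm w-\bm u\|_{\bm B}^2\ge m_k(\bm u)-\tfrac12\|\bm e_k\|_{\bm B^{-1}}^2 .
\]
The second inequality comes from minimizing the right side over all $\bm w\in\mathbb R^{|W_k|}$. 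Taking $\bm w=\bm u^\star$ finishes the bound.

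The equivalence $\bm p_k=\bm 0\iff \bm r(\bm x_k)=\bm 0$ is the main obstacle, because it links the model to the set construction of \eqref{eq:set}. The key step is $\bm p_k=\bm 0\iff \bm p_k^\star=\bm 0$. If $\bm p_k=\bm 0$, then \eqref{eq:inexact-condition} forces $\varepsilon_k(\bm 0)=0$, so $\bm 0$ satisfies the KKT conditions and equals $\bm p^\star_k$. Conversely, suppose $\bm p_k^\star=\bm 0$. Then \eqref{eq:lem-2-1} gives $\|\bm p_k\|_{\bm B}\le\varepsilon_k\le\vartheta_k\|\bm p_k\|_{\bm B}$, and since $\vartheta_k<1$ this yields $\bm p_k=\bm 0$.

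It remains to show $\bm p_k^\star=\bm 0\iff\bm r(\bm x_k)=\bm 0$, which I will check coordinatewise via $\bm 0\in \bm a_k+N_C(\bm 0)$ and Proposition~\ref{prop:1}. There are three kinds of coordinates.
\begin{itemize}
\item For $i\in S_k$, $x_{k,i}$ is interior to $I_{k,i}$, so the KKT condition reads $g_{k,i}+h_i'(x_{k,i})=0$, which is exactly $r_i=0$.
\item For $i\in J_k$ with $\xi_{k,i}=+1$, the interval is $[x_{k,i},b^+]$ and $0$ is its left endpoint. The condition is $g_{k,i}+h'_{i,+}\ge0$, i.e. $\delta_{k,i}=F'(\bm x_k;\bm e_i)\ge0$; the case $\xi_{k,i}=-1$ is symmetric. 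So KKT holds iff $\delta_{k,i}\ge0$, i.e. $i\notin V_k$, which for $i\in Z_k$ means $r_i=0$ by \eqref{eq:residual-directional-violation}.
\item For $i\in N_k$, we have $i\in Z_k\setminus V_k$, hence $\delta_{k,i}\ge0$ and $r_i=0$ automatically.
\end{itemize}
Hence $\bm p_k^\star=\bm 0$ iff $S_k$-stationarity holds and $V_k=\emptyset$, which is precisely $\bm r(\bm x_k)=\bm 0$. The one point needing care is that for $i\in J_k$ the chosen sign $\xi_{k,i}$ attains the minimum directional derivative, so nonnegativity along $\xi_{k,i}$ gives nonnegativity along both signs.
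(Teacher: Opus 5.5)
Your proof is correct. The two bounds \eqref{eq:lem-2-1} and \eqref{eq:lem-2-2} follow the paper's argument. The first uses monotonicity of the normal cone plus Cauchy--Schwarz in the $\bm B_k^W$/$(\bm B_k^W)^{-1}$ pairing. The second uses the strong-convexity subgradient inequality with $\bm e_k$ as a subgradient, followed by Young's inequality; your exact expansion with $\langle\bm n_k,\bm w-\bm u\rangle\le 0$ and minimization over $\bm w$ reproduces exactly that.

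Where you differ is the equivalence $\bm p_k=\bm 0\iff\bm r(\bm x_k)=\bm 0$. For the direction $\bm r(\bm x_k)=\bm 0\Rightarrow\bm p_k=\bm 0$, the paper never looks at the working set. It uses $-\bm g_k\in\partial h(\bm x_k)$ and convexity of $h$ to get $\Delta_k(\bm p_k)\ge 0$. It then combines this with the descent estimate $\Delta_k(\bm p_k)\le-(1-\vartheta_k)\|\bm p_{k,W_k}\|^2_{\bm B_k^W}$ derived in the proof of Theorem~\ref{the:1}. Only the converse direction uses the coordinatewise KKT analysis at the origin, as you do.

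You instead insert the intermediate equivalence $\bm p_k=\bm 0\iff\bm p_k^\star=\bm 0$. One way uses $\varepsilon_k(\bm 0)=0$; the other uses \eqref{eq:lem-2-1} with $\vartheta_k<1$. You then turn the coordinatewise KKT check into a two-sided characterization.

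Your route keeps the lemma self-contained, with no appeal to the composite change $\Delta_k$ or to the line-search descent inequality. It also gives an explicit by-product: $\bm p_k^\star=\bm 0$ exactly when $g_{k,i}+h_i'(x_{k,i})=0$ on $S_k$ and $V_k=\varnothing$, with the $N_k$ coordinates automatically stationary. The paper's route is shorter for that direction because it recycles an inequality it needs anyway for Theorems~\ref{the:1}--\ref{the:2}.
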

\begin{proof}
    See Appendix \ref{appendix-C}.
\end{proof}
We next clarify the role of the small-margin set \(E_k\). The following two propositions show that its coordinates become nonzero only under sufficiently negative coupling and quantify the resulting model decrease, accounting for inexactness.
\begin{proposition}
\label{prop:2}
    Let $\bm{p}_k^\star$ be the exact minimizer of \eqref{eq:exact-Newton-direction}. For $j\in E_k$, $p^\star_{k,j} \neq 0$ if and only if 
    \begin{equation}
-\xi_{k, j} \sum_{i \in W_k \backslash\{j\}}\left(\bm{B}_k\right)_{j i} p_{k, i}^{\star}>\delta_{k, j} .
\label{eq:prop-2-1}
\end{equation}
\end{proposition}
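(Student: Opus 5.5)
Since the problem \eqref{eq:exact-Newton-direction} is a strongly convex quadratic over a box, the natural tool is the coordinatewise KKT characterization of $\bm p_k^\star$ at index $j$, with the other coordinates held fixed. Fix $j\in E_k$ and write $\xi=\xi_{k,j}$. By \eqref{eq:interval}, the feasible interval for $p_j$ is one-sided starting at $0$. It is $[0,\,b^+_{k,j}-x_{k,j}]$ if $\xi=+1$ and $[b^-_{k,j}-x_{k,j},\,0]$ if $\xi=-1$. In either case it has nonempty interior containing points with $\xi p_j>0$, as long as the relevant breakpoint exists; if it does not, the interval is a half-line. The slope is $\alpha_{k,j}=h'_{j,\xi}(x_{k,j})$, so by \eqref{eq:F-prime}
\begin{equation}
\xi\,(g_{k,j}+\alpha_{k,j}) = F'(\bm x_k;\xi\bm e_j)=\delta_{k,j}\ge 0 .
\nonumber
\end{equation}

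The main step uses the fact that $\bm p^\star_k$ minimizes the model over $C_k$. It therefore minimizes the one-dimensional restriction
\begin{equation}
\phi(t)=\Bigl(a_{k,j}+\sum_{i\in W_k\setminus\{j\}}(\bm B_k)_{ji}p^\star_{k,i}\Bigr)t+\tfrac12(\bm B_k)_{jj}t^2
\nonumber
\end{equation}
over the feasible interval for $p_j$. Here $(\bm B_k)_{jj}\ge m>0$ by \eqref{eq:bounded-hessian}. Substituting $t=\xi s$ with $s\ge 0$ gives
\begin{equation}
\phi=\bigl(\delta_{k,j}+\xi\textstyle\sum_{i\ne j}(\bm B_k)_{ji}p^\star_{k,i}\bigr)s+\tfrac12(\bm B_k)_{jj}s^2 ,
\nonumber
\end{equation}
which is minimized over $s\in[0,\bar s]$ with $\bar s>0$.

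A strictly convex scalar quadratic on $[0,\bar s]$ has its minimizer at $s=0$ if and only if its derivative at $0$ is nonnegative. That derivative is $\delta_{k,j}+\xi\sum_{i\ne j}(\bm B_k)_{ji}p^\star_{k,i}$. Hence $p^\star_{k,j}\neq 0$ if and only if this derivative is negative, which is exactly \eqref{eq:prop-2-1}. Uniqueness of $\bm p_k^\star$ (Lemma~\ref{lem:inexact}) ensures that "the" coordinate $p^\star_{k,j}$ is well defined, so the equivalence holds for the unique minimizer.

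The only delicate point is justifying the reduction to the one-dimensional problem with the other components fixed. This follows because $C_k$ is a product of intervals: $\bm p^\star_k$ is optimal in particular among feasible perturbations of coordinate $j$ alone. Conversely, the scalar KKT condition at $j$ is one component of the full KKT system, so no further argument is needed. I would also check that the interval is nondegenerate, i.e.\ $\bar s>0$; this holds since $b^\pm_{k,j}$ lies strictly on the $\xi$ side of $x_{k,j}$.
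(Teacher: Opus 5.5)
Your proposal is correct and follows essentially the same argument as the paper. Both fix the other coordinates at $\bm p_k^\star$ and reduce to the strictly convex scalar quadratic in $t=\xi_{k,j}p_j\ge 0$, whose linear coefficient is $\delta_{k,j}+\xi_{k,j}\sum_{i\in W_k\setminus\{j\}}(\bm B_k)_{ji}p^\star_{k,i}$, and conclude that its minimizer is positive exactly when this derivative at the origin is negative; your explicit treatment of the upper endpoint $\bar s$, which the paper omits by writing only $t\ge 0$, is extra care that does not change the argument.
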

\begin{proof}
    See Appendix \ref{appendix-D}.
\end{proof}

\begin{proposition}
\label{prop:3}
    Let $\widehat{\bm{p}}_k$ be the exact solution of the model where $E_k = \varnothing$, $\widehat{\bm{p}}_k=\underset{\bm{p} \in \mathcal{C}_k, \bm{p}_{E_k}=\bm{0}}{\operatorname{argmin}} m_k(\bm{p})$. Let $\bm{p}_k^\star$ be the exact minimizer of \eqref{eq:exact-Newton-direction}. Then
    \begin{equation}
m_k\left(\widehat{\bm{p}}_k\right)-m_k\left(\bm{p}_k^{\star}\right) \geq \frac{1}{2}\left\|\widehat{\bm{p}}_{k, W_k}-\bm{p}_{k, W_k}^{\star}\right\|_{\bm{B}_k^W}^2 \geq 0 .
\end{equation}
Moreover $m_k\left(\bm{p}_k^{\star}\right)<m_k\left(\widehat{\bm{p}}_k\right)$ if and only if $\bm{p}_{k, E_k}^{\star} \neq \bm{0}$. For the computed inexact step $\bm{p}_k$, it further gives
\begin{equation}
\begin{aligned}
m_k\left(\widehat{\bm{p}}_k\right)-m_k\left(\bm{p}_k\right) \geq \frac{1}{2}&\left(\left\|\widehat{\bm{p}}_{k, W_k}-\bm{p}_{k, W_k}^{\star}\right\|_{\bm{B}_k^W}^2\right. \\
&\left.-\left\|\bm{e}_k\right\|_{\left(\bm{B}_k^W\right)^{-1}}^2\right) .
\end{aligned}
\label{eq:prop-3-1}
\end{equation}
\end{proposition}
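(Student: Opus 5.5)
The plan is to treat $m_k$ restricted to $C_k$ as a strongly convex quadratic in $\bm p_{W_k}$ with Hessian $\bm B_k^W$, and use the standard quadratic-growth inequality at a constrained minimizer. Since $C_k$ is a box (hence convex) and $\bm p_k^\star$ minimizes $m_k$ over $C_k$, the variational inequality $\langle \bm q_k(\bm p^\star_k),\bm p_{W_k}-\bm p^\star_{k,W_k}\rangle\ge 0$ holds for every $\bm p\in C_k$. Because $m_k$ is exactly quadratic, we have the identity
\begin{equation}
m_k(\bm p)-m_k(\bm p_k^\star)=\langle \bm q_k(\bm p^\star_k),\bm p_{W_k}-\bm p^\star_{k,W_k}\rangle+\tfrac12\|\bm p_{W_k}-\bm p^\star_{k,W_k}\|^2_{\bm B_k^W}.\nonumber
\end{equation}
The feasible set $\{\bm p\in C_k:\bm p_{E_k}=\bm 0\}$ is a subset of $C_k$: setting $p_j=0$ corresponds to $x_{k,j}+p_j=x_{k,j}$, which is an endpoint of $I_{k,j}$ for $j\in J_k$. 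So $\widehat{\bm p}_k\in C_k$, and applying the identity with $\bm p=\widehat{\bm p}_k$ gives the first inequality directly. Nonnegativity is immediate.

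For the equivalence, I will argue both directions. Suppose $\bm p^\star_{k,E_k}\neq\bm 0$. Then $\bm p^\star_k$ is not feasible for the restricted problem, so $\widehat{\bm p}_k\neq\bm p^\star_k$. The quadratic-growth bound combined with $\bm B_k^W\succeq m\bm I$ then gives a strict decrease. Conversely, suppose $\bm p^\star_{k,E_k}=\bm 0$. Then $\bm p_k^\star$ is feasible for the restricted problem. Since it minimizes over the larger set, it also minimizes over the smaller one, and uniqueness (strong convexity, cf. Lemma~\ref{lem:inexact}) gives $\widehat{\bm p}_k=\bm p_k^\star$, hence equal model values.

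For the inexact bound \eqref{eq:prop-3-1}, I will split
\begin{equation}
m_k(\widehat{\bm p}_k)-m_k(\bm p_k)=\bigl[m_k(\widehat{\bm p}_k)-m_k(\bm p_k^\star)\bigr]-\bigl[m_k(\bm p_k)-m_k(\bm p_k^\star)\bigr].\nonumber
\end{equation}
The first bracket is bounded below by the first part. The second bracket is at most $\tfrac12\|\bm e_k\|^2_{(\bm B_k^W)^{-1}}$ by \eqref{eq:lem-2-2} in Lemma~\ref{lem:inexact}. Combining the two gives the claim.

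The only delicate point is checking that $\widehat{\bm p}_k\in C_k$, i.e.\ that zero is admissible for each $E_k$ coordinate and the restricted problem is well posed. This holds because each $I_{k,j}$ for $j\in J_k$ contains $x_{k,j}$ as an endpoint. Everything else is routine convex-quadratic algebra.
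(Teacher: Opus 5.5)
Your proposal is correct and follows the paper's proof essentially step for step. The first bound comes from the exact quadratic expansion about $\bm p_k^\star$ plus the variational inequality, strictness from the infeasibility of $\bm p_k^\star$ for the restricted problem, and \eqref{eq:prop-3-1} from the split combined with \eqref{eq:lem-2-2}. You also prove the converse half of the equivalence (if $\bm p^\star_{k,E_k}=\bm 0$ then $m_k(\widehat{\bm p}_k)=m_k(\bm p_k^\star)$), which the paper's proof does not spell out.
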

\begin{proof}
    See Appendix \ref{appendix-E}.
\end{proof}
We show the backtracking is well-defined.
\begin{theorem}
    \label{the:2}
    Suppose Assumption \ref{ass:main-1} and condition \eqref{eq:bounded-hessian} hold and let $\bm{p}_k$ satisfy \eqref{eq:inexact-condition}, then every $\lambda_{k, j} \geq \frac{L}{2 \bar{\sigma}-\eta}$ satisfies the sufficient decrease condition \eqref{eq:armijo}. Moreover, if $\lambda_k$ is the accepted value, then $\beta_k \leq \lambda_k \leq \bar{\lambda},$ where $\bar{\lambda}=\max \left\{M, \frac{\gamma L}{2 \bar{\sigma}-\eta}\right\} .$ Consequently, $\frac{\beta_k}{\lambda_k} \geq \frac{m}{\bar{\lambda}}>0 .$
\end{theorem}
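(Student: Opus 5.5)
The argument combines the descent-lemma upper bound on $q$ with the composite decrease bound \eqref{eq:the:1-2} of Theorem~\ref{the:1}. Once that yields the acceptance threshold, the bounds on $\lambda_k$ follow from the geometric structure of the backtracking.

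\emph{Acceptance threshold.} Fix $\lambda\ge\beta_k$ and write $\bm d=\bm d_k(\lambda)$, so that $\bm z_k(\lambda)=\bm x_k+\bm d$. Assumption~\ref{ass:main-1}(i) gives the quadratic upper model
\[
q(\bm x_k+\bm d)\le q(\bm x_k)+\langle\bm g_k,\bm d\rangle+\tfrac L2\|\bm d\|^2.
\]
Adding $h(\bm x_k+\bm d)$ to both sides yields
\[
F(\bm z_k(\lambda))\le F(\bm x_k)+\Delta_k(\bm d)+\tfrac L2\|\bm d\|^2.
\]
Because $\bm p_k\in C_k\setminus\{\bm 0\}$ satisfies \eqref{eq:inexact-condition}, \eqref{eq:the:1-2} applies along the whole proximal path. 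Substituting it gives
\[
F(\bm z_k(\lambda))\le F(\bm x_k)-\bigl(\bar\sigma\lambda-\tfrac L2\bigr)\|\bm d\|^2.
\]
This implies \eqref{eq:armijo} whenever $\bar\sigma\lambda-\tfrac L2\ge\tfrac{\eta\lambda}{2}$, which is equivalent to $(2\bar\sigma-\eta)\lambda\ge L$. Since $\bar\sigma>1/2$ and $\eta<1$, we have $2\bar\sigma-\eta>0$. Hence every trial with $\lambda_{k,j}\ge L/(2\bar\sigma-\eta)$ is accepted. (If $\bm p_k=\bm 0$, then $\bm r(\bm x_k)=\bm 0$ by Lemma~\ref{lem:inexact} and the algorithm has already stopped, so $\beta_k$ is well defined.)

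\emph{Bound on the accepted value.} The trials are $\lambda_{k,j}=\beta_k\gamma^j$, which increase to infinity, so some trial exceeds the threshold and backtracking terminates finitely. Let $\lambda_k=\lambda_{k,j_k}$ be the accepted value. The lower bound $\lambda_k\ge\beta_k$ is immediate.

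\begin{itemize}
\item If $j_k=0$, then $\lambda_k=\beta_k\le M$, since $\beta_k$ is a Rayleigh quotient of $\bm B_k$ along $\bm p_k$. Here I use that $\bm p_k$ is supported on $W_k$, so the quotient reduces to $\bm B_k^W$, and then apply \eqref{eq:bounded-hessian}.
\item If $j_k\ge1$, the previous trial $\lambda_k/\gamma$ was rejected. By the acceptance threshold, this forces $\lambda_k/\gamma<L/(2\bar\sigma-\eta)$, i.e., $\lambda_k<\gamma L/(2\bar\sigma-\eta)$.
\end{itemize}
In either case $\lambda_k\le\bar\lambda$. Finally, the same support argument gives $\beta_k\ge m$, so $\beta_k/\lambda_k\ge m/\bar\lambda>0$.

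\emph{Main obstacle.} The real difficulty lies in \eqref{eq:the:1-2}, which I am taking from Theorem~\ref{the:1}. Within this proof, the only delicate points are two checks. First, $\beta_k\in[m,M]$ must hold for the full-space $\bm B_k$, which relies on $\bm p_{k,N_k}=\bm 0$. Second, the constant $2\bar\sigma-\eta$ must be positive, which follows from $\bar\vartheta<1/2$ giving $\bar\sigma>1/2$.
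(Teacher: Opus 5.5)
Your proof is correct and follows the paper's argument essentially step by step. You combine the descent lemma with the path bound \eqref{eq:the:1-2} to get the threshold $L/(2\bar\sigma-\eta)$, use the geometric trial sequence to bound $\lambda_k$ by $\max\{M,\gamma L/(2\bar\sigma-\eta)\}$, and use $\beta_k\in[m,M]$ for the ratio. Your explicit checks that $2\bar\sigma-\eta>0$, that $\beta_k\in[m,M]$ relies on $\bm p_{k,N_k}=\bm 0$, and that the case $\bm p_k=\bm 0$ is excluded by the stopping test are details the paper leaves implicit.
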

\begin{proof}
    See Appendix \ref{appendix-F}.
\end{proof}

\section{Convergence Analysis}
This section analyzes convergence to stationary points, finite identification, and the local convergence rate of the proposed algorithm. We first establish a uniform descent bound for accepted steps and show that every accumulation point is stationary. The main difficulty is that a coordinate may approach a kink from either side.
\begin{theorem}
\label{thm:3}
Suppose assumptions of Theorem \ref{the:2} hold. Then
\begin{equation}
F(\bm x_k)-F(\bm x_{k+1})
\ge
\frac{\eta m^2}{2\bar\lambda}\|\bm p_k\|^2.
\label{eq:global-descent}
\end{equation}
Consequently, $\sum_{k=0}^{\infty}\left\|\bm{p}_k\right\|^2<\infty$ and $\lim_{k\to\infty}\|\bm{p}_k\|
=
\lim_{k\to\infty}\|\bm{d}_k\|
=
\lim_{k\to\infty}\|\bm{e}_k\|
=0.$ Moreover, every accumulation point of $\{\bm x_k\}$ is stationary.
In particular,
\begin{equation}
\operatorname{dist}(\bm x_k,\mathcal X)\to0.
\label{eq:distance-stationary}
\end{equation}
\end{theorem}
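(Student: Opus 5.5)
The descent bound \eqref{eq:global-descent} follows by chaining the acceptance test \eqref{eq:armijo} with the lower bounds from Theorem~\ref{the:1} and Theorem~\ref{the:2}. At the accepted value $\lambda_k$ we have $F(\bm x_k)-F(\bm x_{k+1})\ge \eta\frac{\lambda_k}{2}\|\bm d_k(\lambda_k)\|^2$. By Theorem~\ref{the:1}, $\|\bm d_k(\lambda_k)\|\ge \frac{\beta_k}{\lambda_k}\|\bm p_k\|$, so the decrease is at least $\frac{\eta\beta_k^2}{2\lambda_k}\|\bm p_k\|^2$. Since $\beta_k\ge m$ and $\lambda_k\le\bar\lambda$ (Theorem~\ref{the:2}), \eqref{eq:global-descent} follows; the case $\bm p_k=\bm 0$ is trivial.

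Telescoping and using that $F$ is bounded below gives $\sum_k\|\bm p_k\|^2<\infty$, hence $\|\bm p_k\|\to0$. Then $\|\bm d_k\|\le\|\bm p_k\|\to0$, and $\|\bm e_k\|\le M\vartheta_k\|\bm p_k\|\to0$. All iterates lie in the compact set $\mathcal L_0$, so accumulation points exist.

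For stationarity, I take $\bm x_{k}\to\bm x^\ast$ along a subsequence $\mathcal K$. The optimality condition of the proximal step \eqref{eq:main-step} gives
\[
\bm u_k:=-\bm g_k-\bm c_k-\lambda_k\bm d_k\in\partial h(\bm x_{k+1}).
\]
By \eqref{eq:correction}, $\bm c_k+\bm g_k=-\beta_k\bm p_k-\bm v_k$. Hence $\bm g_{k}+\bm u_k=\bm g_k-\bm g_k+\beta_k\bm p_k+\bm v_k-\lambda_k\bm d_k-\bm g_k$, which simplifies to
\[
\bm g(\bm x_{k+1})+\bm u_k=\bigl(\bm g(\bm x_{k+1})-\bm g_k\bigr)+\bm v_k+\beta_k\bm p_k-\lambda_k\bm d_k .
\]
The last two terms vanish because $\beta_k,\lambda_k\le\bar\lambda$. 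The first vanishes by Lipschitz continuity, since $\|\bm x_{k+1}-\bm x_k\|\to0$. This leaves the term $\bm v_k$, which is not small in general, so this direct route alone fails.

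I therefore argue through the direction. From the KKT system, $\bm a_k+\bm B_k^W\bm p_{k,W_k}+\bm n_k=\bm e_k$ with $\bm n_k\in N_{C_{k,W_k}}(\bm p_{k,W_k})$. Since $\bm p_k\to0$ and $\bm e_k\to0$, we get $\bm g_{k,W_k}+\bm\alpha_{k,W_k}+\bm n_k\to0$.

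Along $\mathcal K$, I pass to a further subsequence on which $S_k,V_k,E_k,N_k$ and the signs $\xi_{k,i}$ are constant; there are finitely many patterns. I then check Proposition~\ref{prop:1}(ii) at $\bm x^\ast$ coordinate by coordinate, which is the main obstacle, since coordinates may approach a kink from either side.

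For $i\in N_k$ we have $\delta_{k,i}\ge\nu_k>0$, so both one-sided derivatives are nonnegative. This gives $-g_{k,i}\in[h'_{i,-}(x_{k,i}),h'_{i,+}(x_{k,i})]$, and the closedness (outer semicontinuity) of the graph of $\partial h_i$ together with continuity of $\bm g$ passes the inclusion to the limit.

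For $i\in W_k$, the normal-cone term $n_{k,i}$ is zero unless the endpoint $x_{k,i}+p_{k,i}$ lies at the boundary of $I_{k,i}$. If $n_{k,i}=0$ eventually, then $-g_{k,i}-\alpha_{k,i}\to0$ with $\alpha_{k,i}\in\partial h_i(x_{k,i})$, and closedness again gives $-g_i(\bm x^\ast)\in\partial h_i(x_i^\ast)$.

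If $n_{k,i}\neq0$, then $n_{k,i}$ has the sign pointing outward from the interval and $x_{k,i}+p_{k,i}$ equals a breakpoint $b$. Because $\mathcal B$ is finite and $p_{k,i}\to0$, we get $x^\ast_i=b$. The quantity $-g_{k,i}-\alpha_{k,i}\approx n_{k,i}$ then points toward the far side of the kink.

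Combining these, $-g_{k,i}$ is approximately $\alpha_{k,i}+n_{k,i}$, which lies in the normal-cone-shifted slope set. Since $\alpha_{k,i}$ is one neighbouring slope of $h_i$ at $b$ and $n_{k,i}$ points toward the other, this set is contained in $\partial h_i(b)$ as long as $|n_{k,i}|$ stays below the slope jump.

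To control $|n_{k,i}|$, I would use $\bm v_k$. By \eqref{eq:vbar}, $v_{k,i}$ is the projection of $-g_{k,i}-\beta_kp_{k,i}$ onto $\partial h_i(b)$, and the bounded difference $\bm v_k-\bm\alpha_k$ is what absorbs $\bm n_k$. If $n_{k,i}$ overshot the slope jump, the corresponding one-sided directional derivative at $x_{k+1}$ would be strictly negative, contradicting the closure argument. If this step does not close, I would fall back on continuity of $\bm G_\sigma$: I would show $\|\bm G_\sigma(\bm x_{k+1})\|\to0$ from the proximal step with $\bm c_k$ bounded by $\bm n_k+\bm e_k$.

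Finally, $\operatorname{dist}(\bm x_k,\mathcal X)\to0$ follows by contradiction. A subsequence staying a distance $\epsilon$ away from $\mathcal X$ would, by compactness of $\mathcal L_0$, have an accumulation point, which is stationary by the above and so lies in $\mathcal X$, a contradiction.
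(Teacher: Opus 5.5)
You have the descent bound, the limits of $\bm p_k,\bm d_k,\bm e_k$, the case $i\in N_k$, the case of working coordinates whose normal component vanishes, and the final distance argument right. There is a genuine gap, though, at the case you yourself flag as the main obstacle. It is a coordinate $i\in S_k$ with $x_{k,i}$ strictly on one side of a kink $b=\bar x_i$ whose model endpoint lands exactly on $b$. The QP only sees the slope of the current piece, so its KKT relation gives no bound on the multiplier $n_{k,i}$. Nothing in the QP prevents $n_{k,i}$ from exceeding the slope jump $h'_{i,+}(b)-h'_{i,-}(b)$ along the whole subsequence, and then $-g_i(\bar{\bm x})$ lies beyond $\partial h_i(b)$ on the far side. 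Your remark that ``the one-sided directional derivative at $\bm x_{k+1}$ would be strictly negative'' presupposes $x_{k+1,i}=b$, which is not automatic: by Theorem~\ref{the:1}, a coordinate in $D_k$ is pulled back off the kink unless $\lambda_k\le\beta_k+\omega_{k,i}/|p_{k,i}|$. The fallback through $\bm G_\sigma$ is circular. Indeed $|c_{k,i}|=\operatorname{dist}(-g_{k,i}-\beta_kp_{k,i},\partial h_i(y_{k,i}))$ is exactly the endpoint violation you are trying to show vanishes. Incidentally, the leftover term in your direct route is $\bm g_k+\bm v_k=-\bm c_k-\beta_k\bm p_k$, not $\bm v_k$, so it is the same quantity.

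The missing idea, which is how the paper closes this case, is to use the line-search mechanism inside a contradiction argument. Suppose $\xi\bigl(g_i(\bar{\bm x})+h'_{i,\xi}(b)\bigr)<0$, with $x_{k_j,i}$ on the $-\xi$ side and $y_{k_j,i}=b$. Then eventually $\xi\bigl(-g_{k_j,i}-\beta_{k_j}p_{k_j,i}-h'_{i,\xi}(b)\bigr)>0$. Hence the projection in \eqref{eq:vbar} clips to $v_{k_j,i}=h'_{i,\xi}(b)$, and $\omega_{k_j,i}$ equals the full slope jump, a fixed positive number. Since $\lambda_{k_j}\le\bar\lambda$ and $p_{k_j,i}\to0$, the threshold in Theorem~\ref{the:1} eventually holds, which gives $x_{k_j+1,i}=b$ exactly. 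Now pass to the shifted subsequence $\bm x_{k_j+1}\to\bar{\bm x}$, which is valid since $\bm d_k\to\bm 0$. There, coordinate $i$ sits at the kink with violated direction $\xi$, so it enters $V_{k_j+1}$ with $\xi_{k_j+1,i}=\xi$. Its interval then lies on the $\xi$ side with slope $h'_{i,\xi}(b)$, and the normal component at the near endpoint satisfies $\xi n\le 0$. The residual relation then yields $\xi e_{k_j+1,i}\le\xi\bigl(g_{k_j+1,i}+h'_{i,\xi}(b)\bigr)+M\|\bm p_{k_j+1}\|$. The right-hand side tends to a negative limit, contradicting $\bm e_k\to\bm 0$. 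Separately, for kink coordinates already in $J_k$, your closure argument needs a fact you did not state. Choosing $\xi_{k,i}$ as the argmin makes the opposite one-sided derivative positive at a kink, and it is this that bounds $|n_{k,i}|$ there.
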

\begin{proof}
    See Appendix \ref{appendix-G}.
\end{proof}
Theorem \ref{thm:3} ensures that every accumulation point is stationary, but it does not yet guarantee convergence to a single stationary point or finite stabilization of the polyhedral structure.
\subsection{Finite identification}
The identification analysis has two steps. First, we use strict complementarity and reduced-Hessian non-singularity to isolate \(\bm x^\star\), which upgrades sub-sequential convergence to convergence of the full sequence. Second, the strict-complementarity margin, together with our proximal update \eqref{eq:main-step}, forces every coordinate in \(Z^\star\) to reach its limiting kink in finitely many iterations and prevents it from being moved again. The analysis is inspired by \cite{hare2004identifying}. We first recall some basic definitions and introduce the assumptions. Let \(\bm x^\star\) be an accumulation point of \({\bm x_k}\). Then \(\bm x^\star\in\mathcal X\) by Theorem \ref{thm:3}. Recall the free and active index sets $i^\star\notin\mathcal B,~
Z^\star=[n]\setminus S^\star.$ For each \(i\in S^\star\), let \(I_i^\star\) denote the open affine interval of \(h_i\) containing \(x_i^\star\). For the limiting kink coordinates, define the strict-complementarity margin
\begin{equation}
\Delta^{\star}:= \begin{cases}\min _{i \in Z^{\star}}\left\{F^{\prime}\left(\bm{x}^{\star} ; \bm{e}_i\right), F^{\prime}\left(\bm{x}^{\star} ;-\bm{e}_i\right)\right\}, & Z^{\star} \neq \varnothing, \\ +\infty, & Z^{\star}=\varnothing .\end{cases}
\nonumber
\end{equation}
Since \(\bm x^\star\) is stationary, \(\Delta^\star\ge0\).
\begin{assumption}
    \label{assump-2}
    Let \(\bm x^\star\) be an accumulation point of \({\bm x_k}\) and $q$ be twice continuously differentiable on a neighborhood of \(\bm x^\star\). The following conditions hold
    \begin{enumerate}[label=(\roman*)]
    \item $\Delta^\star>0,~\text{and }
\limsup_{k\to\infty}\nu_k<\Delta^\star.$
\item The reduced Hessian satisfies $\ker\paren{\bm{H}_{S^\star S^\star}(\bm{x}^\star)}=\set{\bm{0}}$.
    \end{enumerate}
\end{assumption}
The condition \(\Delta^\star>0\) is equivalent to strict complementarity at every kink coordinate of a selected accumulation point. This ensures a positive margin on either side of each kink. $\limsup_{k\to\infty}\nu_k<\Delta^\star$ ensures, near \(\bm x^\star\), a limiting kink coordinate cannot enter either \(V_k\) or \(E_k\). Non-singularity of the reduced Hessian locally isolates the stationary point on the free coordinates. The reduced Hessian may still be indefinite.
\begin{proposition}
    \label{prop:4}
    Suppose that Assumptions \ref{ass:main-1} and \ref{assump-2} hold. Then \(\bm{x}^\star\) is an isolated stationary point, and the entire sequence \(\{\bm{x}_k\}\) converges to \(\bm{x}^\star\).
\end{proposition}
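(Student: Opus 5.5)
The proof has two parts: first show that $\bm x^\star$ is an isolated point of $\mathcal X$, then upgrade this to convergence of the whole sequence using the vanishing step lengths from Theorem~\ref{thm:3}.

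\emph{Isolatedness.} Take a neighborhood $U$ of $\bm x^\star$ on which three things hold:
\begin{enumerate}[label=(\alph*)]
\item $x_i$ stays in the open affine interval $I_i^\star$ for every $i\in S^\star$;
\item $x_i$ crosses no breakpoint other than $x_i^\star$ for every $i\in Z^\star$;
\item $-g_i(\bm x)$ lies in the interior of $\partial h_i(x_i^\star)$ for every $i\in Z^\star$.
\end{enumerate}
Items (a) and (b) follow because the breakpoints of each $h_i$ are locally finite. Item (c) follows from $\Delta^\star>0$ and continuity of $\bm g$.

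Let $\bm y\in U\cap\mathcal X$. Suppose $y_i\neq x_i^\star$ for some $i\in Z^\star$. Then $h_i$ is differentiable at $y_i$, and its slope equals one of the one-sided slopes $h'_{i,\pm}(x_i^\star)$. Stationarity would force $-g_i(\bm y)$ onto an endpoint of $\partial h_i(x_i^\star)$, which contradicts (c). Hence $\bm y_{Z^\star}=\bm x^\star_{Z^\star}$.

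On the free coordinates, $h_i$ has the constant slope $h_i'(x_i^\star)$ over $I_i^\star$, so stationarity reads $\bm g_{S^\star}(\bm y)=\bm g_{S^\star}(\bm x^\star)$. Define $\Phi(\bm y_{S^\star}) := \bm g_{S^\star}(\bm y_{S^\star},\bm x^\star_{Z^\star})$. Its Jacobian at $\bm x^\star_{S^\star}$ is $\bm H_{S^\star S^\star}(\bm x^\star)$, which is nonsingular by Assumption~\ref{assump-2}(ii). The inverse function theorem, which applies because $q$ is $C^2$ near $\bm x^\star$, shows that $\Phi$ is locally injective. Shrinking $U$ therefore gives $U\cap\mathcal X=\{\bm x^\star\}$.

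\emph{Convergence of the full sequence.} By Theorem~\ref{thm:3}, $\|\bm x_{k+1}-\bm x_k\|=\|\bm d_k\|\le\|\bm p_k\|\to0$, and every accumulation point is stationary. The iterates stay in the compact set $\mathcal L_0$ by monotone descent. Choose $\rho>0$ with $\overline{B}(\bm x^\star,\rho)\cap\mathcal X=\{\bm x^\star\}$. Then the annulus $A=\{\rho/2\le\|\bm x-\bm x^\star\|\le\rho\}\cap\mathcal L_0$ is compact and contains no stationary point.

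Suppose $\bm x_k\not\to\bm x^\star$. Since $\bm x^\star$ is an accumulation point, the sequence enters $B(\bm x^\star,\rho/2)$ infinitely often, and it must also leave $B(\bm x^\star,\rho)$ infinitely often. Once the steps are shorter than $\rho/2$, every exit passes through $A$. This produces infinitely many iterates in $A$, hence an accumulation point in $A$. That point must be stationary, which contradicts the choice of $\rho$. This is the standard Moré–Sorensen / Ostrowski-type argument.

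The main obstacle is the isolatedness step, specifically showing that $\bm y_{Z^\star}=\bm x^\star_{Z^\star}$. It requires checking that a point just to one side of a kink has derivative equal to the boundary slope of $\partial h_i(x_i^\star)$. The strict margin $\Delta^\star>0$ rules this out, but only together with continuity of $\bm g$ on the neighborhood $U$.

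Two remarks on the assumptions. The condition on $\nu_k$ in Assumption~\ref{assump-2}(i) is not needed here; it matters later for identification. The reduced Hessian is allowed to be indefinite, and only the inverse function theorem is used, so this causes no difficulty.
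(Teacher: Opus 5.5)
Your proof is correct and follows the paper's route: isolatedness comes from the strict-complementarity margin on $Z^\star$ together with the inverse function theorem for the reduced stationarity map on $S^\star$, and convergence comes from $\|\bm x_{k+1}-\bm x_k\|\to0$, compactness of $\mathcal L_0$, and stationarity of all accumulation points. The only difference is in the last step: the paper cites Ostrowski's theorem that the cluster set is connected and hence equals $\{\bm x^\star\}$, whereas you prove this special case directly with the annulus argument. In that argument, the claim that the sequence leaves $B(\bm x^\star,\rho)$ infinitely often needs one line of justification: a second accumulation point is stationary, so it must lie outside $\overline{B}(\bm x^\star,\rho)$.
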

\begin{proof}
    See Appendix \ref{appendix-H}.
\end{proof}

Define the limiting active manifold
\[\mathcal M^\star \coloneqq
\left\{
\bm x\in\mathbb R^n
\,\middle|\,
\bm x_{Z^\star}=\bm x^\star_{Z^\star},\;
x_i\in I_i^\star\ \forall i\in S^\star
\right\}.\]
The next theorem shows that the algorithm reaches this manifold in finitely
many iterations and thereafter works only on the smooth coordinates.
\begin{figure*}[!t]
    \centering

    \subfloat[\label{fig:column_a}]{%
        \begin{minipage}[b]{0.200\textwidth}
            \centering
            \includegraphics[width=\linewidth]{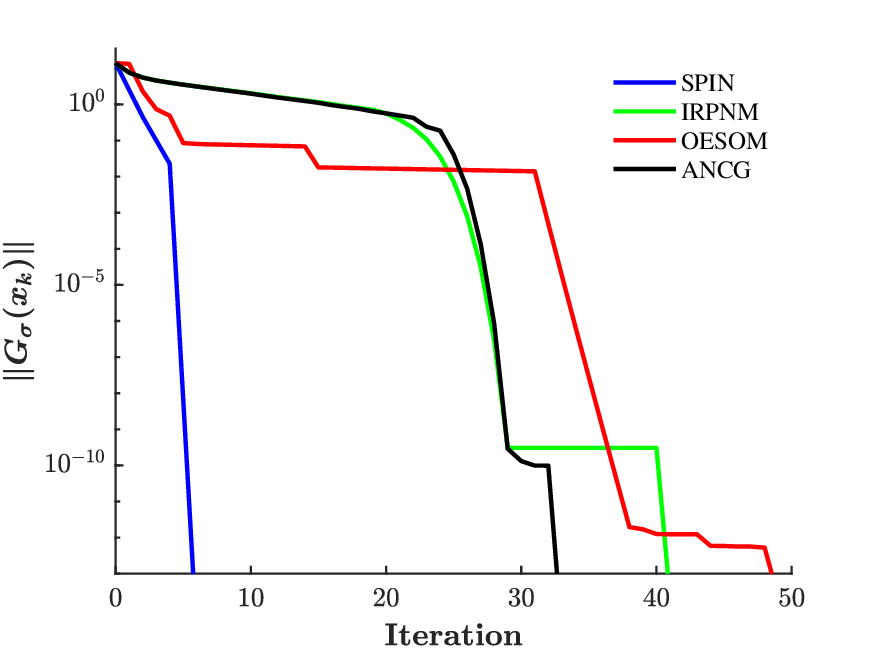}\\[-0.35mm]
            \includegraphics[width=\linewidth]{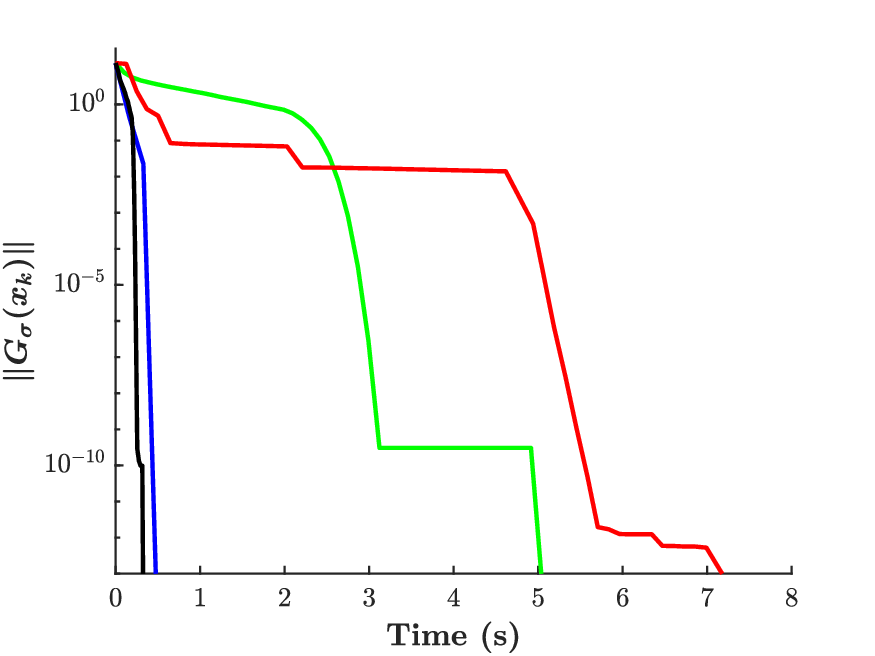}\\[-0.35mm]
            \includegraphics[width=\linewidth]{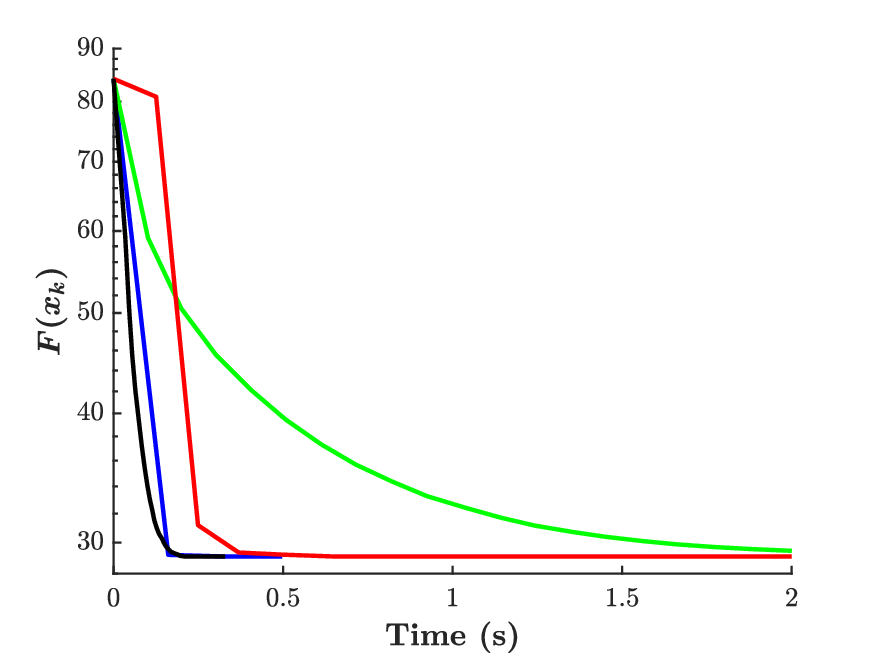}
        \end{minipage}%
    }%
    \hfill
    \subfloat[\label{fig:column_b}]{%
        \begin{minipage}[b]{0.200\textwidth}
            \centering
            \includegraphics[width=\linewidth]{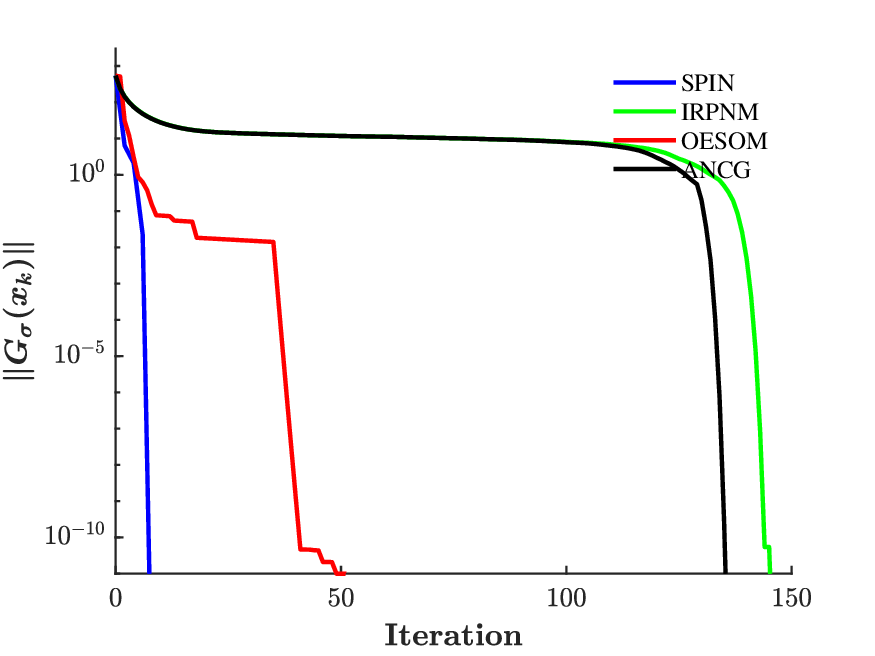}\\[-0.35mm]
            \includegraphics[width=\linewidth]{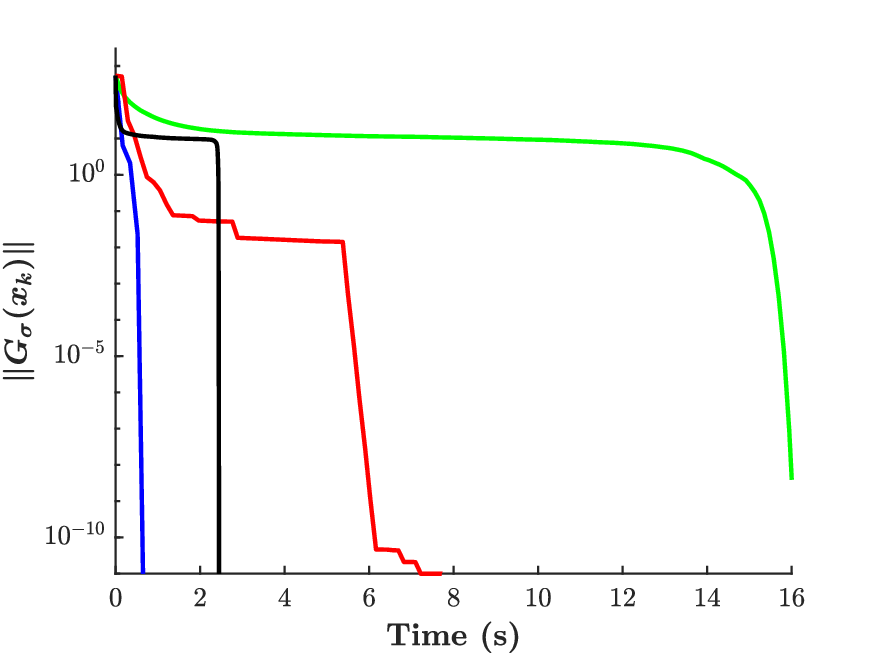}\\[-0.35mm]
            \includegraphics[width=\linewidth]{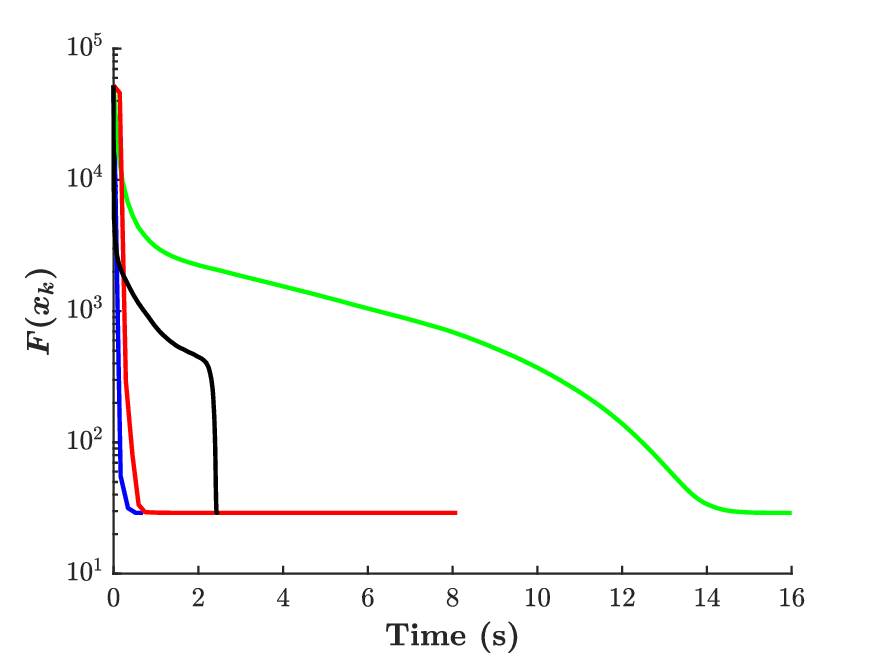}
        \end{minipage}%
    }%
    \hfill
    \subfloat[\label{fig:column_c}]{%
        \begin{minipage}[b]{0.200\textwidth}
            \centering
            \includegraphics[width=\linewidth]{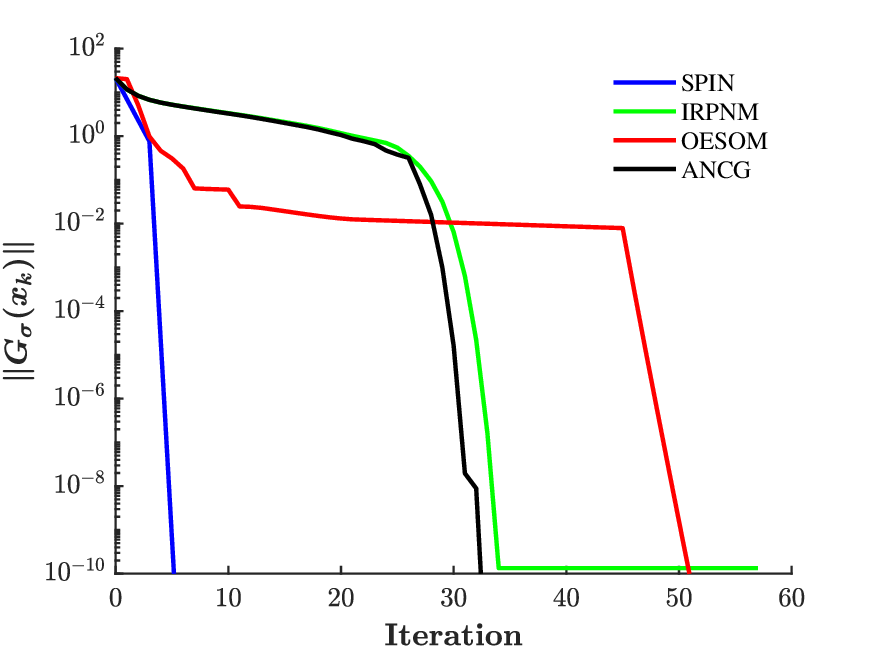}\\[-0.35mm]
            \includegraphics[width=\linewidth]{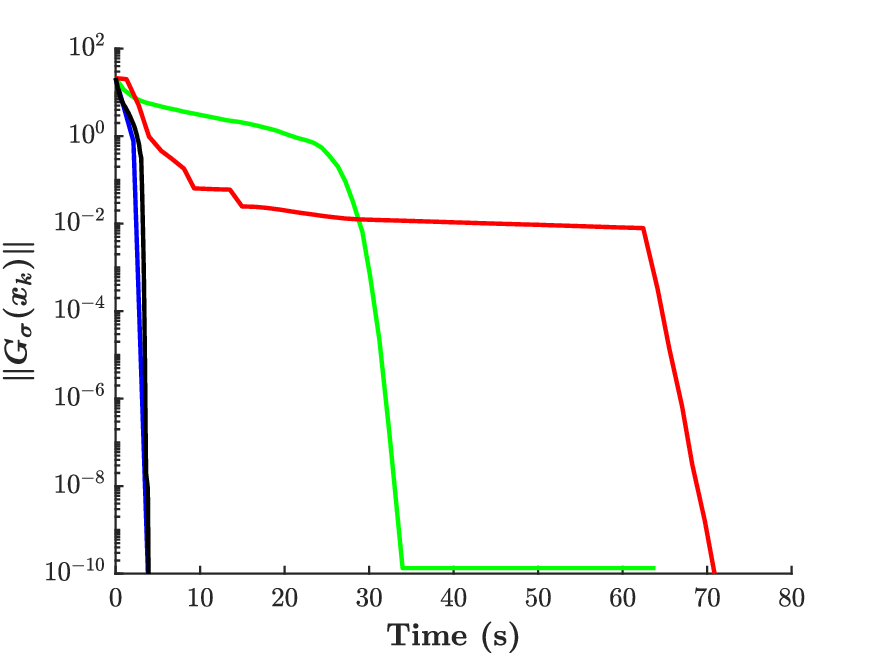}\\[-0.35mm]
            \includegraphics[width=\linewidth]{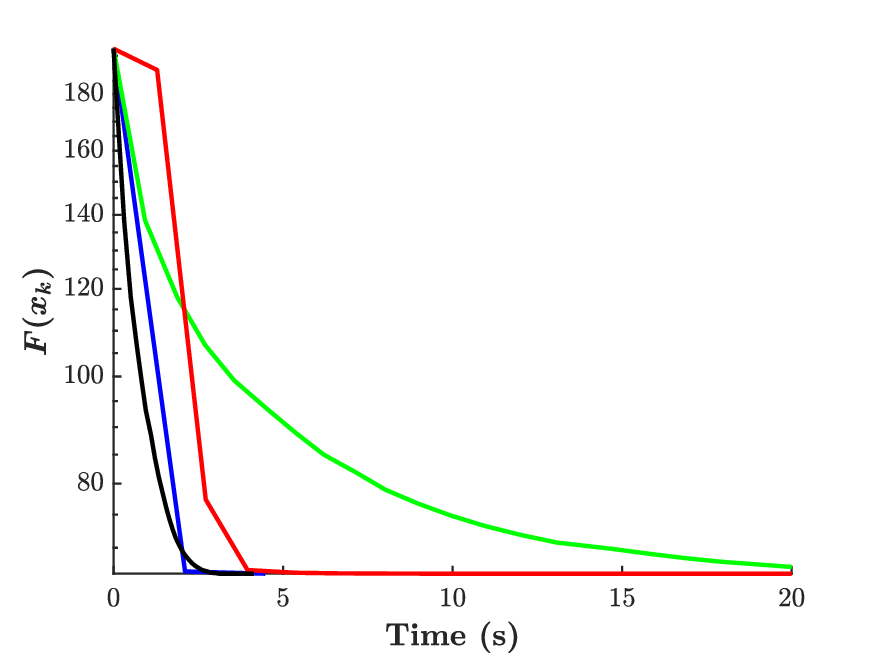}
        \end{minipage}%
    }%
    \hfill
    \subfloat[\label{fig:column_d}]{%
        \begin{minipage}[b]{0.200\textwidth}
            \centering
            \includegraphics[width=\linewidth]{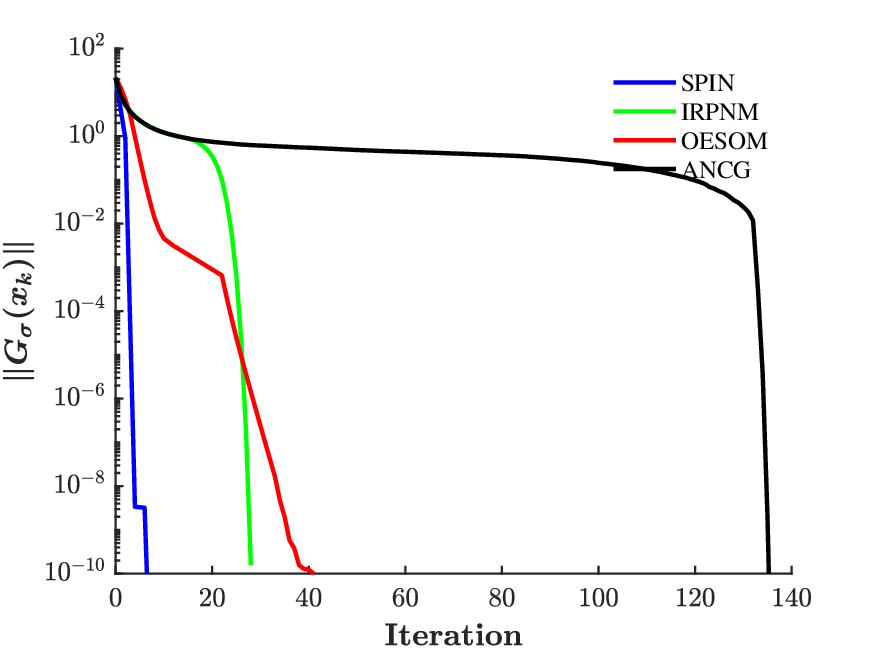}\\[-0.35mm]
            \includegraphics[width=\linewidth]{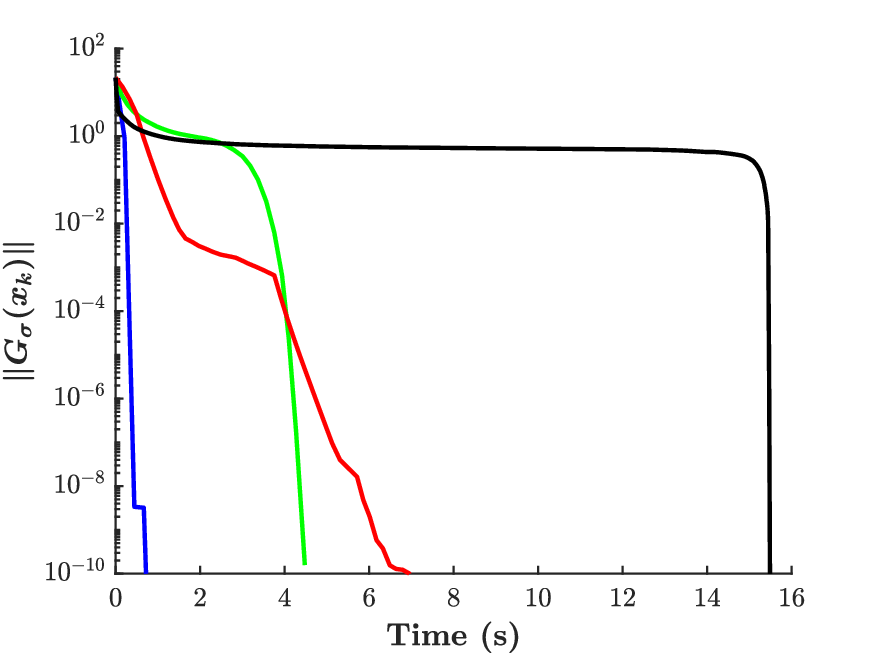}\\[-0.35mm]
            \includegraphics[width=\linewidth]{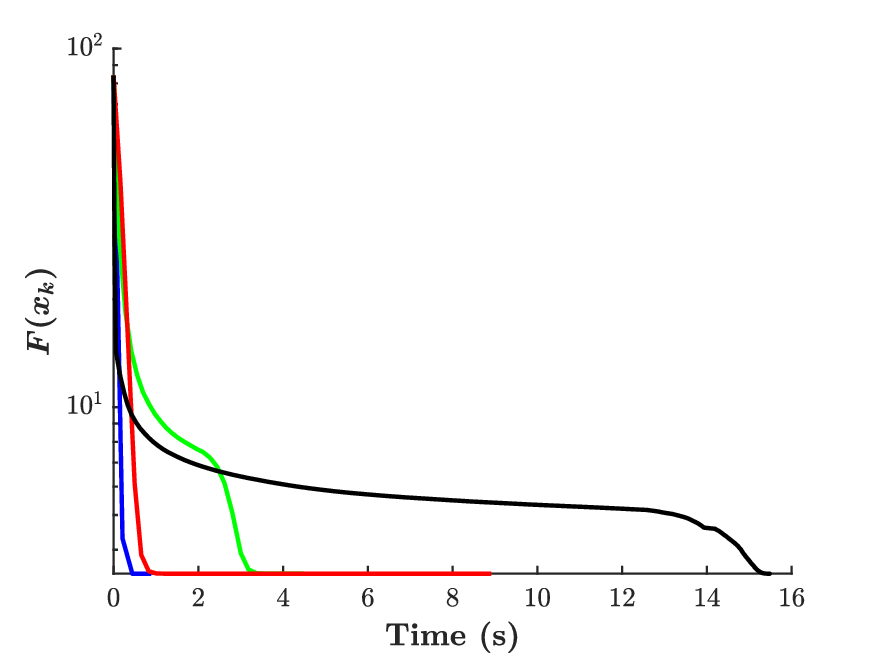}
        \end{minipage}%
    }%
    \hfill
    \subfloat[\label{fig:column_e}]{%
        \begin{minipage}[b]{0.200\textwidth}
            \centering
            \includegraphics[width=\linewidth]{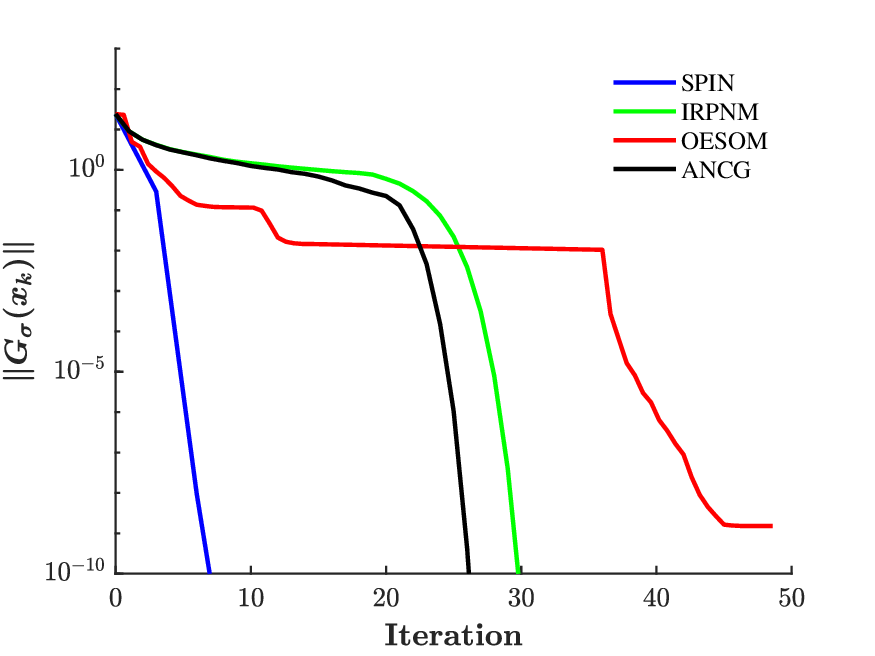}\\[-0.35mm]
            \includegraphics[width=\linewidth]{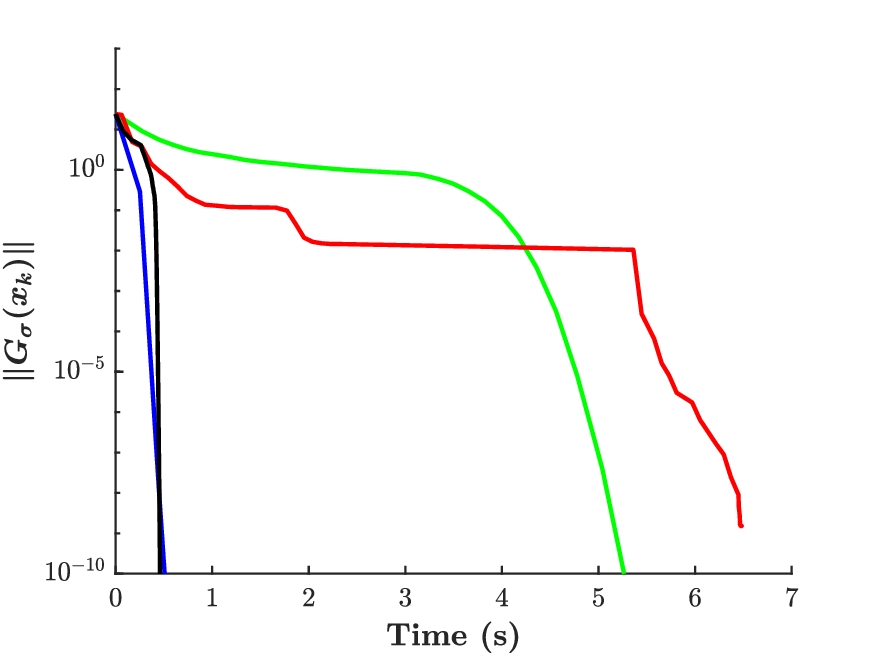}\\[-0.35mm]
            \includegraphics[width=\linewidth]{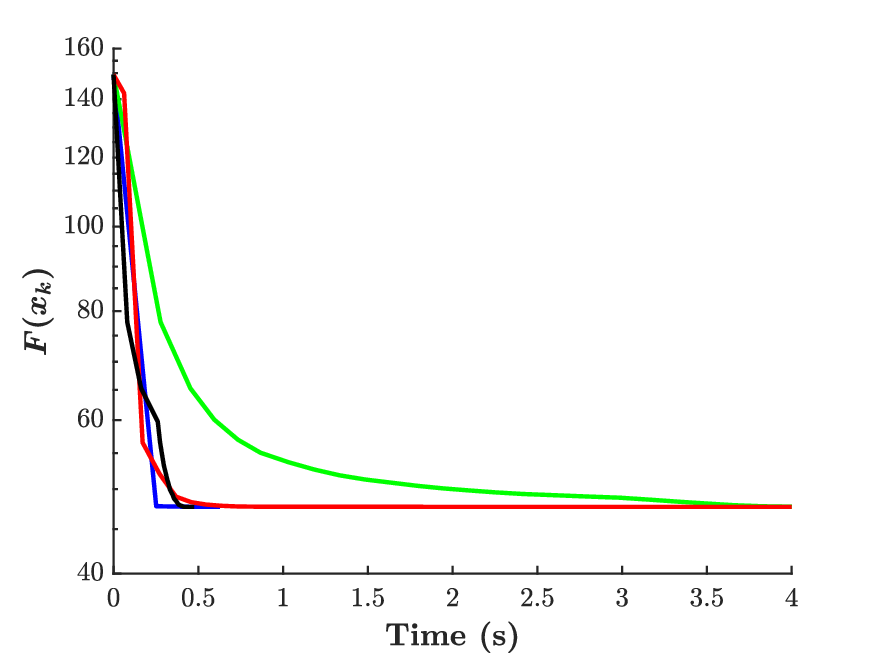}
        \end{minipage}%
    }

    \caption{Convergence trajectories for five test cases. The top and middle rows show $\|\bm{G}_{\sigma}(\bm{x}_k)\|$ versus iteration count and time in seconds, respectively, while the bottom row shows $F(\bm{x}_k)$ versus time. (a) Default setting. (b) Distant initialization with $\bm{x}_0=10 \times \texttt{rand()}$. (c) Larger problem with $m=1000$, $n=3000$, and $\kappa=100$. (d) Smaller regularization parameter with $\zeta=0.01\|\bm{A}^{\top}\bm{b}\|_{\infty}$. (e) Denser $\bm{x}^{\star}$ with $\kappa=160$.}
    \label{fig:1}
\end{figure*}
\begin{theorem}
\label{thm:4}
Suppose that Assumptions \ref{ass:main-1} and \ref{assump-2} hold. Then there exists $K_{\mathrm{id}}<\infty$ such that, for every
$k\ge K_{\mathrm{id}}$, $\bm{x}_k \in \mathcal M^\star$.
\end{theorem}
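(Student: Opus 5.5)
The plan rests on Proposition~\ref{prop:4}: the whole sequence converges to $\bm x^\star$. By Theorem~\ref{thm:3}, $\|\bm p_k\|\to0$ and $\|\bm d_k\|\to0$. I will pick a neighborhood $\mathcal U$ of $\bm x^\star$ and an index $K_0$ such that for $k\ge K_0$ the following hold.
\begin{enumerate}[label=(\alph*)]
\item Every free coordinate $i\in S^\star$ satisfies $x_{k,i},\,x_{k,i}+p_{k,i},\,x_{k+1,i}\in I_i^\star$.
\item Every coordinate $i\in Z^\star$ satisfies $|x_{k,i}-x_i^\star|$, $|p_{k,i}|$ and $|d_{k,i}|$ all less than half the distance from $x_i^\star$ to the nearest other breakpoint of $h_i$.
\item The gradient $\bm g_k$ is within $\epsilon_0$ of $\bm g(\bm x^\star)$, where $\epsilon_0$ is chosen so that $\Delta^\star-2\epsilon_0>\limsup\nu_k$ (Assumption~\ref{assump-2}(i)).
\end{enumerate}
Since $\mathcal B$ is locally finite for polyhedral $h_i$, property (a) means that for $k\ge K_0$ the only breakpoint any $Z^\star$ coordinate can meet is $x_i^\star$ itself.

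\textbf{Step 1 (kink coordinates, once at the kink, are frozen).} Suppose $k\ge K_0$ and $x_{k,i}=x_i^\star$ for some $i\in Z^\star$. Then $i\in Z_k$, and by \eqref{eq:F-prime} we have $\delta_{k,i}=\min_\xi \xi(g_{k,i}+h'_{i,\xi}(x_i^\star))\ge\Delta^\star-\epsilon_0>\nu_k$. So $i\notin V_k\cup E_k$, hence $i\in N_k$ and $p_{k,i}=0$. By Theorem~\ref{the:1}, $i\notin D_k$ (since $D_k\subseteq W_k$), so $d_{k,i}(\lambda)=(\beta_k/\lambda)p_{k,i}=0$ and $x_{k+1,i}=x_i^\star$. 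By induction, once a $Z^\star$ coordinate lands on its kink it stays there forever.

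\textbf{Step 2 (each kink coordinate is hit in finitely many steps).} This is the main obstacle. Suppose $x_{k,i}\ne x_i^\star$, say $x_{k,i}>x_i^\star$, with $i\in Z^\star$ and $k\ge K_0$. Then $i\in S_k$ with slope $\alpha_{k,i}=h'_{i,+}(x_i^\star)$ and lower bound $l_{k,i}=x_i^\star$. Stationarity with margin gives $g_i(\bm x^\star)+h'_{i,+}(x_i^\star)\ge\Delta^\star$, so the model gradient satisfies $a_{k,i}\ge\Delta^\star-\epsilon_0$.

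I will argue by contradiction that $p_{k,i}=l_{k,i}-x_{k,i}$, i.e.\ the QP step already reaches the kink. If $x_{k,i}+p_{k,i}>l_{k,i}$, then the normal cone component vanishes. Consequently $e_{k,i}=a_{k,i}+(\bm B_k^W\bm p_{k,W_k})_i$, and we get $|e_{k,i}|\ge\Delta^\star-\epsilon_0-M\|\bm p_k\|$. This contradicts $\|\bm e_k\|\le M\vartheta_k\|\bm p_k\|\to0$. Hence for $k$ large, $x_{k,i}+p_{k,i}=x_i^\star$ with $p_{k,i}<0$, which places $i\in D_k$.

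Next I must show that the backtracking keeps $z_{k,i}=x_i^\star$. By Theorem~\ref{the:1} this holds iff $\lambda_k\le\beta_k+\omega_{k,i}/|p_{k,i}|$. Here $\omega_{k,i}=-(v_{k,i}-\alpha_{k,i})$, where $v_{k,i}$ is the projection of $-g_{k,i}-\beta_kp_{k,i}$ onto $[h'_{i,-}(x_i^\star),h'_{i,+}(x_i^\star)]$. Since $-g_{k,i}$ lies in the interior with margin roughly $\Delta^\star$ and $\beta_k|p_{k,i}|\to0$, I get $\omega_{k,i}\ge\Delta^\star-2\epsilon_0>0$. Meanwhile $|p_{k,i}|\to0$ and $\lambda_k\le\bar\lambda$ by Theorem~\ref{the:2}. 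So for $k$ large, $\omega_{k,i}/|p_{k,i}|\ge\bar\lambda$, and thus $x_{k+1,i}=x_i^\star$. The case $x_{k,i}<x_i^\star$ is symmetric. The delicate points are the uniform sign and side bookkeeping and checking that the interval formula \eqref{eq:interval} indeed has $x_i^\star$ as the relevant endpoint; both follow from (b).

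\textbf{Step 3 (conclusion).} Choose $K_{\mathrm{id}}$ beyond $K_0$ and beyond the finitely many thresholds from Step 2, taken over all $i\in Z^\star$. Then for every $k\ge K_{\mathrm{id}}$, each $Z^\star$ coordinate either already sits at its kink, in which case Step 1 keeps it there, or reaches the kink at the next iterate, in which case Step 1 keeps it there afterwards. Combined with (a) for the $S^\star$ coordinates, this gives $\bm x_k\in\mathcal M^\star$ for all $k\ge K_{\mathrm{id}}$, after increasing $K_{\mathrm{id}}$ by one.
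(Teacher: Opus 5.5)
Your proposal is correct and follows essentially the same route as the paper's proof. It uses full-sequence convergence from Proposition~\ref{prop:4}, then a uniform strict-complementarity margin that keeps a coordinate sitting at its kink out of $V_k\cup E_k$ (so $p_{k,i}=0$), and for a coordinate off its kink a residual contradiction forcing $x_{k,i}+p_{k,i}=x_i^\star$, followed by $\lambda_k\le\bar\lambda\le\beta_k+\omega_{k,i}/|p_{k,i}|$ in Theorem~\ref{the:1} to keep it there. The differences are cosmetic: you lower-bound $\omega_{k,i}$ by $\Delta^\star-2\epsilon_0$, whereas the paper uses the exact identity $\beta_k+\omega_{k,i}/|p_{k,i}|=\operatorname{sign}(x_{k,i}-x_i^\star)(g_{k,i}+\alpha_{k,i})/|p_{k,i}|$; also, your remark after the list about the only reachable breakpoint should cite property (b), not (a).
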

\begin{proof}
    See Appendix \ref{appendix-I}.
\end{proof}
\subsection{Local convergence rate}
Once the active manifold is identified, the method reduces to an inexact Newton scheme for the reduced smooth problem. Under a directional Dennis–Moré condition on the identified smooth coordinates and negligible inexactness relative to the step, we establish local superlinear convergence. For brevity, we use the following notation throughout this section. Write $\boldsymbol{\alpha}_{S^\star}^\star := \bigl(h_i'(x_i^\star)\bigr)_{i\in S^\star}$, and $\bm r_k^S := \bm g_{k,S^\star} + \boldsymbol{\alpha}_{S^\star}^\star,~ \bm p_k^S := \bm p_{k,S^\star}, ~ \bm e_k^S := \bm e_{k,S^\star}, $ and $\bm B_k^S := (\bm B_k)_{S^\star S^\star}, ~ \bm H_k^S := (\bm H_k)_{S^\star S^\star}, ~ \bm H_\star^S := \bm H_{S^\star S^\star}(\bm x^\star).$
\begin{assumption}
\label{assump:3}
    Along the non-stationary iterations after identification, \[ \frac{\|\bm e_k^S\|} {\|\bm p_k^S\|} \to 0, \quad \frac{ \|(\bm B_k^S-\bm H_k^S)\bm p_k^S\| }{ \|\bm p_k^S\| } \to 0. \]
\end{assumption}

\begin{theorem}
\label{thm:5} 
Suppose Assumptions \ref{ass:main-1}--\ref{assump:3} hold. Then there exists $K_{\mathrm{full}}<\infty$ such that the full step is accepted $\lambda_k=\beta_k$ and $\bm x_{k+1} = \bm x_k+\bm p_k$ for $k\ge K_{\mathrm{full}}$. 
 Moreover, 
 \begin{equation}
     \lim_{k\to\infty} \frac{ \|\bm x_{k+1}-\bm x^\star\| }{ \|\bm x_k-\bm x^\star\| } = 0.
     \label{eq:thm-5-1}
 \end{equation}
\end{theorem}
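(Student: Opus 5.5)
We reduce everything to the smooth coordinates $S^\star$ by invoking Theorem~\ref{thm:4} and Proposition~\ref{prop:4}. For $k\ge K_{\mathrm{id}}$ we have $\bm x_k\in\mathcal M^\star$ and $\bm x_k\to\bm x^\star$. Since $\limsup\nu_k<\Delta^\star$ and the directional derivatives at $\bm x_k$ converge to those at $\bm x^\star$ on $Z^\star$, no limiting kink coordinate enters $V_k\cup E_k$ for large $k$. Hence $W_k=S^\star$ and $N_k=Z^\star$. Moreover, each $x_{k,i}$ with $i\in S^\star$ lies in a compact subset of the open interval $I_i^\star$, so $\bm\alpha_{k,W_k}=\bm\alpha^\star_{S^\star}$. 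Since $\|\bm p_k\|\to0$ by Theorem~\ref{thm:3}, the box constraints are inactive at $\bm x_k+\bm p_k$, so $N_{C_{k,W_k}}(\bm p_{k,W_k})=\{\bm0\}$ and $D_k=\varnothing$. The reduced step therefore satisfies
\begin{equation}
\bm B_k^S\bm p_k^S=-\bm r_k^S+\bm e_k^S,
\end{equation}
so it is an inexact Newton step for the smooth reduced equation $\bm r^S(\bm x)=\bm g_{S^\star}(\bm x)+\bm\alpha^\star_{S^\star}=\bm0$. Because $D_k=\varnothing$, Theorem~\ref{the:1} gives $\bm d_k(\lambda)=(\beta_k/\lambda)\bm p_k$. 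Also $F=q+h$ with $h$ affine along $\mathcal M^\star$ near $\bm x^\star$.

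\textbf{Full-step acceptance.} At $\lambda=\beta_k$ the test \eqref{eq:armijo} reads $F(\bm x_k+\bm p_k)-F(\bm x_k)\le-\tfrac{\eta}{2}\bm p_k^\top\bm B_k\bm p_k$. A second-order Taylor expansion of $q$ on the affine piece gives
\begin{equation}
F(\bm x_k+\bm p_k)-F(\bm x_k)=(\bm r_k^S)^\top\bm p_k^S+\tfrac12(\bm p_k^S)^\top\bm H_k^S\bm p_k^S+o(\|\bm p_k\|^2).
\end{equation}
Substituting $\bm r_k^S=-\bm B_k^S\bm p_k^S+\bm e_k^S$ yields
\begin{equation}
-\tfrac12(\bm p_k^S)^\top\bm B_k^S\bm p_k^S+(\bm e_k^S)^\top\bm p_k^S+\tfrac12(\bm p_k^S)^\top(\bm H_k^S-\bm B_k^S)\bm p_k^S+o(\|\bm p_k\|^2).
\end{equation}
By Assumption~\ref{assump:3}, the middle two terms are $o(\|\bm p_k\|^2)$. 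Since $\bm p_k^\top\bm B_k\bm p_k\ge m\|\bm p_k\|^2$ and $\eta<1$, the test holds for large $k$. This is the main delicate point: the Taylor remainder needs $q\in C^2$ near $\bm x^\star$ and the identification of the affine slopes, so that no kink term from $h$ enters. Full steps are then accepted for $k\ge K_{\mathrm{full}}$.

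\textbf{Superlinear rate.} Write $\bm x_{k+1}-\bm x^\star$ on $S^\star$; the $Z^\star$ part is zero. We have
\begin{equation}
\bm H_\star^S(\bm x_{k+1}-\bm x^\star)_S=\bm H_\star^S(\bm x_k-\bm x^\star)_S+\bm H_\star^S\bm p_k^S.
\end{equation}
Since $\bm r^S(\bm x^\star)=\bm0$, continuity of the Hessian gives $\bm r_k^S=\bm H_\star^S(\bm x_k-\bm x^\star)_S+o(\|\bm x_k-\bm x^\star\|)$. Combined with $\bm H_\star^S\bm p_k^S=\bm B_k^S\bm p_k^S+(\bm H_k^S-\bm B_k^S)\bm p_k^S+(\bm H_\star^S-\bm H_k^S)\bm p_k^S$, this yields
\begin{equation}
\|\bm H_\star^S(\bm x_{k+1}-\bm x^\star)_S\|\le\|\bm e_k^S\|+o(\|\bm p_k^S\|)+o(\|\bm x_k-\bm x^\star\|)=o(\|\bm p_k\|)+o(\|\bm x_k-\bm x^\star\|).
\end{equation}
By Assumption~\ref{assump-2}(ii), $\bm H_\star^S$ is invertible, so $\|\bm x_{k+1}-\bm x^\star\|=o(\|\bm p_k\|)+o(\|\bm x_k-\bm x^\star\|)$. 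Finally, $\|\bm p_k\|\le\|\bm x_{k+1}-\bm x^\star\|+\|\bm x_k-\bm x^\star\|$. Absorbing the $o(\|\bm x_{k+1}-\bm x^\star\|)$ term into the left-hand side gives \eqref{eq:thm-5-1}.
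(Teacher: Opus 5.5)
Your proposal is correct and follows essentially the paper's route: you reduce via finite identification to $W_k=S^\star$ with inactive box constraints, derive the reduced model equation $\bm B_k^S\bm p_k^S=-\bm r_k^S+\bm e_k^S$, use a second-order Taylor expansion for full-step acceptance, and linearize for the rate. The only variation is in the rate step, which is equally valid. You absorb the $o(\|\bm p_k\|)$ terms using $\|\bm p_k\|\le\|\bm x_{k+1}-\bm x^\star\|+\|\bm x_k-\bm x^\star\|$ and invert $\bm H_\star^S$, whereas the paper first proves $\|\bm p_k\|=\Theta(\|\bm x_k-\bm x^\star\|)$ via a mean-value matrix and then inverts $\bm H_k^S$.
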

\begin{proof}
    See Appendix \ref{appendix-J}.
\end{proof}

\section{Numerical Results}
We evaluate the efficiency and robustness of the proposed method (SPIN) in three sets of experiments. We first use LASSO problems under varied settings to assess robustness and scalability. We then study nonconvex sparse models with convex or nonconvex smooth losses and SCAD, MCP, or CEL0 penalties, comparing convergence speed and stationary-point quality. Finally, we evaluate the method on ridge-regularized hinge-loss SVMs using public datasets. We compare the proposed algorithm with IRPNM \cite{vomdahl2024inexact}, OESOM \cite{delosreyes2017secondorder}, and ANCG \cite{cheng2021active}, which represent the main second-order approaches reviewed in Section~\ref{subsec:I-A}. Comparisons with first-order methods have been reported in these studies and are not repeated here. We measure stationarity using the residual in \eqref{eq:proximal-gradient-residual} and terminate each algorithm when $\|\bm{G}_{\sigma}(\bm{x}_k)\|< {\texttt{tol}}$. We typically set \(\texttt{tol}<10^{-8}\) for high accuracy, adjusting it by test when needed. We set \(\sigma\) to the Lipschitz constant of \(\nabla q\). For each baseline, we use the recommended settings or code defaults. For SPIN, we set $\nu_k =\max\{10^{-5},\left\|\bm G_\sigma\left(\bm x_k\right)\right\|^{0.5}\}$, $\bar \vartheta = 0.25$, $\vartheta_k=\min \left\{\bar \vartheta, \left\|\bm G_\sigma\left(\bm x_k\right)\right\|^{0.5}\right\}$,  $\gamma = 2$, and $\eta = 10^{-4}$. $\bm B_k=\bm H_k+ \max \left\{0,0.05-\lambda_{\min }\left(\bm H_k\right)\right\} \bm I$. The box-constrained QP \eqref{eq:exact-Newton-direction} is solved by projected gradient initialized at zero, maintaining feasibility by projection onto $C_k$. All tested algorithms share the same initial point $\bm{x}_0$.

\begin{figure*}[htbp]
    \centering

    \subfloat[\label{fig:new_column_a}]{%
        \begin{minipage}[b]{0.250\textwidth}
            \centering
            \includegraphics[width=\linewidth]{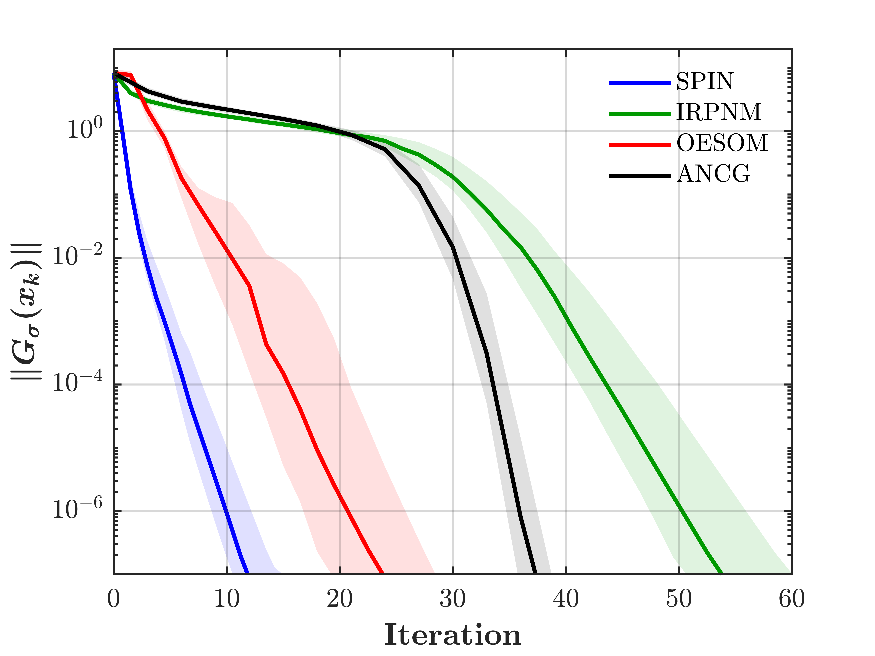}\\[-0.35mm]
            \includegraphics[width=\linewidth]{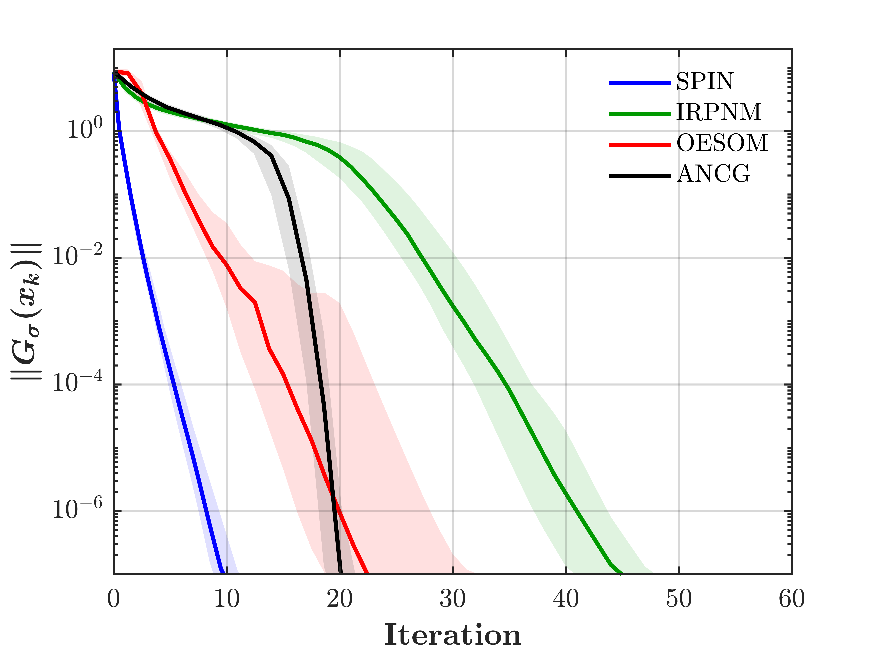}\\[-0.35mm]
            \includegraphics[width=\linewidth]{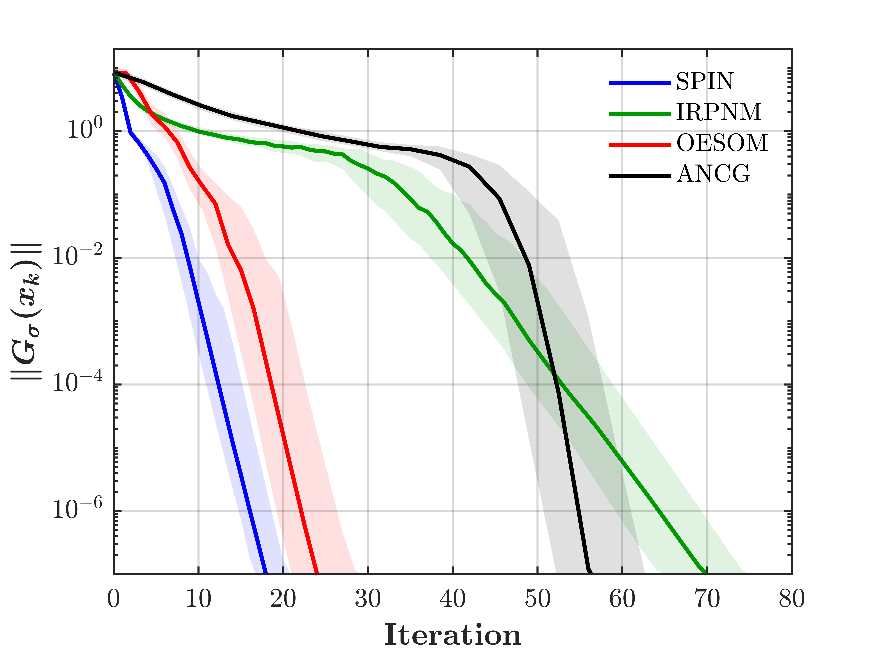}
        \end{minipage}%
    }%
    \hfill
    \subfloat[\label{fig:new_column_b}]{%
        \begin{minipage}[b]{0.250\textwidth}
            \centering
            \includegraphics[width=\linewidth]{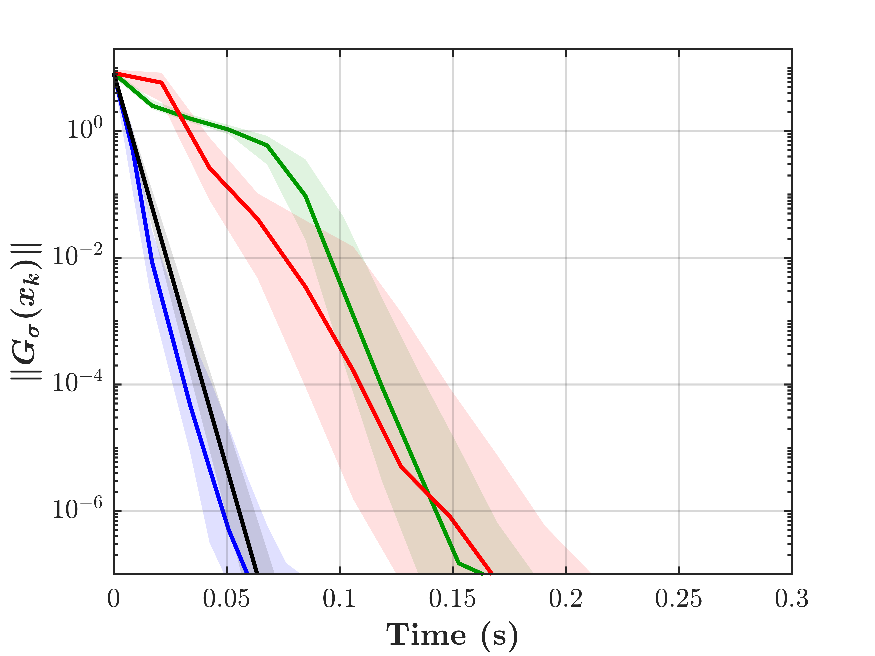}\\[-0.35mm]
            \includegraphics[width=\linewidth]{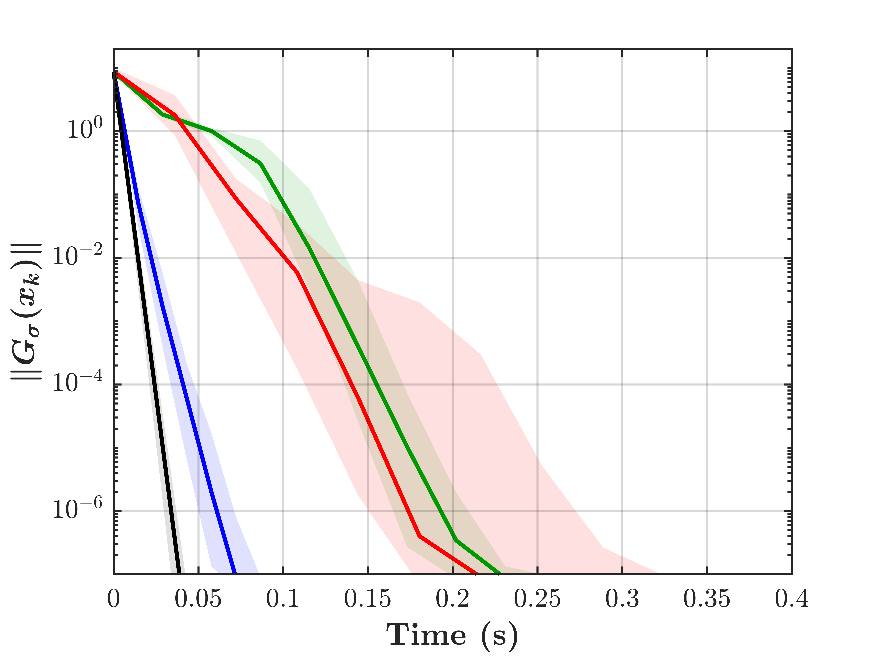}\\[-0.35mm]
            \includegraphics[width=\linewidth]{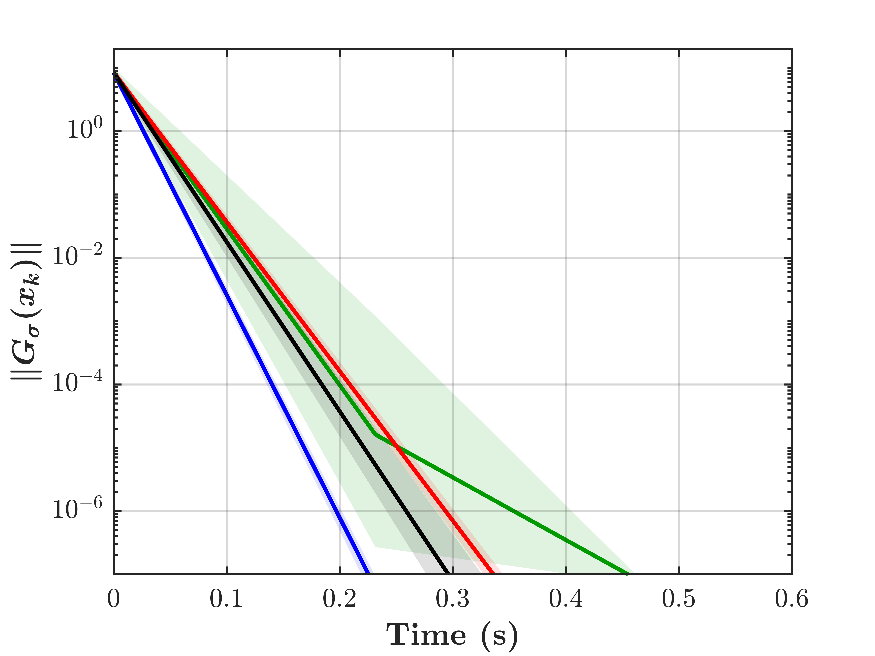}
        \end{minipage}%
    }%
    \hfill
    \subfloat[\label{fig:new_column_c}]{%
        \begin{minipage}[b]{0.250\textwidth}
            \centering
             \includegraphics[width=\linewidth]{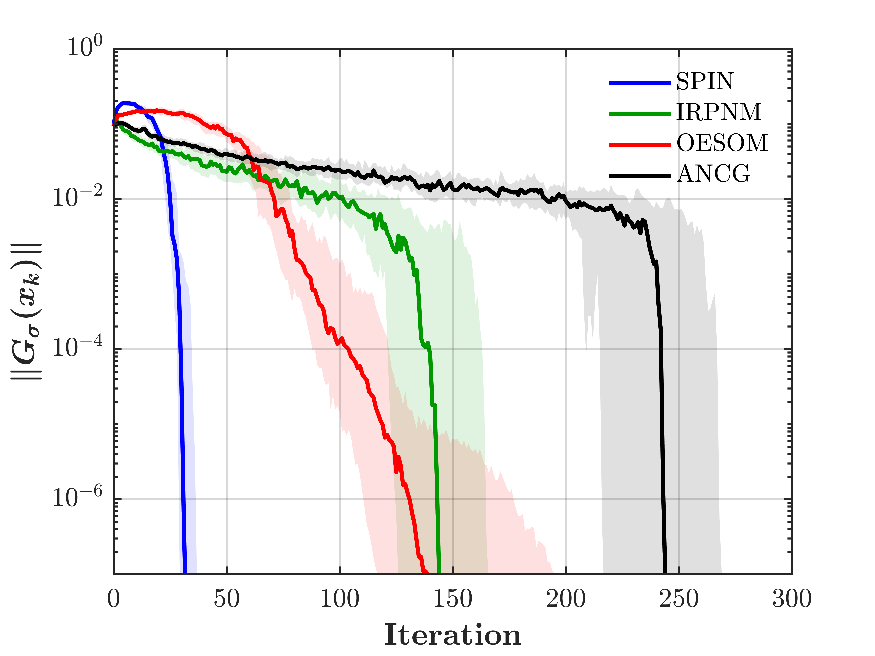}\\[-0.35mm]
            \includegraphics[width=\linewidth]{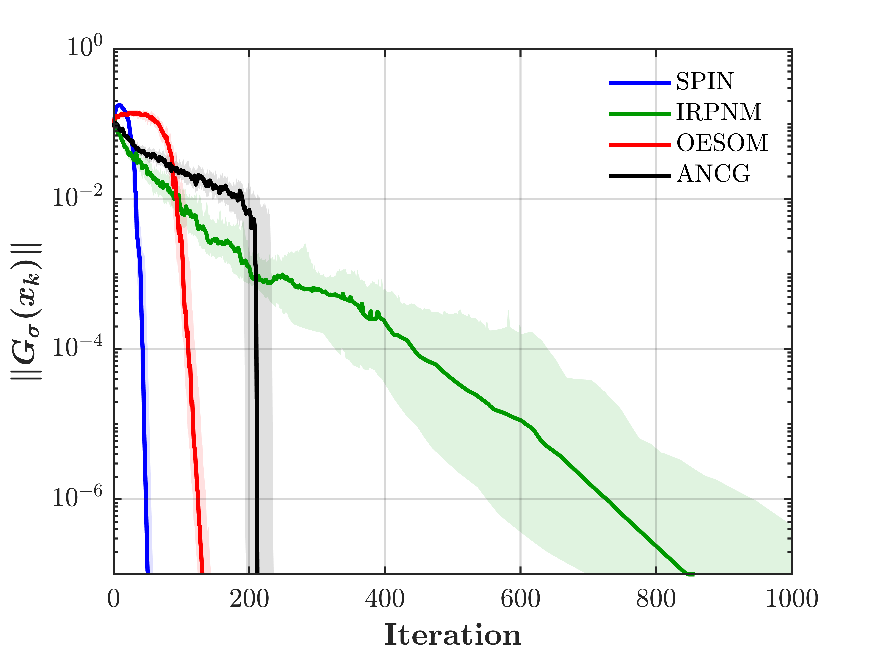}\\[-0.35mm]
            \includegraphics[width=\linewidth]{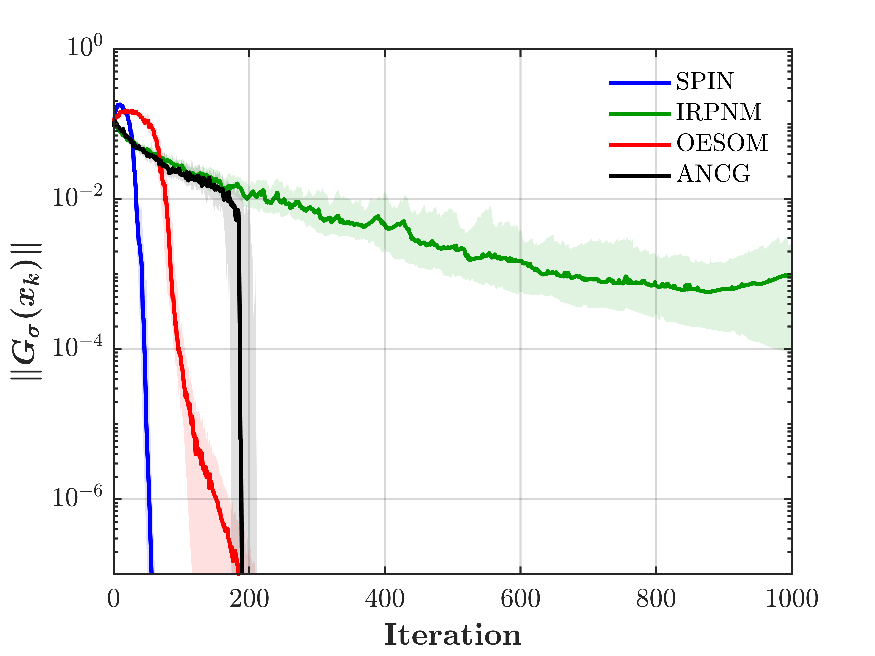}
        \end{minipage}%
    }%
    \hfill
    \subfloat[\label{fig:new_column_d}]{%
        \begin{minipage}[b]{0.250\textwidth}
            \centering
            \includegraphics[width=\linewidth]{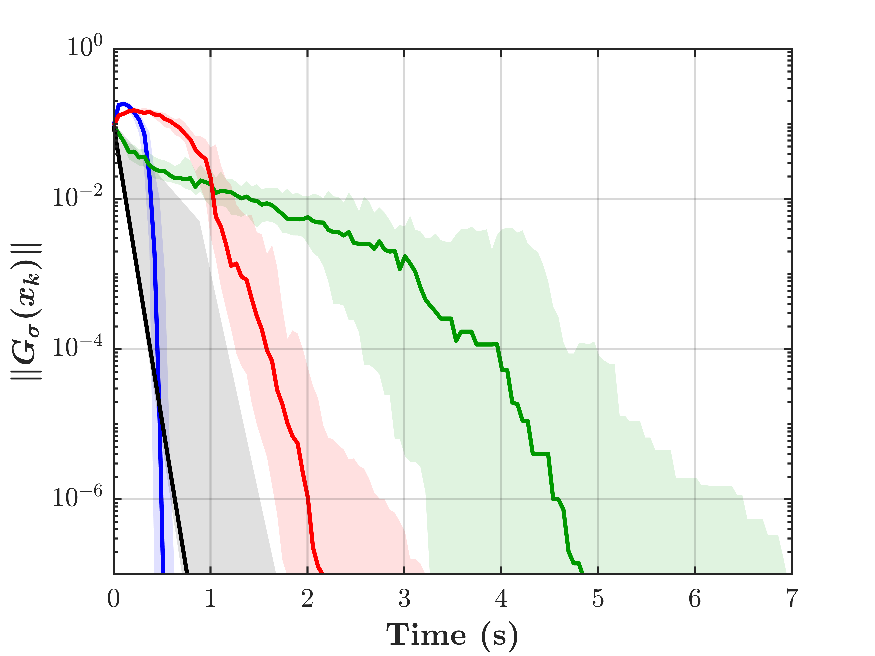}\\[-0.35mm]
            \includegraphics[width=\linewidth]{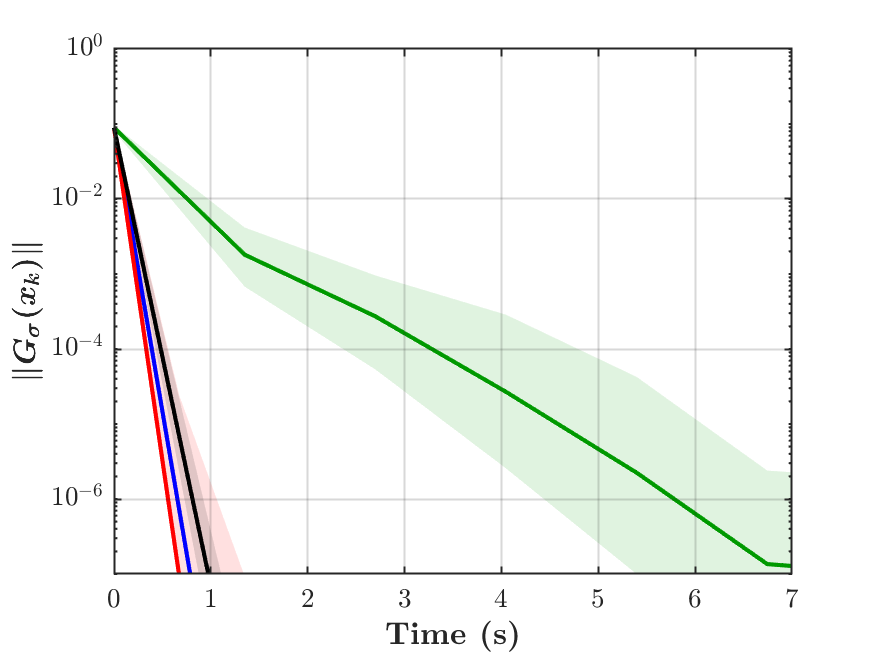}\\[-0.35mm]
            \includegraphics[width=\linewidth]{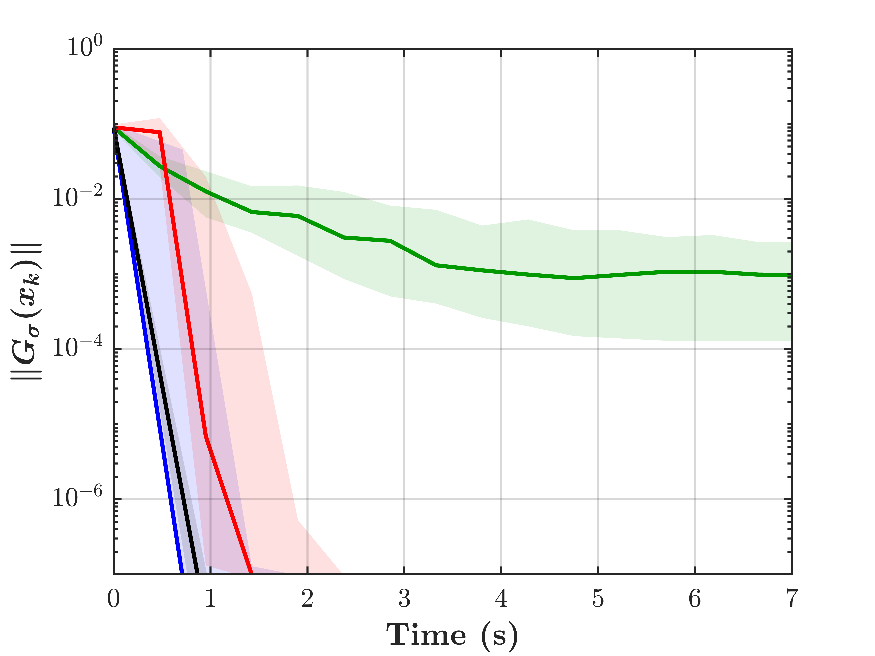}
        \end{minipage}%
    }

    \caption{Convergence of 6 nonconvex models. Rows show SCAD, MCP, and CEL0 from top to bottom. Columns (a) and (b) plot \(\|\bm{G}_\sigma(\bm{x}_k)\|\) against outer iteration count and elapsed time, respectively, for the convex least-squares loss. Columns (c) and (d) show the corresponding results for the nonconvex Cauchy loss. Results for each loss–penalty pair are aggregated over \(60\) independent realizations. Solid curves show the median across realizations at each iteration or aligned time point, and shaded bands span the corresponding 25th–75th percentiles.}
    \label{fig:2}
\end{figure*}

\subsection{ Convex sparse recovery}
We consider the convex sparse recovery
\begin{equation}
    \min_{\bm{x}\in\mathbb{R}^n}~\frac{1}{2}\|\bm{A}\bm{x}-\bm{b}\|^2_2 + \zeta\|\bm{x}\|_1,
    \label{eq:lasso}
\end{equation}
where $\bm{A}\in\mathbb{R}^{m\times n}$, and $\zeta>0$. In the default setting shown in Figure~\ref{fig:1}(a), \(\bm{A}\in\mathbb{R}^{400\times 800}\) has i.i.d. Gaussian entries with mean zero and variance \(0.01\). The true signal \(\bm{x}^\star\) has \(\kappa = 40\) nonzero entries with values sampled using Julia’s \(\texttt{rand()}\) function. The same random function is used to generate $\bm{x}_0$. We choose $\bm{b} = \bm A \bm{x}^\star + \bm \epsilon$ where components of $\bm \epsilon$ are i.i.d Gaussian noises with a
standard deviation $0.01$. The regularization parameter is $\zeta = 0.1\|\bm{A}^{\top}\bm b\|_\infty$.

Figure~\ref{fig:1} compares the convergence behavior of the proposed method and the baselines across five cases that vary the initialization, problem size, regularization parameter, and ground-truth sparsity. SPIN converges rapidly in both iteration count and runtime and remains robust across problem settings, whereas each baseline is sensitive to at least one problem factor. ANCG performs comparably to SPIN in panels (a), (c), and (e), but slows substantially under a distant initialization or small \(\zeta\), as shown in panels (b) and (d). In these cases, it requires many iterations to identify the correct active set. SPIN typically does so in only a few iterations.

\begin{table}[htbp]
\centering
\caption{Performance comparison of the solvers for SVM classification on real-world datasets.}
\label{tab:svm-results}

\scriptsize
\setlength{\tabcolsep}{1.8pt}
\renewcommand{\arraystretch}{0.76}

\begin{tabular*}{\columnwidth}
{@{\extracolsep{\fill}}l l c c@{}}
\toprule
\makecell[l]{Dataset\\$(n_{\rm tr},d)$}
& Algo.
& \makecell{Total\\$k/t(s)$}
& \makecell{Identification \\$k/t(s)$} \\
\midrule

\dataset{\texttt{australian}}{483}{14}
& SPIN
& \textbf{7}\,/\,\textbf{\num{0.092}}
& \textbf{6}\,/\,\textbf{\num{0.079}} \\
& IRPNM
& 9\,/\,\num{0.692}
& \textbf{6}\,/\,\num{0.361} \\
& OESOM
& 44\,/\,\num{0.294}
& 18\,/\,\num{0.125} \\
& ANCG
& 146\,/\,\num{0.193}
& 146\,/\,\num{0.193} \\
\midrule

\dataset{\texttt{diabetes}}{538}{8}
& SPIN
& \textbf{10}\,/\,\textbf{\num{0.241}}
& 9\,/\,{\num{0.215}} \\
& IRPNM
& \textbf{10}\,/\,\num{1.332}
& \textbf{6}\,/\,\num{0.586} \\
& OESOM
& 46\,/\,\num{0.424}
& 20\,/\,\textbf{\num{0.213}} \\
& ANCG
& 124\,/\,{\num{0.297}}
& 123\,/\,\num{0.294} \\
\midrule

\dataset{\texttt{fourclass}}{603}{2}
& SPIN
& \textbf{11}\,/\,\num{0.597}
& 10\,/\,\num{0.539} \\
& IRPNM
& 13\,/\,\num{2.785}
& \textbf{7}\,/\,\num{1.049} \\
& OESOM
& 56\,/\,\num{0.534}
& 15\,/\,\textbf{\num{0.139}} \\
& ANCG
& 84\,/\,\textbf{\num{0.179}}
& 84\,/\,\num{0.179} \\
\midrule

\dataset{\texttt{german}}{700}{24}
& SPIN
& \textbf{7}\,/\,\textbf{\num{0.320}}
& \textbf{6}\,/\,\textbf{\num{0.307}} \\
& IRPNM
& 9\,/\,\num{1.852}
& {7}\,/\,\num{1.195} \\
& OESOM
& 61\,/\,\num{0.824}
& 34\,/\,\num{0.476} \\
& ANCG
& 148\,/\,\num{0.648}
& 148\,/\,\num{0.648} \\
\midrule

\dataset{\texttt{ionosphere}}{246}{33}
& SPIN
& \textbf{6}\,/\,\num{0.020}
& 5\,/\,\textbf{\num{0.016}} \\
& IRPNM
& 9\,/\,\num{0.150}
& \textbf{4}\,/\,\num{0.046} \\
& OESOM
& 38\,/\,\num{0.082}
& 12\,/\,\num{0.025} \\
& ANCG
& 62\,/\,\textbf{\num{0.017}}
& 61\,/\,{\num{0.017}} \\
\midrule

\dataset{\texttt{splice}}{700}{60}
& SPIN
& \textbf{5}\,/\,\num{0.240}
& \textbf{5}\,/\,\num{0.240} \\
& IRPNM
& 10\,/\,\num{2.421}
& \textbf{5}\,/\,\num{0.849} \\
& OESOM
& 39\,/\,\num{0.500}
& 10\,/\,\num{0.127} \\
& ANCG
& 34\,/\,\textbf{\num{0.098}}
& 34\,/\,\textbf{\num{0.098}} \\
\midrule

\dataset{\texttt{leukemia}}{51}{7129}
& SPIN
& \textbf{3}\,/\,\textbf{\num{1.66e-04}}
& \textbf{3}\,/\,\textbf{\num{1.66e-04}} \\
& IRPNM
& 6\,/\,\num{8.24e-04}
& 4\,/\,\num{5.06e-04} \\
& OESOM
& 211\,/\,\num{0.0063}
& 30\,/\,\num{8.58e-04} \\
& ANCG
& 11\,/\,\num{2.02e-04}
& 10\,/\,\num{1.87e-04} \\
\midrule

\dataset{\texttt{a1a}}{1123}{110}
& SPIN
& \textbf{7}\,/\,\num{0.727}
& \textbf{6}\,/\,\num{0.635} \\
& IRPNM
& 9\,/\,\num{3.603}
& \textbf{6}\,/\,\num{1.833} \\
& OESOM
& 50\,/\,\num{1.389}
& 11\,/\,\textbf{\num{0.310}} \\
& ANCG
& 75\,/\,\textbf{\num{0.403}}
& 74\,/\,\num{0.394} \\
\midrule

\dataset{\texttt{svmguide3}}{899}{21}
& SPIN
& \textbf{8}\,/\,\textbf{\num{0.664}}
& \textbf{7}\,/\,\num{0.612} \\
& IRPNM
& 14\,/\,\num{6.998}
& {8}\,/\,\num{2.806} \\
& OESOM
& 41\,/\,\num{0.786}
& 17\,/\,\textbf{\num{0.328}} \\
& ANCG
& 77\,/\,{\num{0.742}}
& 77\,/\,\num{0.742} \\
\midrule

\dataset{\texttt{dna-ovr}}{1400}{180}
& SPIN
& \textbf{3}\,/\,\num{2.511}
& \textbf{3}\,/\,\num{2.511} \\
& IRPNM
& 14\,/\,\num{13.66}
& 6\,/\,\num{3.401} \\
& OESOM
& 56\,/\,\num{3.073}
& 9\,/\,\num{0.454} \\
& ANCG
& 130\,/\,\textbf{\num{0.967}}
& 129\,/\,\textbf{\num{0.832}} \\
\midrule

\dataset{\texttt{vehicle-ovr}}{592}{18}
& SPIN
& \textbf{8}\,/\,\textbf{\num{0.358}}
& 8\,/\,\textbf{\num{0.358}} \\
& IRPNM
& 10\,/\,\num{1.752}
& \textbf{6}\,/\,\num{0.890} \\
& OESOM
& 41\,/\,\num{0.489}
& 13\,/\,\num{0.172} \\
& ANCG
& 61\,/\,{\num{0.456}}
& 60\,/\,{\num{0.452}} \\
\midrule

\dataset{\texttt{vowel-ovr}}{693}{10}
& SPIN
& 13\,/\,\textbf{\num{0.772}}
& 12\,/\,\num{0.726} \\
& IRPNM
& \textbf{12}\,/\,\num{3.477}
& \textbf{8}\,/\,\num{1.953} \\
& OESOM
& 68\,/\,\num{1.286}
& 27\,/\,\textbf{\num{0.360}} \\
& ANCG
& 168\,/\,\num{1.136}
& 168\,/\,\num{1.136} \\
\midrule

\dataset{\texttt{svmguide1}}{4962}{4}
& SPIN
& \textbf{10}\,/\,\textbf{\num{22.8}}
& 9\,/\,{\num{20.9}} \\
& IRPNM
& 13\,/\,\num{154}
& \textbf{7}\,/\,\num{62.5} \\
& OESOM
& 50\,/\,\num{35.9}
& 14\,/\,\textbf{\num{9.7}} \\
& ANCG
& 124\,/\,{\num{27.0}}
& 123\,/\,\num{26.6} \\
\midrule

\dataset{\texttt{mushrooms}}{5687}{112}
& SPIN
& \textbf{6}\,/\,\textbf{\num{29.7}}
& \textbf{5}\,/\,{\num{25.4}} \\
& IRPNM
& 10\,/\,\num{106}
& \textbf{5}\,/\,\num{42.9} \\
& OESOM
& 42\,/\,\num{31.3}
& 10\,/\,\textbf{\num{7.04}} \\
& ANCG
& 138\,/\,{\num{32.8}}
& 137\,/\,{\num{32.6}} \\
\midrule

\dataset{\texttt{phishing}}{7739}{68}
& SPIN
& \textbf{8}\,/\,\num{31.6}
& 8\,/\,\num{31.6} \\
& IRPNM
& 12\,/\,\num{244}
& \textbf{6}\,/\,\num{98.2} \\
& OESOM
& 55\,/\,\num{71.3}
& 15\,/\,\textbf{\num{18.9}} \\
& ANCG
& 80\,/\,\textbf{\num{25.1}}
& 79\,/\,\num{24.9} \\
\bottomrule
\end{tabular*}
\end{table}
\subsection{Nonconvex sparse recovery}

We next consider six nonconvex models of the form
\begin{equation}
\min_{\bm{x}\in\mathbb{R}^{n}}
\widetilde q_{\ell}(\bm{x})
+\sum_{i=1}^{n}\phi_p(|x_i|),
\label{eq:nonconvex-lasso}
\end{equation}
where \(\ell\) indexes the least-squares (\(\mathrm{LS}\)) and Cauchy (\(\mathrm{Cau}\)) losses, \(p\) indexes the SCAD, MCP, and CEL0 penalties, and all six pairings are considered.  The least-squares loss \(\widetilde q_{\mathrm{LS}}\) takes the form given in \eqref{eq:lasso}, while the Cauchy loss is defined by
\begin{equation}
\widetilde q_{\mathrm{Cau}}(\bm{x})=
\frac{1}{m}\sum_{j=1}^{m}
\frac{\delta^2}{2}
\log\left(
1+\frac{(\bm{a}_j^\top\bm{x}-b_j)^2}{\delta^2}
\right),
\end{equation}
where \(\bm{a}_j^\top\) denotes the \(j\)th row of \(\bm{A}\). The Cauchy loss limits the influence of large residuals and is nonconvex because its scalar loss has negative curvature whenever $|\bm{a}_j^\top\bm{x}-b_j|>\delta$. Problem \eqref{eq:nonconvex-lasso} can be directly cast in the form \eqref{eq:problem-formulation} handled by all tested algorithms using objective decompositions described in Section~\ref{sec:I}. The common experimental setup is summarized as follows. For each realization, we generate \(\bm{A}\in\mathbb{R}^{200\times300}\) with independent standard Gaussian entries and normalize each column to have Euclidean norm \(\sqrt{m}\). The ground-truth signal \(\bm{x}^{\star}\) has \(\kappa=30\) randomly selected nonzero entries with values \(2z_i\), where $z_i$ are independently sampled from \(\mathcal{N}(0,1)\). The  observation is generated as $\bm{b}=
\bm{A}\bm{x}^{\star}+\bm{\epsilon}$, where \(\epsilon_j\) are independently sampled from \(\mathcal{N}(0,0.05^2)\).
For the Cauchy-loss models, we further contaminate \(15\%\) of the observations by adding perturbations of magnitude \(10\) with random signs. We set $\delta=0.5$ and $\zeta=
0.2\left\|
\nabla\widetilde q_{\ell}(\bm{0})
\right\|_{\infty}$.
We use \(\mu_i=\zeta\) and \(a_i=3.7\) for SCAD, and \(\mu_i=\zeta\) and \(\gamma_i=3\) for MCP. For CEL0, the column normalization and \(1/m\) loss scaling imply \(d_i=1\), so we set \(\mu=\zeta^2/2\), yielding \(d_i\sqrt{2\mu}=\zeta\). For each loss--penalty combination, we generate \(60\) independent realizations. Within each realization, all methods use the same data and the random initialization.

\begin{figure*}[htbp]
    \centering

    \includegraphics[width=0.3333\textwidth]{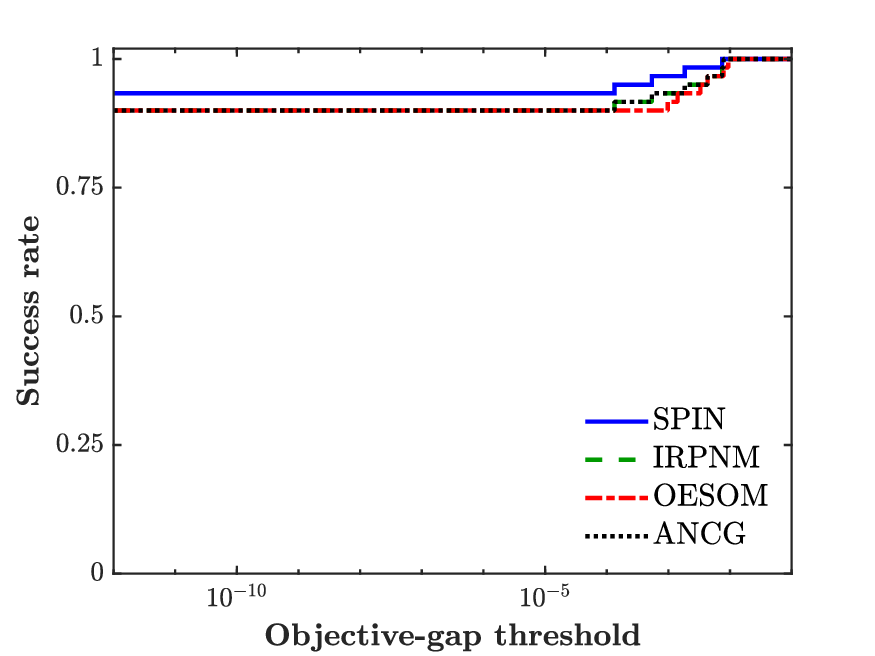}%
    \hfill
    \includegraphics[width=0.3333\textwidth]{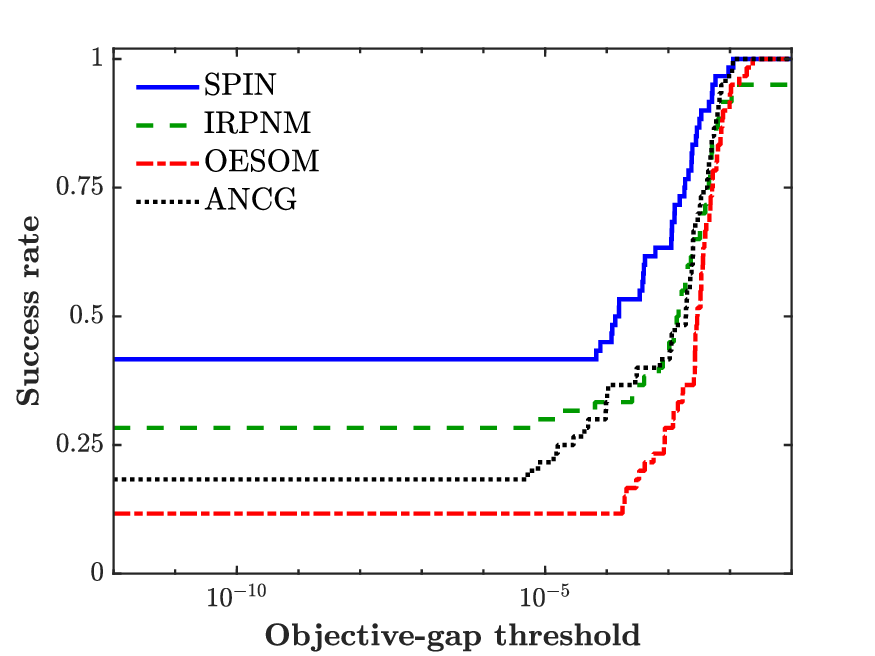}%
    \hfill
    \includegraphics[width=0.3333\textwidth]{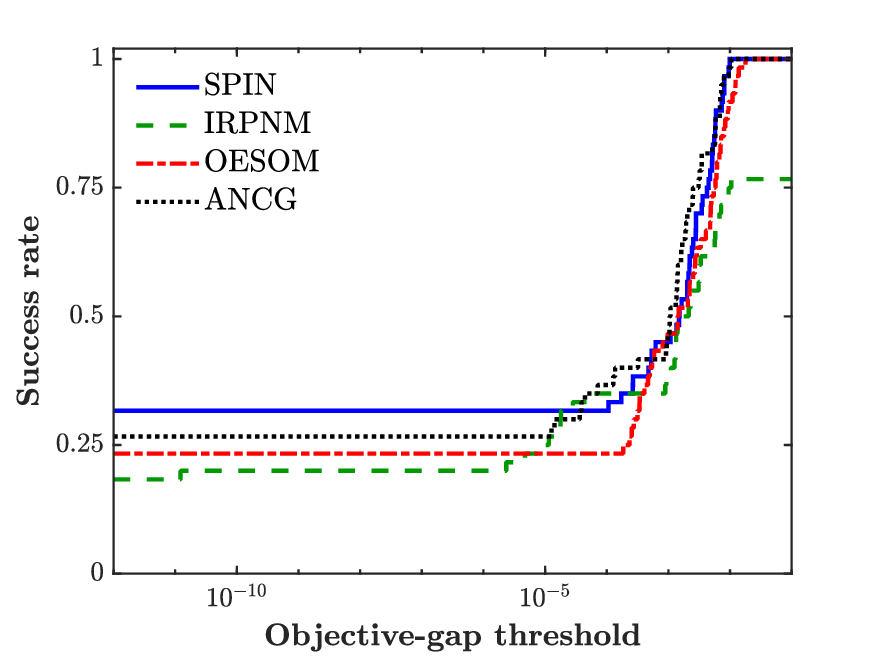}%

    \caption{Cumulative success rates for three nonconvex models. \textit{Left}: $\mathrm{LS}+\mathrm{CEL0}$. \textit{Middle}: $\mathrm{Cauchy}+\mathrm{SCAD}$. \textit{Right}: $\mathrm{Cauchy}+\mathrm{MCP}$.}
    \label{fig:3}
\end{figure*}

Figure~\ref{fig:2} examines convergence on six nonconvex models under a relatively benign regularization setting, using \(60\) independent realizations per model. SPIN converges consistently across all cases. Figure~\ref{fig:3} compares the quality of the local solutions returned by the algorithms through their cumulative success rates over \(60\) realizations. For an algorithm, realization \(r\) is successful at tolerance \(\epsilon\) if its final iterate \(\bm{x}\) satisfies \(\left\|\bm{G}_\sigma(\bm{x})\right\|\le 10^{-6},~\text{and }
\left(F(\bm{x})-F_r^{\mathrm{best}}\right)/
{\max\{1,|F_r^{\mathrm{best}}|\}}
\le \epsilon,\) where \(F_r^{\mathrm{best}}\) is the lowest final objective attained on realization \(r\) by any method satisfying the stationarity criterion.

\subsection{Cost-Sensitive Kernel SVMs}

We finally consider the cost-sensitive kernel SVM
\begin{equation}
\min_{f\in\mathcal H_K,\;b\in\mathbb R}
\frac{\lambda_f}{2}\|f\|_{\mathcal H_K}^{2}
+\frac{\lambda_b}{2}b^{2}
+\sum_{i=1}^{n} C_i
\bigl[\rho_i-y_i\bigl(f(\boldsymbol a_i)+b\bigr)\bigr]_{+},
\nonumber
\end{equation}
where $\mathcal H_K$ is the RKHS induced by an RBF kernel, $C_i$ are class-balanced weights, and $\rho_i$ are class-dependent margin targets. Following the change-of-variable strategy in Section~I, the problem is cast into the composite form~\eqref{eq:problem-formulation} for all tested methods. We use 15 public datasets from the LIBSVM Data repository \cite{libsvmdata}, retaining all samples and applying stratified \(70/30\) training--test splits. Multiclass datasets are converted into one-vs-rest binary classification tasks. Features are standardized using the training data, and the kernel bandwidth is selected by the median-distance rule. We set $\lambda_f=\lambda_b=10^{-2}$ and $\rho_i=1.25$ for the minority class and $\rho_i=1$ otherwise. Table \ref{tab:svm-results} reports the iteration counts and times required to satisfy \(\left\|\bm{G}_\sigma(\bm{x})\right\|\le 10^{-7}\) and identify the final active set.

\vspace{-0.5em}

\section{Conclusion}
\vspace{-0.5em}
We proposed an inexact proximal quasi-Newton method that combines active-set identification with Newton-type acceleration for nonconvex composite problems with separable convex polyhedral terms. It predicts the working set using first-order tests and Hessian coupling, then solves a constrained quadratic model inexactly while enforcing the structure constraints exactly. A proximal line search globalizes the step and tends to preserve the nonsmooth structure reached at the model endpoint. We proved finite termination of backtracking and stationarity of every accumulation point. Under strict complementarity and nonsingularity of the reduced Hessian, the full sequence converges and identifies the active manifold in finitely many iterations. A directional Dennis--Moré condition and vanishing inexactness further guarantee eventual full-step acceptance and local \(Q\)-superlinear convergence. Numerical experiments demonstrate the practical effectiveness of the method on the tested problems. 

\appendices
\vspace{-0.5em}
\section{Proof of Lemma~\ref{lem:1}}
\label{appendix-A}
\begin{proof}
    The normal cone of a Cartesian box is the product of its scalar normal cones. For a nondegenerate interval, the scalar normal is $\{0\}$ in the interior, $(-\infty, 0]$ at the lower endpoint, and $[0,+\infty)$ at the upper endpoint. Therefore the Euclidean projection of $-q_i$ onto the scalar normal leaves the residual $q_i, \min \left\{q_i, 0\right\}$, or $\max \left\{q_i, 0\right\}$, respectively. A degenerate interval has normal cone $\mathbb{R}$ and hence zero residual. Squaring and summing the coordinate-wise distances proves \eqref{eq:inexact-condition-3}. The bounded $\bm{B}^W_k$ gives $\varepsilon_k\left(\bm{p}_k\right) \leq m^{-1 / 2}\left\|\bm{E}_k\right\| \leq \vartheta_k \sqrt{m}\left\|\bm{p}_{k, W_k}\right\| \leq \vartheta_k\left\|\bm{p}_{k, W_k}\right\|_{\bm{B}_k^W},$ which is \eqref{eq:inexact-condition}.
\end{proof}

\section{Proof of Theorem~\ref{the:1}}
\label{appendix-B}
\begin{proof}
Since \(h\) is separable, \eqref{eq:main-step} decomposes into independent scalar subproblems. Fix \(i\) and, for brevity, write $x=x_{k,i},~
p=p_{k,i},~
y=x+p,~
z=z_{k,i}(\lambda),~
v=v_{k,i}$.
By the definition of $\bm c_k$, the scalar optimality condition is \begin{equation}
    0
\in
\lambda(z-x)-\beta_kp+\partial h_i(z)-v.
\end{equation}
If \(p=0\), then \(v\in\partial h_i(x)\), so \(z=x\) satisfies this condition and is therefore the unique scalar proximal minimizer. If $p\ne0$, monotonicity of $\partial h_i$ implies
\[
0\le \operatorname{sign}(p)(z-x-\frac{\beta_k}{\lambda}p)\le \operatorname{sign}(p)(y-x-\frac{\beta_k}{\lambda}p).
\]
Indeed, if $\operatorname{sign}(p)(z-x-\frac{\beta_k}{\lambda}p)<0$, then
$\operatorname{sign}(p)(\zeta-v)\le0$ for every $\zeta\in\partial h_i(z)$. If
$\operatorname{sign}(p)(z-y)>0$, then $\operatorname{sign}(p)(\zeta-v)\ge0$. Either case contradicts the
preceding inclusion. Thus, $z$ lies between $x+\frac{\beta_k}{\lambda}p$ and $x+p$.

If $i\notin D_k$, this segment lies in the affine interval and
$v=\alpha_{k,i}$ which gives $z = x+\frac{\beta_k}{\lambda}p$. Now let $i\in D_k$. There exists
$\kappa_{k,i}\ge0$ at $y$ such that
\[
\partial h_i(y)
=
\alpha_{k,i}+\operatorname{sign}(p)[0,\kappa_{k,i}],\quad
v
=
\alpha_{k,i}+\operatorname{sign}(p)\omega_{k,i}.
\]
The endpoint $z=y$ is optimal exactly when $v-\operatorname{sign}(p)(\lambda-\beta_k)|p|\in\partial h_i(y)$,
which is equivalent to  $\lambda
    \le
    \beta_k+{\omega_{k,i}}/{\abs{p_{k,i}}}$. Otherwise,
$z$ lies strictly before $y$, where
$\partial h_i(z)=\{\alpha_{k,i}\}$, and hence
\[
z-x
=
\frac{\beta_k+\omega_{k,i}/|p|}{\lambda}\,p.
\]
This proves \eqref{eq:the:1-1} and all stated geometric consequences. Assume \eqref{eq:inexact-condition} holds. Exact structure feasibility gives $\Delta_k(\bm p_k)
=
\bm a_k^\top\bm p_{k,W_k}$. 
Taking the inner product of
\[
\bm e_k
=
\bm a_k
+\bm B_k^W\bm p_{k,W_k}
+\bm n_k
\]
with $\bm p_{k,W_k}$, and using
$\bm n_k^\top\bm p_{k,W_k}\ge0$, gives
\begin{equation}
\begin{aligned}
\Delta_k(\bm p_k)
&=
\langle\bm e_k,\bm p_{k,W_k}\rangle
-\|\bm p_{k,W_k}\|_{\bm B_k^W}^2
-\langle\bm n_k,\bm p_{k,W_k}\rangle\\
&\le
-(1-\vartheta_k)
\|\bm p_{k,W_k}\|_{\bm B_k^W}^2=
-(1-\vartheta_k)\beta_k\|\bm p_k\|^2.
\end{aligned}
\label{eq:Delta-beta}
\end{equation}

We next build the bound on $\bm{d}_k(\lambda)$ to show \eqref{eq:the:1-2}. For $i\in D_k$, define
\[
\rho_{k,i}
:=
\min\!\left\{
\left(1-\frac{\beta_k}{\lambda}\right)|p|,
\frac{\omega_{k,i}}{\lambda}
\right\},
\]
and set $\rho_{k,i}=0$ for $i\notin D_k$. The coordinate formula gives
\[
\bm d_k(\lambda)
=
\frac{\beta_k}{\lambda}\bm p_k
+
\sum_{i\in D_k} \operatorname{sign}(p)\rho_{k,i}\bm e_i.
\]
All coordinates remain in the selected interval so
\begin{equation}
    \Delta_k\bigl(\bm d_k(\lambda)\bigr)
=
\frac{\beta_k}{\lambda}\Delta_k(\bm p_k)
+
\sum_{i\in D_k}
\operatorname{sign}(p)(g_{k,i}+\alpha_{k,i})\rho_{k,i}. \nonumber
\end{equation}
For $i\in D_k$ with $\rho_{k,i}>0$, the definition of $v_{k,i}$ \eqref{eq:vbar} gives
$\sign(p)(g_{k,i}+\beta_k p+ v_{k,i})\le0$.  Hence
\begin{align}
    -\sign(p)\paren{g_{k,i}+\alpha_{k,i}}
    &=\beta_k\abs{p}
      +\omega_{k,i}-
      \sign(p)(g_{k,i}+\beta_k p+ v_{k,i})\nonumber\\
    &\ge\beta_k\abs{p}+\lambda \rho_{k,i}.\label{eq:extra-slope-bound}
\end{align}
For $a,b\ge0$ and $0\le\vartheta<1/2$,
\begin{align}
    &(1-\vartheta)a^2+ab+b^2\label{eq:algebra-identity}\\
    &\quad=\sigma(\vartheta)(a+b)^2
    +\frac{\paren{(1-2\vartheta)a-b}^2}{4(1-\vartheta)}
    \ge\sigma(\vartheta)(a+b)^2.\nonumber
\end{align}
Apply \eqref{eq:algebra-identity} coordinate-wise with
$a=(\beta_k/\lambda)\abs{p}$ and $b=\rho_{k,i}$, using \eqref{eq:Delta-beta}--\eqref{eq:extra-slope-bound}. Summation gives
$\Delta_k(\bm{d}_k(\lambda))\le-\sigma(\vartheta_k)\lambda\norm{\bm{d}_k(\lambda)}^2$.  Since $\vartheta_k\le\overline\vartheta<\frac{1}{2}$ and $\sigma(\vartheta)$ is decreasing on $[0,1/2)$, \eqref{eq:the:1-2} follows.
\end{proof}

\section{Proof of Lemma \ref{lem:inexact}}
\label{appendix-C}
\begin{proof}
    Since $\bm{B}^W_k \succeq m\bm{I}$, the reduced model is strongly convex, hence $\bm{p}^\star_{k,W_k}$ is unique. Choose $\bm n_k^{\star}$ satisfying $\bm a_k+\bm B_k^W \bm p_k^{\star}+\bm n_k^{\star}=0$. The definition of $\bm{e}_k$ and using monotonicity of the normal cone gives
    \begin{equation}
\begin{aligned}
&\left\langle \bm{e}_k, \bm{p}_{k,W_k}-\bm{p}^\star_{k,W_k}\right\rangle
=\left\|\bm{p}_{k,W_k}-\bm{p}^\star_{k,W_k}\right\|_{\bm{B}^W_k}^2\\
&+\left\langle \bm{n}_k-\bm{n}^{\star}_k, \bm{p}_{k,W_k}-\bm{p}_{k,W_k}^{\star}\right\rangle 
\geq\left\|\bm{p}_{k,W_k}-\bm{p}^\star_{k,W_k}\right\|_{\bm{B}^W_k}^2.
\end{aligned}
\nonumber
\end{equation}
Cauchy-Schwarz in $\bm{B}^W_k$-metric therefore yields \eqref{eq:lem-2-1}. Next, define $\phi(\bm{w}):=\bm{a}_k^{\top}\bm{w}+\frac{1}{2}\bm{w}^{\top}\bm{B}_k^W\bm{w}+\iota_{C_{k,W_k}}(\bm{w})$. The function is one-strongly convex in the $\bm{B}_k^W$ norm and $\bm{e}_k \in \partial \phi(\bm{p}_{k,W_k})$. Hence
\begin{equation}
\begin{aligned}
    &m_k\left(\bm{p}_k\right)-m_k\left(\bm{p}_k^{\star}\right) \leq \left\langle \bm{e}_k, \bm{p}_{k,W_k}-\bm{p}^\star_{k,W_k}\right\rangle \\
    &-\frac{1}{2} \left\|\bm{p}_{k,W_k}-\bm{p}^\star_{k,W_k}\right\|_{\bm{B}^W_k}^2
\leq \frac{1}{2}\left\|\bm{e}_k\right\|_{\left(\bm{B}_k^W\right)^{-1}}^2,
\end{aligned}
\end{equation}
where the last inequality is Young’s inequality. 

We then show the stationarity equivalence. Suppose first that $\bm{r}\left(\bm{x}_k\right)=\bm{0}$. Then $-\bm{g}_k \in \partial h\left(\bm{x}_k\right)$. By convexity of $h$, $h\left(\bm{x}_k+\bm{p}_k\right)-h\left(\bm{x}_k\right) \geq-\left\langle\bm{g}_k, \bm{p}_k\right\rangle$, and hence $\Delta_k\left(\bm{p}_k\right) \geq 0$. Combining this inequality with the preceding descent estimate gives $0 \leq \Delta_k\left(\bm{p}_k\right) \leq-\left(1-\vartheta_k\right)\left\|\bm{p}_{k, W_k}\right\|_{\bm{B}_k^W}^2 $ Therefore $\bm{p}_{k, W_k}=\bm{0}$. Since feasibility also imposes $\bm{p}_{k, N_k}=\bm{0}$, it follows that $\bm{p}_k=\bm{0}$. Conversely, suppose that $\bm{p}_k=\bm{0}$. \eqref{eq:inexact-condition} yields
$\varepsilon_k(\bm{0})=0$, so the origin satisfies the exact KKT conditions of the reduced box model. For every $i \in$ $S_k$, the origin is an interior point of the displacement interval, and therefore $g_{k, i}+\alpha_{k, i}=0$.  Now suppose, for contradiction, that $\delta_{k, i}<0$ for some
$i \in Z_k$. Then $i \in V_k \subseteq J_k$, so this coordinate is included in the working set. At a selected kink coordinate, the scalar KKT condition at the endpoint $p_i=0$ is $\xi_{k, i}\left(g_{k, i}+\alpha_{k, i}\right)=\delta_{k, i} \geq 0$ contradicting $\delta_{k, i}<0$. Thus
$\delta_{k, i} \geq 0$ for every $i \in Z_k$. Based on the definition of $\delta_{k, i}$, both directional derivatives are nonnegative. Together with the equalities on $S_k$,
Proposition \ref{prop:1} gives $\bm{r}\left(\bm{x}_k\right)=\bm{0}$.
\end{proof}

\section{Proof of Proposition \ref{prop:2}}
\label{appendix-D}
\begin{proof}
    Fix $j \in E_k$. Because $j$ is released only in the selected direction $\xi_{k, j}$, its feasible displacement is $p_j=\xi_{k, j} t$ with $ t \geq 0$. Hold all the other components fixed at their optimal values $p_{k, i}^{\star}$. As a function of $t$, the terms of the quadratic model involving coordinate $j$ are
    \begin{equation}
\phi_j(t)=\left(\delta_{k, j}+\xi_{k, j} \sum_{i \in W_k \backslash\{j\}}\left(\bm{B}_k\right)_{j i} p_{k, i}^{\star}\right) t+\frac{1}{2}\left(\bm{B}_k\right)_{j j} t^2 . \nonumber
\end{equation}
The function $\phi_j$ is strictly convex, so its minimizer satisfies $t>0$ exactly when its derivative at the origin is negative
\begin{equation}
\phi_j^{\prime}(0)=\delta_{k, j}+\xi_{k, j} \sum_{i \in W_k \backslash\{j\}}\left(\bm{B}_k\right)_{j i} p_{k, i}^{\star}<0 .
\end{equation}
Because $p_{k, j}^{\star}=\xi_{k, j} t$, this condition is equivalent to \eqref{eq:prop-2-1}.
\end{proof}

\section{Proof of Proposition \ref{prop:3}}
\label{appendix-E}
\begin{proof}
    For any $\bm{q} \in C_k$, the quadratic expansion of $m_k$ around $\bm{p}_k^\star$ gives
    \begin{equation}
\begin{aligned}
m_k(\bm{q})-m_k\left(\bm{p}_k^{\star}\right)= & \left\langle\nabla m_k\left(\bm{p}_k^{\star}\right)_{W_k}, \bm{q}_{W_k}-\bm{p}_{k, W_k}^{\star}\right\rangle \\
& +\frac{1}{2}\left\|\bm{q}_{W_k}-\bm{p}_{k, W_k}^{\star}\right\|_{\bm{B}_k^W}^2 .
\end{aligned}
\end{equation}
Because $\bm{p}_k^{\star}$ minimizes \eqref{eq:exact-Newton-direction}, we have $\left\langle\nabla m_k\left(\bm{p}_k^{\star}\right)_{W_k}, \bm{q}_{W_k}-\bm{p}_{k, W_k}^{\star}\right\rangle \geq 0$. Taking $\bm{q}=\widehat{\bm{p}}_k$ yields $m_k\left(\widehat{\bm{p}}_k\right)-m_k\left(\bm{p}_k^{\star}\right) \geq \frac{1}{2}\left\|\widehat{\bm{p}}_{k, W_k}-\bm{p}_{k, W_k}^{\star}\right\|_{\bm{B}_k^W}^2$. Suppose $\bm{p}_{k, E_k}^{\star} \neq \bm{0}$. Then $\bm{p}_k^{\star}$ is not feasible for the empty $E_k$ model, so $\widehat{\bm{p}}_k \neq \bm{p}_k^{\star}$.  Strong convexity therefore makes
\begin{equation}
m_k\left(\bm{p}_k^{\star}\right)<m_k\left(\widehat{\bm{p}}_k\right) .
\end{equation}
Finally, using \eqref{eq:lem-2-2} and $ m_k\left(\widehat{\bm{p}}_k\right)-m_k\left(\bm{p}_k\right)=  {\left[m_k\left(\widehat{\bm{p}}_k\right)-m_k\left(\bm{p}_k^{\star}\right)\right] }  -\left[m_k\left(\bm{p}_k\right)-m_k\left(\bm{p}_k^{\star}\right)\right]$ proves \eqref{eq:prop-3-1}.
\end{proof}

\section{Proof of Theorem \ref{the:2}}
\label{appendix-F}
\begin{proof}
For any trial value $\lambda \geq \beta_k$, the Lipschitz continuity of $\nabla q$ and the path-descent estimate \eqref{eq:the:1-2} give
\begin{equation}
\begin{aligned}
F\left(\bm{z}_k(\lambda)\right)-F\left(\bm{x}_k\right) & \leq \Delta_k\left(\bm{d}_k(\lambda)\right)+\frac{L}{2}\left\|\bm{d}_k(\lambda)\right\|^2 \\
& \leq-\left(\bar{\sigma} \lambda-\frac{L}{2}\right)\left\|\bm{d}_k(\lambda)\right\|^2 .
\end{aligned}
\nonumber
\end{equation}
If $\lambda \geq \frac{L}{2 \bar{\sigma}-\eta},$ then $\bar{\sigma} \lambda-\frac{L}{2} \geq \frac{\eta \lambda}{2}$. Hence, \eqref{eq:armijo} holds. Since the trial values are $\lambda_{k, j}=\beta_k \gamma^j,~ \gamma>1$, they eventually exceed the required threshold. Thus, backtracking terminates finitely. If $\beta_k$ already exceeds the threshold, the first trial is accepted and $\lambda_k=\beta_k \leq M .$  Otherwise, the first trial exceeding the threshold is smaller than $\frac{\gamma L}{2 \bar{\sigma}-\eta} .$  Therefore, $\lambda_k \leq \max \left\{M, \frac{\gamma L}{2 \bar{\sigma}-\eta}\right\}=\bar{\lambda}$. Finally, since $\beta_k \geq m$, $\frac{\beta_k}{\lambda_k} \geq \frac{m}{\bar{\lambda}}>0 .$ This completes the proof.
\end{proof}

\section{Proof of Theorem \ref{thm:3}}
\label{appendix-G}
\begin{proof}
The acceptance condition \eqref{eq:armijo}, The path bound \eqref{eq:the:1-1}, and the
uniform bound $\lambda_k\le\bar\lambda$ from Theorem~\eqref{the:2} give
\[
\begin{aligned}
&F(\bm x_k)-F(\bm x_{k+1})
\ge
\frac{\eta\lambda_k}{2}\|\bm d_k\|^2
\ge
\frac{\eta\lambda_k}{2}
\left(\frac{\beta_k}{\lambda_k}\right)^2
\|\bm p_k\|^2
\\
&=
\frac{\eta\beta_k^2}{2\lambda_k}\|\bm p_k\|^2\ge
\frac{\eta m^2}{2\bar\lambda}\|\bm p_k\|^2.
\end{aligned}
\]
This proves \eqref{eq:global-descent}. Since $F(\bm x_k)$ is nonincreasing and bounded below, $\sum_{k=0}^{\infty}\|\bm p_k\|^2<\infty$
and hence $\bm p_k\to\bm0$. \eqref{eq:the:1-2} and \eqref{eq:inexact-condition} imply $\|\bm d_k\|\le\|\bm p_k\|\to0$ and $\|\bm e_k\|
\le M\vartheta_k\|\bm p_k\|
\to0$. It remains to prove stationarity. Let $\bm x_{k_j}\to\bar{\bm x}$
be any convergent subsequence, and define $\bm y_k:=\bm x_k+\bm p_k$. Then $\bm y_{k_j}\to\bar{\bm x}$, and $\bm x_{k_j+1}\to\bar{\bm x}.$

Fix a coordinate $i$. If $\bar x_i\notin\mathcal B$, then, for all
sufficiently large $j$, both $x_{k_j,i}$ and $y_{k_j,i}$ lie in the
interior of the same affine interval of $h_i$. The corresponding normal
component is therefore zero, and the model residual relation gives
\[
e_{k_j,i}
=
g_i(\bm x_{k_j})
+
h_i'(\bar x_i)
+
(\bm B_{k_j}\bm p_{k_j})_i.
\]
Since $\|\bm B_{k_j}^W\|\le M$, passing to the limit yields $g_i(\bar{\bm x})+h_i'(\bar x_i)=0$.

Now suppose that $\bar x_i=b\in\mathcal B$. We show that $\xi\bigl(g_i(\bar{\bm x})+h'_{i,\xi}(b)\bigr)\ge0,~
\xi\in\{-1,+1\}.$
Assume, to the contrary, that for some $\xi\in\{-1,+1\}$,
\begin{equation}
\xi\bigl(g_i(\bar{\bm x})+h'_{i,\xi}(b)\bigr)<0.
\label{eq:kink-violation}
\end{equation}
After passing to a further subsequence, either $\xi(x_{k_j,i}-b)\ge0$
for every $j$, or $\xi(x_{k_j,i}-b)<0$
for every $j$. Consider first the case. 
For all sufficiently large $j$, the selected interval has slope
$h'_{i,\xi}(b)$. Moreover, since $\bm p_{k_j}\to\bm0$, the model
point cannot reach the far endpoint of this interval. Hence the
corresponding normal component satisfies $\xi n_{k_j,i}\le0.$
Multiplying the coordinate residual relation by $\xi$ gives
\begin{equation}
    \xi e_{k_j,i}
\le
\xi\bigl(g_i(\bm x_{k_j})+h'_{i,\xi}(b)\bigr)
+
M\|\bm p_{k_j}\|.\label{eq:thm-3-pro-1}
\end{equation}
The right-hand side converges to the negative quantity in
\eqref{eq:kink-violation}, contradicting $\bm e_{k_j}\to\bm0$. For the second case, if $y_{k_j,i}\neq b$ along an infinite subsequence, then, for all
sufficiently large $j$, the model point lies in the interior of the
opposite affine interval. Its normal component is zero, and its slope is
$h'_{i,-\xi}(b)$. Convexity of $h_i$ gives $\xi h'_{i,-\xi}(b)
\le
\xi h'_{i,\xi}(b)$.
Therefore, \eqref{eq:thm-3-pro-1} is obtained
which again contradicts \eqref{eq:kink-violation}. Consequently, $y_{k_j,i}=b$
for all sufficiently large $j$, and $\operatorname{sign}(p_{k_j,i})=\xi.$

Since $\beta_{k_j}\le M$ and $p_{k_j,i}\to0$, condition
\eqref{eq:kink-violation} implies
\[
\xi\left(
-g_i(\bm x_{k_j})
-\beta_{k_j}p_{k_j,i}
-h'_{i,\xi}(b)
\right)>0
\]
for all sufficiently large $j$. Hence the projection in~\eqref{eq:vbar} selects
the endpoint subgradient $v_{k_j,i}=h'_{i,\xi}(b).$
It follows that
\[
\omega_{k_j,i}
=
\xi\left(
h'_{i,\xi}(b)-h'_{i,-\xi}(b)
\right)
=
h'_{i,+}(b)-h'_{i,-}(b)
>0.
\]
Because $\lambda_{k_j}\le\bar\lambda$ and $p_{k_j,i}\to0$, we eventually
have
\[
\lambda_{k_j}
\le
\beta_{k_j}
+
\frac{\omega_{k_j,i}}{|p_{k_j,i}|}.
\]
Theorem~\ref{the:1} therefore gives $x_{k_j+1,i}=b.$
Applying the first case to the shifted subsequence
$\{\bm x_{k_j+1}\}$ yields the same contradiction. Thus both one-sided directional derivatives are nonnegative at every
kink coordinate. Together with the equality obtained at smooth
coordinates, Proposition~\ref{prop:1} gives $0\in\partial F(\bar{\bm x}).$
Hence every accumulation point is stationary.

Finally, suppose that \eqref{eq:distance-stationary} fails. Then there
exist $\varepsilon>0$ and a subsequence satisfying $\operatorname{dist}(\bm x_{k_j},\mathcal X)\ge\varepsilon.$
All iterates lie in the compact initial sublevel set, so a further
subsequence converges to some $\bar{\bm x}$. The preceding argument
gives $\bar{\bm x}\in\mathcal X$, and therefore $\operatorname{dist}(\bm x_{k_j},\mathcal X)
\le
\|\bm x_{k_j}-\bar{\bm x}\|
\to0,$
a contradiction. This proves \eqref{eq:distance-stationary}.
\end{proof}

\section{Proof of Proposition \ref{prop:4}}
\label{appendix-H}
\begin{proof}
We first prove that $\bm x^\star$ is isolated. For every
$i\in Z^\star$, strict complementarity gives $g_i(\bm x^\star)+h'_{i,+}(x_i^\star)>0,$ and $g_i(\bm x^\star)+h'_{i,-}(x_i^\star)<0.$ After shrinking a neighborhood of $\bm x^\star$, these inequalities
remain valid, and $x_i^\star$ is the only kink of $h_i$ in that
neighborhood. Let $\bm x$ be a stationary point in this neighborhood. If
$x_i>x_i^\star$, stationarity requires $g_i(\bm x)+h'_{i,+}(x_i^\star)=0$
contradicting the first strict inequality. If $x_i<x_i^\star$, stationarity similarly requires $g_i(\bm x)+h'_{i,-}(x_i^\star)=0$
contradicting the second strict inequality. Hence every nearby stationary
point satisfies $\bm x_{Z^\star}
=
\bm x_{Z^\star}^\star.$ If $S^\star=\varnothing$, local uniqueness follows immediately. Otherwise,
shrink the neighborhood further so that
$x_i\in I_i^\star$ for every $i\in S^\star$. Define $\bm{\alpha}_{S^\star}^\star
:=
(h_i'(x_i^\star))_{i\in S^\star}$,
and define the reduced stationarity mapping
\begin{equation}
    \bm R(\bm u)
:=
\bm g_{S^\star}\!\left(
\bm E_{S^\star}\bm u
+
\bm E_{Z^\star}\bm x_{Z^\star}^\star
\right)
+
\bm{\alpha}_{S^\star}^\star.
\end{equation}
Then $\bm R(\bm x_{S^\star}^\star)=\bm 0$ and $\nabla\bm R(\bm x_{S^\star}^\star)
=
\bm H_{S^\star S^\star}(\bm x^\star)$. The reduced Hessian is nonsingular, so the inverse function theorem implies that
$\bm x^\star$ is the unique nearby zero of $\bm R$. Thus
$\bm x^\star$ is isolated in $\mathcal X$.

It remains to prove convergence of the full sequence. The iterates remain
in the compact initial sublevel set. By Theorem~\ref{thm:3}, $\|\bm x_{k+1}-\bm x_k\|\to0$
and every accumulation point belongs to $\mathcal X$. By the standard
connected cluster-set result for bounded sequences with vanishing
successive differences~\cite{ostrowski2016solution}, the cluster set $\Omega$ of
$\{\bm x_k\}$ is nonempty, compact, and connected. Because $\bm x^\star\in\Omega\subseteq\mathcal X$
and $\bm x^\star$ is isolated in $\mathcal X$, the singleton
$\{\bm x^\star\}$ is both relatively open and relatively closed in
$\Omega$. Connectedness therefore implies $\Omega=\{\bm x^\star\}$.
Hence, $\bm x_k\to\bm x^\star.$
\end{proof}
\section{Proof of Theorem~\ref{thm:4}}
\label{appendix-I}
\begin{proof}
Proposition~\ref{prop:4} and Theorems~\ref{the:2} and \ref{thm:3} give $\bm x_k\to\bm x^\star,
~
\bm p_k\to\bm0,
~
\beta_k\le M,
~
\lambda_k\le\bar\lambda.$
Hence $x_{k,i}\in I_i^\star$ for every $i\in S^\star$ and all sufficiently
large $k$. If $Z^\star=\varnothing$, the conclusion follows immediately.
Hence assume $Z^\star\ne\varnothing$. Set $\bm y_k=\bm x_k+\bm p_k.$
Strict complementarity, continuity of $\bm g$, and $\limsup_{k\to\infty}\nu_k<\Delta^\star$
imply that there exist $\varepsilon>0$ and $K_0$ such that, for every
$k\ge K_0$ and $i\in Z^\star$,
\begin{align}
    \min\left\{
g_{k,i}+h'_{i,+}(x_i^\star),
-g_{k,i}-h'_{i,-}(x_i^\star)
\right\}
-\nu_k
&\ge 2\varepsilon, \nonumber
\\
\|\bm e_k\|+M\|\bm p_k\|&<\varepsilon \nonumber.
\end{align}
Increase $K_0$, if necessary, so that neither $x_{k,i}$ nor $y_{k,i}$ can
reach a breakpoint other than $x_i^\star$.

Fix $i\in Z^\star$ and $k\ge K_0$. First suppose $x_{k,i}\ne x_i^\star$
and let $s_{k,i}
:=
\operatorname{sign}(x_{k,i}-x_i^\star).$
Then $i\in S_k$, and the slope of its current interval is $\alpha_{k,i}
=
h'_{i,s_{k,i}}(x_i^\star).$
The preceding margin therefore gives $s_{k,i}(g_{k,i}+\alpha_{k,i})
\ge 2\varepsilon.$ We claim that $y_{k,i}=x_i^\star.$
Otherwise, $y_{k,i}$ lies in the interior of the current affine interval, so
the corresponding normal component is zero. The coordinate residual
equation then gives
\[
\begin{aligned}
2\varepsilon
&\le
s_{k,i}(g_{k,i}+\alpha_{k,i})\le
|e_{k,i}|
+
|(\bm B_k\bm p_k)_i| \\
&\le
\|\bm e_k\|
+
M\|\bm p_k\|
<
\varepsilon,
\end{aligned}
\]
which is impossible. Hence $y_{k,i}=x_i^\star,~
p_{k,i}=x_i^\star-x_{k,i},$
and therefore $i\in D_k$.

Since strict complementarity gives $-g_i(\bm x^\star)
\in
\operatorname{ri}\partial h_i(x_i^\star),$
the convergence above implies, for all sufficiently large $k$,
\[
-g_{k,i}-\beta_kp_{k,i}
\in
\operatorname{ri}\partial h_i(x_i^\star).
\]
Thus the projection in \eqref{eq:vbar} is inactive and $v_{k,i}
=
-g_{k,i}-\beta_kp_{k,i}$.
Because $\operatorname{sign}(p_{k,i})=-s_{k,i}$,
we obtain
\[
\begin{aligned}
\beta_k+\frac{\omega_{k,i}}{|p_{k,i}|}
&=
\frac{
s_{k,i}(g_{k,i}+\alpha_{k,i})
}{
|p_{k,i}|
}\ge
\frac{2\varepsilon}{|p_{k,i}|}
\longrightarrow+\infty.
\end{aligned}
\]
Since $\lambda_k\le\bar\lambda$, 
Theorem~\ref{the:1} eventually gives $x_{k+1,i}
=
y_{k,i}
=
x_i^\star$

Now suppose $x_{k,i}=x_i^\star.$
Both one-sided directional derivatives are larger than $\nu_k$, so $i\notin V_k\cup E_k.$
Since $i$ is a kink coordinate, it follows that $i\notin W_k$ and
$p_{k,i}=0$. \eqref{eq:the:1-1} then gives $x_{k+1,i}=x_{k,i}=x_i^\star.$ Thus each coordinate in $Z^\star$ reaches its target kink after finitely
many iterations and remains there. Since $Z^\star$ is finite, there is a
common $K_{\mathrm{id}}<\infty$ such that $\bm x_{k,Z^\star}
=
\bm x_{Z^\star}^\star$ with $k\ge K_{\mathrm{id}}$.
\end{proof}

\section{Proof of Theorem \ref{thm:5}}
\label{appendix-J}
\begin{proof}
The finite identification property in Theorem \ref{thm:4} implies for sufficiently large $k$, $W_k=S^\star$ and $\bm p_{k,Z^\star}=\bm 0$. Since $\bm{p}_k\to 0$, the model residual becomes
\begin{equation} 
\bm B_k^S\bm p_k^S = -\bm r_k^S+\bm e_k^S. 
\label{eq:local-identified-model-equation} \end{equation}
 Let $\bm s_k := \bm x_{k,S^\star} - \bm x_{S^\star}^\star.$   Since $\bm r_\star^S=\bm 0$, the mean-value formula gives $\bm r_k^S = \bm A_k\bm s_k$, and 
\begin{equation}
    \bm A_k := \int_0^1 \bm H_{S^\star S^\star} \bigl( \bm x^\star + t(\bm x_k-\bm x^\star) \bigr) \,dt.\nonumber
\end{equation}
Because $\bm A_k\to\bm H_\star^S$ and $\bm H_\star^S$ is nonsingular, $\|\bm r_k^S\| = \Theta(\|\bm s_k\|).$ Combining \eqref{eq:local-identified-model-equation}, \eqref{eq:bounded-hessian} and Assumption~\ref{assump:3} gives
\[ (m-o(1))\|\bm p_k^S\| \le \|\bm r_k^S\| \le (M+o(1))\|\bm p_k^S\|. \] 
Therefore, 
\begin{equation} \|\bm p_k\| = \Theta\!\left( \|\bm x_k-\bm x^\star\| \right). \label{eq:local-step-error-equivalence} \end{equation} 
We next show that the full step is eventually accepted. At $\lambda=\beta_k$, $\bm z_k(\beta_k) = \bm x_k+\bm p_k.$ Since $h$ is affine on the identified manifold and $q$ is twice continuously differentiable, 
\[ \begin{aligned} & F(\bm x_k+\bm p_k)-F(\bm x_k) + \frac{\eta\beta_k}{2}\|\bm p_k\|^2 \\ &\quad= -\frac{1-\eta}{2} (\bm p_k^S)^\top \bm B_k^S \bm p_k^S + (\bm e_k^S)^\top\bm p_k^S \\ &\qquad + \frac12 (\bm p_k^S)^\top (\bm H_k^S-\bm B_k^S) \bm p_k^S + o(\|\bm p_k^S\|^2), \end{aligned} \] 
where we used \eqref{eq:local-identified-model-equation} and $\beta_k\|\bm p_k\|^2 = (\bm p_k^S)^\top \bm B_k^S \bm p_k^S$. By Assumption~\ref{assump:3}, the two error terms and the Taylor remainder are $o(\|\bm p_k^S\|^2)$. Hence, \[ \begin{aligned} & F(\bm x_k+\bm p_k)-F(\bm x_k) + \frac{\eta\beta_k}{2}\|\bm p_k\|^2 \\ &\qquad\le - \left( \frac{(1-\eta)m}{2} - o(1) \right) \|\bm p_k^S\|^2 <0 \end{aligned} \] for all sufficiently large $k$. Thus, $\lambda_k = \beta_k$ eventually. For these iterations, $\bm s_{k+1} = \bm s_k+\bm p_k^S.$ Using \eqref{eq:local-identified-model-equation} and $\bm r_k^S=\bm A_k\bm s_k$, we obtain 
\[ \bm H_k^S\bm s_{k+1} = (\bm H_k^S-\bm A_k)\bm s_k + \bm e_k^S + (\bm H_k^S-\bm B_k^S)\bm p_k^S. \] 
Continuity of the Hessian gives $\|\bm H_k^S-\bm A_k\|\to0$. Assumption~\ref{assump:3} and \eqref{eq:local-step-error-equivalence} show that the remaining two terms are also $o(\|\bm s_k\|)$. Since $ \bm H_k^S\to\bm H_\star^S $ and $\bm H_\star^S$ is nonsingular, $(\bm H_k^S)^{-1}$ is uniformly bounded for all sufficiently large $k$. Therefore, $ \|\bm s_{k+1}\| = o(\|\bm s_k\|). $ The coordinates in $Z^\star$ are fixed at their limiting values, so \eqref{eq:thm-5-1} holds.
\end{proof}
\vspace{-0.5em}

\bibliographystyle{IEEEtran}
\bibliography{references}

@article{candes2008enhancing,
  author       = {Cand{\`e}s, Emmanuel J. and Wakin, Michael B. and Boyd, Stephen P.},
  title        = {Enhancing sparsity by reweighted {$\ell_1$} minimization},
  journal      = {J. Fourier Anal. Appl.},
  volume       = {14},
  number       = {5--6},
  pages        = {877--905},
  year         = {2008},
  doi          = {10.1007/s00041-008-9045-x},
  note         = {doi: 10.1007/s00041-008-9045-x},
}

@article{fan2001variable,
  author       = {Fan, Jianqing and Li, Runze},
  title        = {Variable selection via nonconcave penalized likelihood and its oracle properties},
  journal      = {J. Amer. Statist. Assoc.},
  volume       = {96},
  number       = {456},
  pages        = {1348--1360},
  year         = {2001},
  doi          = {10.1198/016214501753382273},
  note         = {doi: 10.1198/016214501753382273},
}

@article{zhang2010nearly,
  author       = {Zhang, Cun-Hui},
  title        = {Nearly unbiased variable selection under minimax concave penalty},
  journal      = {Ann. Statist.},
  volume       = {38},
  number       = {2},
  pages        = {894--942},
  year         = {2010},
  doi          = {10.1214/09-AOS729},
  note         = {doi: 10.1214/09-AOS729},
}

@article{soubies2015continuous,
  author       = {Soubies, Emmanuel and Blanc-F{\'e}raud, Laure and Aubert, Gilles},
  title        = {A continuous exact {$\ell_0$} penalty ({CEL0}) for least squares regularized problem},
  journal      = {SIAM J. Imaging Sci.},
  volume       = {8},
  number       = {3},
  pages        = {1607--1639},
  year         = {2015},
  doi          = {10.1137/151003714},
  note         = {doi: 10.1137/151003714},
}

@article{rybak2025inference,
  author       = {Jakub Rybak and Heather Battey and Wen-Xin Zhou},
  title        = {On inference for the support vector machine},
  journal      = {J. Mach. Learn. Res.},
  volume       = {26},
  number       = {85},
  pages        = {1--54},
  year         = {2025},
}

@inproceedings{wang2024optimal,
  author       = {Caixing Wang and Xingdong Feng},
  title        = {Optimal kernel quantile learning with random features},
  booktitle    = {Proc. 41st Int. Conf. Mach. Learn. ({ICML})},
  series       = {Proc. Mach. Learn. Res.},
  volume       = {235},
  pages        = {50419--50452},
  publisher    = {PMLR},
  year         = {2024},
}

@article{liang2017activity,
  author       = {Liang, Jingwei and Fadili, Jalal and Peyr{\'e}, Gabriel},
  title        = {Activity identification and local linear convergence of forward--backward-type methods},
  journal      = {SIAM J. Optim.},
  volume       = {27},
  number       = {1},
  pages        = {408--437},
  year         = {2017},
  doi          = {10.1137/16M106340X},
  note         = {doi: 10.1137/16M106340X},
}

@article{bareilles2023newton,
  author       = {Bareilles, Gilles and Iutzeler, Franck and Malick, J{\'e}r{\^o}me},
  title        = {{Newton} acceleration on manifolds identified by proximal-gradient methods},
  journal      = {Math. Program.},
  volume       = {200},
  pages        = {37--70},
  year         = {2023},
  doi          = {10.1007/s10107-022-01873-w},
  note         = {doi: 10.1007/s10107-022-01873-w},
}

@article{byrd2016inexact,
  author       = {Byrd, Richard H. and Nocedal, Jorge and Oztoprak, Figen},
  title        = {An inexact successive quadratic approximation method for {$\ell_1$} regularized optimization},
  journal      = {Math. Program.},
  volume       = {157},
  number       = {2},
  pages        = {375--396},
  year         = {2016},
  doi          = {10.1007/s10107-015-0941-y},
  note         = {doi: 10.1007/s10107-015-0941-y},
}

@article{lee2019inexact,
  author       = {Lee, Ching-Pei and Wright, Stephen J.},
  title        = {Inexact successive quadratic approximation for regularized optimization},
  journal      = {Comput. Optim. Appl.},
  volume       = {72},
  number       = {3},
  pages        = {641--674},
  year         = {2019},
  doi          = {10.1007/s10589-019-00059-z},
  note         = {doi: 10.1007/s10589-019-00059-z},
}

@article{kanzow2021globalized,
  author       = {Kanzow, Christian and Lechner, Theresa},
  title        = {Globalized inexact proximal {Newton}-type methods for nonconvex composite functions},
  journal      = {Comput. Optim. Appl.},
  volume       = {78},
  number       = {2},
  pages        = {377--410},
  year         = {2021},
  doi          = {10.1007/s10589-020-00243-6},
  note         = {doi: 10.1007/s10589-020-00243-6},
}

@book{ostrowski2016solution,
  author       = {Ostrowski, Alexander M.},
  title        = {Solution of Equations and Systems of Equations},
  series       = {Pure and Applied Mathematics: A Series of Monographs and Textbooks},
  volume       = {9},
  publisher    = {Elsevier},
  year         = {2016},
}

@article{bonettini2016variable,
  author       = {Bonettini, Silvia and Loris, Ignace and Porta, Federica and Prato, Marco},
  title        = {Variable metric inexact line-search-based methods for nonsmooth optimization},
  journal      = {SIAM J. Optim.},
  volume       = {26},
  number       = {2},
  pages        = {891--921},
  year         = {2016},
  doi          = {10.1137/15M1019325},
  note         = {doi: 10.1137/15M1019325},
}

@article{ochs2019adaptive,
  author       = {Ochs, Peter and Pock, Thomas},
  title        = {Adaptive {FISTA} for nonconvex optimization},
  journal      = {SIAM J. Optim.},
  volume       = {29},
  number       = {4},
  pages        = {2482--2503},
  year         = {2019},
  doi          = {10.1137/17M1156678},
  note         = {doi: 10.1137/17M1156678},
}

@article{liang2023average,
  author       = {Liang, Jiaming and Monteiro, Renato D. C.},
  title        = {Average curvature {FISTA} for nonconvex smooth composite optimization problems},
  journal      = {Comput. Optim. Appl.},
  volume       = {86},
  pages        = {275--302},
  year         = {2023},
  doi          = {10.1007/s10589-023-00490-3},
  note         = {doi: 10.1007/s10589-023-00490-3},
}

@article{bonettini2024heavyball,
  author       = {Bonettini, Silvia and Prato, Marco and Rebegoldi, Simone},
  title        = {A new proximal heavy ball inexact line-search algorithm},
  journal      = {Comput. Optim. Appl.},
  volume       = {88},
  number       = {2},
  pages        = {525--565},
  year         = {2024},
  doi          = {10.1007/s10589-024-00565-9},
  note         = {doi: 10.1007/s10589-024-00565-9},
}

@article{hare2004identifying,
  author       = {Hare, Warren L. and Lewis, Adrian S.},
  title        = {Identifying active constraints via partial smoothness and prox-regularity},
  journal      = {J. Convex Anal.},
  volume       = {11},
  number       = {2},
  pages        = {251--266},
  year         = {2004},
}

@article{vomdahl2024inexact,
  author       = {vom Dahl, Simeon and Kanzow, Christian},
  title        = {An inexact regularized proximal {Newton} method without line search},
  journal      = {Comput. Optim. Appl.},
  volume       = {89},
  number       = {3},
  pages        = {585--624},
  year         = {2024},
  doi          = {10.1007/s10589-024-00600-9},
  note         = {doi: 10.1007/s10589-024-00600-9},
}

@article{yao2018efficient,
  author       = {Yao, Quanming and Kwok, James T.},
  title        = {Efficient learning with a family of nonconvex regularizers by redistributing nonconvexity},
  journal      = {J. Mach. Learn. Res.},
  volume       = {18},
  number       = {179},
  pages        = {1--52},
  year         = {2018},
}

@electronic{libsvmdata,
  author       = {Chih-Chung Chang and Chih-Jen Lin},
  title        = {{LIBSVM} data: Classification, regression, and multi-label},
  note         = {Accessed: Aug. 28, 2026},
  url          = {https://www.csie.ntu.edu.tw/~cjlin/libsvmtools/datasets/},
}

@article{cheng2021active,
  author       = {Cheng, Wanyou and Dai, Yu-Hong},
  title        = {An active set {Newton-CG} method for {$\ell_1$} optimization},
  journal      = {Appl. Comput. Harmon. Anal.},
  volume       = {50},
  pages        = {303--325},
  year         = {2021},
  doi          = {10.1016/j.acha.2019.08.005},
  note         = {doi: 10.1016/j.acha.2019.08.005},
}

@article{aravkin2022trustregion,
  author       = {Aravkin, Aleksandr Y. and Baraldi, Robert and Orban, Dominique},
  title        = {A proximal quasi-{Newton} trust-region method for nonsmooth regularized optimization},
  journal      = {SIAM J. Optim.},
  volume       = {32},
  number       = {2},
  pages        = {900--929},
  year         = {2022},
  doi          = {10.1137/21M1409536},
  note         = {doi: 10.1137/21M1409536},
}

@article{baraldi2023trustregion,
  author       = {Baraldi, Robert J. and Kouri, Drew P.},
  title        = {A proximal trust-region method for nonsmooth optimization with inexact function and gradient evaluations},
  journal      = {Math. Program.},
  volume       = {201},
  number       = {1--2},
  pages        = {559--598},
  year         = {2023},
  doi          = {10.1007/s10107-022-01915-3},
  note         = {doi: 10.1007/s10107-022-01915-3},
}

@article{lee2023accelerating,
  author       = {Lee, Ching-Pei},
  title        = {Accelerating inexact successive quadratic approximation for regularized optimization through manifold identification},
  journal      = {Math. Program.},
  volume       = {201},
  number       = {1--2},
  pages        = {599--633},
  year         = {2023},
  doi          = {10.1007/s10107-022-01916-2},
  note         = {doi: 10.1007/s10107-022-01916-2},
}

@inproceedings{andrew2007scalable,
  author       = {Andrew, Galen and Gao, Jianfeng},
  title        = {Scalable training of {$\ell_1$}-regularized log-linear models},
  booktitle    = {Proc. 24th Int. Conf. Mach. Learn. ({ICML})},
  pages        = {33--40},
  publisher    = {ACM},
  year         = {2007},
  doi          = {10.1145/1273496.1273501},
  note         = {doi: 10.1145/1273496.1273501},
}

@article{byrd2016family,
  author       = {Byrd, Richard H. and Chin, Gillian M. and Nocedal, Jorge and Oztoprak, Figen},
  title        = {A family of second-order methods for convex {$\ell_1$}-regularized optimization},
  journal      = {Math. Program.},
  volume       = {159},
  number       = {1--2},
  pages        = {435--467},
  year         = {2016},
  doi          = {10.1007/s10107-015-0965-3},
  note         = {doi: 10.1007/s10107-015-0965-3},
}

@article{delosreyes2017secondorder,
  author       = {{De Los Reyes}, J. C. and Loayza, E. and Merino, P.},
  title        = {Second-order orthant-based methods with enriched {Hessian} information for sparse {$\ell_1$}-optimization},
  journal      = {Comput. Optim. Appl.},
  volume       = {67},
  number       = {2},
  pages        = {225--258},
  year         = {2017},
  doi          = {10.1007/s10589-017-9891-z},
  note         = {doi: 10.1007/s10589-017-9891-z},
}

@article{lewis2021active,
  author       = {Lewis, Adrian S. and Wylie, Calvin},
  title        = {Active-set {Newton} methods and partial smoothness},
  journal      = {Math. Oper. Res.},
  volume       = {46},
  number       = {2},
  pages        = {712--725},
  year         = {2021},
  doi          = {10.1287/moor.2020.1075},
  note         = {doi: 10.1287/moor.2020.1075},
}

@article{keskar2016secondorder,
  author       = {Keskar, Nitish Shirish and Nocedal, Jorge and {\"O}ztoprak, Figen and W{\"a}chter, Andreas},
  title        = {A second-order method for convex {$\ell_1$}-regularized optimization with active-set prediction},
  journal      = {Optim. Methods Softw.},
  volume       = {31},
  number       = {3},
  pages        = {605--621},
  year         = {2016},
  doi          = {10.1080/10556788.2016.1138222},
  note         = {doi: 10.1080/10556788.2016.1138222},
}

@article{chen2017reduced,
  author       = {Chen, Tianyi and Curtis, Frank E. and Robinson, Daniel P.},
  title        = {A reduced-space algorithm for minimizing {$\ell_1$}-regularized convex functions},
  journal      = {SIAM J. Optim.},
  volume       = {27},
  number       = {3},
  pages        = {1583--1610},
  year         = {2017},
  doi          = {10.1137/16M1062259},
  note         = {doi: 10.1137/16M1062259},
}
\end{document}